\newcommand{\blind}{1}
\newtheorem{proposition}{Proposition}
\newtheorem{condition}{Conditions List}
\newtheorem{theorem}{Theorem}
\newtheorem{definition}{Definition}
\newcommand{\norm}[1]{\left\lVert#1\right\rVert}
\DeclareMathOperator{\tr}{tr}
\begin{document}

\setlength{\bibsep}{6.0pt}

\def\spacingset#1{\renewcommand{\baselinestretch}%
{#1}\small\normalsize} \spacingset{1}

\if1\blind
{
  \title{\Large\bf Doubly Robust Feature Selection with Mean and Variance Outlier Detection and Oracle Properties}
      \author{Luca Insolia\thanks{
            This work was partially funded by the Huck Institutes of the Life Sciences of Penn State.
            }
        \hspace{.2cm}\\
        \normalsize Faculty of Sciences, Scuola Normale Superiore \\
        \normalsize Institute of Economics \& EMbeDS, Sant'Anna School of Advanced Studies 
        \vspace{.2cm} \\
        Francesca Chiaromonte\\
        \normalsize Department of Statistics, Penn State University \\
        \normalsize Institute of Economics \& EMbeDS, Sant'Anna School of Advanced Studies 
        \vspace{.2cm} \\
        Runze Li\\
        \normalsize Department of Statistics, Penn State University
        \vspace{.2cm} \\
        Marco Riani\\
        \normalsize Department of Economics and Management, University of Parma
        }
  \date{}
} \fi

\if0\blind
{
  \bigskip
  \bigskip
  \bigskip
  \begin{center}
    {\LARGE\bf { 
    Doubly Robust Feature Selection with Mean and     Variance Outlier Detection and Oracle Properties
    } \\ 
    }
\end{center}
} \fi

\newgeometry{left=3cm,right=3cm, bottom=3cm, top=3cm}

\maketitle
\begin{abstract}

\noindent
We propose a general approach to handle data contaminations that might disrupt the performance of feature selection and estimation procedures for high-dimensional linear models.  
Specifically, we consider the co-occurrence of mean-shift and variance-inflation outliers, which can be modeled as additional fixed and random components, respectively, and evaluated independently. 
Our proposal performs feature selection while detecting and down-weighting variance-inflation outliers, detecting and excluding mean-shift outliers, and retaining non-outlying cases with full weights. 
Feature selection and mean-shift outlier detection are performed through a robust class of nonconcave penalization methods. 
Variance-inflation outlier detection is based on the penalization of the restricted posterior mode.
The resulting approach satisfies a robust oracle property for feature selection in the presence of data contamination -- which allows the number of features to exponentially increase with the sample size -- and detects truly outlying cases of each type with asymptotic probability one. 
This provides an optimal trade-off between a high breakdown point and efficiency.
Computationally efficient heuristic procedures are also presented.
We illustrate the finite-sample performance of our proposal through an extensive simulation study and a real-world application.
        
\end{abstract}

\noindent%
{\it Keywords:}
Mean-shift outliers;
Nonconvex penalties;
Robust estimation; 
Variable selection;
Variance-inflation outliers.

	\restoregeometry
	\section{Introduction}
	    
	    Modern regression problems encompass an ever increasing number of predictor variables, or features -- which motivates the use of feature selection techniques.
	    In the real-world, these problems are often also affected by data contamination, e.g.,~due to recording errors or the presence of different sub-populations. 
	    Handling the resulting outliers is critical, as data contamination can hinder classical feature selection and estimation methods. 
	    Moreover, outlier detection itself can be a major goal of the analysis, as it often provides valuable domain-specific insights.

        Two main contamination mechanisms have been investigated in the literature on linear models \citep{beckman1983outlier}, namely: the \textit{mean-shift outlier model} (MSOM) and the \textit{variance-inflation outlier model} (VIOM).
        The MSOM assumes that outlying cases have a shift in mean; \textit{maximum likelihood estimation} (MLE) leads to their removal from the fit -- i.e.,~to the assignment of $0$ weights to the cases identified as outliers. 
        While the MSOM was traditionally studied in low-dimensional scenarios \citep{cook1982residuals}, it has been recently extended to high-dimensional linear models, where the use of regularization techniques is fundamental \citep{she2011outlier,alfons2013sparse,kurnaz2017robust,insolia2020simultaneous}.
        The VIOM, which is historically considered as an alternative to the MSOM, assumes that contaminated errors have an inflated variance; outliers are retained but down-weighted in the fit. 
        The VIOM was initially investigated by \citet{cook1982note} and \citet{thompson1985note} in the presence of a single outlier, using MLE and \textit{restricted MLE} (REMLE), respectively.
        More recently, \citet{gumedze2019use} developed hypothesis testing procedures for linear models, considering also the presence of multiple outliers. 
        However, when multiple outliers are present, this approach requires the evaluation of a combinatorial number of outlying-ness tests to avoid masking (undetected outlying cases) and swamping (non-outlying cases flagged as outliers).
        \citet{insolia2020ViomMsom} proposed the use of robust estimation and REMLE to detect and down-weight multiple VIOM outliers, possibly co-occurring with MSOM outliers, 
        in (low-dimensional) linear models.

        High-dimensional settings with VIOM outliers, to the best of our knowledge, have not been explored yet. 
        Here we aim to fill this gap and, like in \citet{insolia2020ViomMsom}, we further consider the co-occurrence of multiple MSOM and VIOM ouliers. 
        These are modeled as additional fixed and random components, respectively, which can be estimated independently based on REMLE principles.
        Specifically, we propose a doubly robust class of nonconcave penalization methods, in which feature selection and MSOM detection rely on a trimmed penalized loss, whereas VIOM detection is based on the penalization of the restricted posterior mode.
        The resulting procedure: 
        (i) satisfies a robust oracle property for feature selection in the presence of data contamination, which allows the number of features to exponentially increase with the sample size;
        (ii) detects MSOM and VIOM outliers with asymptotic probability one;
        (iii) achieves an optimal trade-off between high breakdown point and efficiency, and thus provides optimal units' weights.
        Effective and computationally efficient heuristic procedures are also presented.

        Importantly, our approach comprises ``hard'' trimming sparse estimators as a special case. However, since we rely on nonconcave penalization methods, our proposal satisfies oracle properties under weaker assumptions compared to existing robust estimators based on convex penalties  \citep{kurnaz2017robust,alfons2013sparse}.
        This provides an important bridge between the latter and $L_0$-constrained formulations with optimality guarantees \citep{insolia2020simultaneous}.
        Moreover, unlike ``soft'' trimming estimators which produce a general down-weighting for all points \citep{,loh2017statistical,smucler2017robust,chang2018robust,freue2019robust,amato2021penalised}, our proposal is effective in estimating full weights for non-outlying observations.
        
        The reminder of the paper is organized as follows. 
        Section~\ref{sec:background} reviews relevant background literature.        
        Section~\ref{sec:proposal} details our proposal, which is a 3-step procedure, as well as its heuristic counterpart.
        Section~\ref{sec:sim} contains numerical studies comparing the empirical properties of different methods both in low- and high-dimensional settings, and Section~\ref{sec:appl} contains a real-world application.
        Final remarks are given in Section~\ref{sec:final}. 
        Further details, extensions and proofs,
        as well as the source code to replicate our simulation and application studies, are provided in the Supplementary Material.

    \section{Background} 
    \label{sec:background}   
    
        In this section we review two streams of literature that are relevant for our developments; namely, methods for outlier detection in low-dimensional linear models, and approaches for feature selection in high-dimensional mixed-effects linear models.

    \subsection{Outlier Detection} 
    \label{secsub:outlier_detection}
    
    	Consider a classical linear regression model of the form
    	$
        	\bm{y} = \bm{X} \bm{\beta} + \bm{\varepsilon} , 
    	$
    	where $\bm{y} = ( y_1, \ldots, y_n )^T \in \mathbb{R}^n $ contains observable responses, 
    	$ \bm{X} = ( \bm{x}_1, \ldots, \bm{x}_n)^T  \in \mathbb{R}^{n \times p} $ is the design matrix, 
    	$ \bm{\beta} \in \mathbb{R}^p $ contains unknown fixed effects (possibly sparse), 
    	and $ \bm{\varepsilon}  = ( \varepsilon_1, \ldots, \varepsilon_n )^T \in \mathbb{R}^n $ contains unobservable random errors.
    	Classical assumptions specify that such errors are uncorrelated, homoscedastic and Gaussian, so that $ \bm{\varepsilon} \sim N(\bm{0}, \sigma^2 \bm{I}_n) $
    	for $0 < \sigma^2 < \infty$.
    
        The MSOM postulates that for outlying cases $i \in \mathcal{S}_\phi$ (the rationale for this symbol will become clear in Equation~\ref{eq:equivalLMMmsomViom}), $\varepsilon_i \sim N(\mu_{\varepsilon_i}, \sigma^2)$ with $\mu_{\varepsilon_i} \neq 0$.
        Under the assumption that $\mathcal{S}_\phi$ is known
        and  $ \text{rank}(\bm{X})=p \leq n - \lvert \mathcal{S}_\phi \rvert  $ (where $ \lvert \cdot \rvert $ denotes the cardinality of a set), the MLE leads to the exclusion of the units in $\mathcal{S}_\phi$ from the fit \citep{cook1982residuals}.
        If there is a single MSOM outlier, this represents the unit with largest absolute Studentized residual, which is a monotone transformation of the deletion residual 
        $
    	    t_i = (y_i- \bm{x}_i^{T} \widehat{\bm{\beta}}_{(i)}) / \{ \widehat{\sigma}_{(i)} 
    	    (1+ \bm{x}_{i}^{T}( \bm{X}_{(i)}^{T} \bm{X}_{(i)})^{-1} \bm{x}_i)^{1 / 2} \},
	    $
        where the parenthetical subscript indicates the exclusion of unit $i$ from the fit. 
        Importantly, $t_i$ can be computed very cheaply and, for a generic $i$, follows a Student's $t$ with $n-p-1$ degrees of freedom under the null -- thus, it can be used to test the outlying-ness of each observation. 
        Although this can be easily generalized to the presence of multiple MSOM outliers, it requires the evaluation of a combinatorial number of fits (i.e.,~excluding all possible subsets of points of a given size from the fit), which results in a computationally intractable
        problem. 
        Relatedly, high-breakdown estimators (see Section~\ref{secsub:proposal_fixed}) aim at limiting the influence of extreme residuals on the fit  \citep{maronna2006robust}.
        Although these are traditionally computed using heuristic approaches, the use of MIP techniques has been recently considered to effectively solve the underlying combinatorial problem with optimality guarantees \citep{zioutas2005deleting,bertsimas2014least}.
        Importantly, high-breakdown point estimators have also been extended to sparse high-dimensional linear models in combination with penalization methods         \citep{alfons2013sparse,smucler2017robust,kurnaz2017robust,freue2019robust}. Here $L_0$-constraints, which can be solved through MIP algorithms, provide optimality guarantees and desirable statistical properties for simultaneous feature selection and MSOM detection, with $p$ allowed to increase exponentially with $n$ \citep{insolia2020simultaneous}.

        The VIOM postulates that for outlying cases $i \in \mathcal{S}_\gamma$ (also this symbol will become clear in Equation~\ref{eq:equivalLMMmsomViom}), $\varepsilon_i \sim N(0, \sigma^2 v_i) $ with $v_i= (1+ \omega_i) \geq 1$.
    	\citet{cook1982note} studied the presence of a single variance-inflated outlier; the MLE estimate of $\bm{\beta}$ depends on its $v_i$ and results in a \textit{weighted least squares} (WLS) fit 
		$
			\widehat{\bm{\beta}}(v_i) =( \bm{X}^T \bm{W} \bm{X})^{-1} \bm{X}^T \bm{W} \bm{y}
			= \widetilde{\bm{\beta}} - (\bm{X}^T \bm{X})^{-1} \bm{X}_{i}^T \widetilde{e}_{i} [ (1-w_i) / \{ 1-(1-w_i)H_{x, ii} \} ] ,
		$
		where $\bm{W}$ is a diagonal matrix containing all ones but $w_i = v_i^{-1}$.
		The tilde indicates quantities computed from the \textit{ordinary least squares} (OLS) fit, and $ H_{x, ii} $ is the $i$-th diagonal element of $\bm{H}_x = \bm{X} (\bm{X}^T \bm{X})^{-1} \bm{X}^T $.
    	This highlights the fact that the VIOM is asymptotically equivalent to the MSOM as $v_i \to \infty$.
        Importantly, in the presence of a single VIOM outlier, the MLE provides a closed-form estimate for $v_i$, which can be used to estimate $\bm{\beta}$ and $\sigma^2$.
    	Similarly, \citet{thompson1985note} used REMLE in place of MLE to estimate the variance components $ v_i $ and $\sigma^2$.
        REMLE relies on $n - p$ linearly independent error contrasts $\bm{A}^T \bm{\varepsilon}$, where $\bm{A} \in \mathbb{R}^{n \times (n-p)}$ is defined such that $ \bm{A}^T \bm{A} = \bm{I}_n$ and $\bm{A} \bm{A}^T = \bm{P}_x$, with $\bm{P}_x = \bm{I}_n - \bm{H}_x $ \citep{patterson1971recovery}.
        Also REMLE provides a closed-form estimate for the single variance-inflation parameter $v_i$.
        Notably, the single VIOM outlier position estimated by MLE and REMLE might differ.
    	A sufficient condition for their agreement is that the unit with maximum absolute OLS residual $ \max( \lvert \widetilde{e}_i \rvert) $ also has the largest absolute Studentized residual $ \max( \lvert t_i \rvert )$ -- the latter estimates the outlier position using REMLE, which is equivalent to the outlier position estimated by MLE under an MSOM \citep{thompson1985note}.
    	However, differently from the case of a single VIOM outlier (and of multiple MSOM outliers), multiple variance-inflation parameters $\bm{v}$ cannot be estimated in closed-form even if the outliers are known -- thus, iterative procedures are required \citep{gumedze2019use}.
        In order to detect multiple VIOM outliers, possibly concurrent with MSOM outliers, \citet{insolia2020ViomMsom} proposed the use of robust estimation for outlier detection and of REMLE to estimate optimal units' weights.
        Nevertheless, to the best of our knowledge, high-dimensional linear models affected by VIOM contamination have not been explored yet.

    \subsection{Feature Selection for Mixed-Effects Linear Models} \label{secsub:mixed_model}
        
        Mixed-effects linear models are often used to model data with a natural group structure, such as repeated measurements, measurements in time, and measurements in space \citep{laird1982random}. 
        They extend the classical linear model through the inclusion of a random design matrix characterizing the experiment; namely,        
        $
        	\bm{y} = \bm{X} \bm{\beta} + \bm{Z} \bm{b} + \bm{\varepsilon} ,
        $
    	where $ \bm{Z} = [\bm{Z}_1, \ldots, \bm{Z}_t] \in \mathbb{R}^{n \times q} $, 
    	and $\bm{Z}_j \in \mathbb{R}^{n \times q_j}$ indicates  the design matrix for the $j$-th random effect
    	$\bm{b}_j \in \mathbb{R}^{q_j}$, such that $ \bm{b} = ( \bm{b}_1^T, \ldots, \bm{b}_t^T )^T \in \mathbb{R}^{q} $,
    	and $ \sum_j q_j = q $. 
    	It is often assumed that $ \bm{b} \sim N ( \bm{0}, \bm{\mathcal{B}})$, where 
    	$\bm{\mathcal{B}} = [ \bm{B}_1, \ldots, \bm{B}_t ] $ is a block-diagonal matrix modeling the covariance of each random effect $\bm{b}_j\sim N ( \bm{0}, \bm{B}_j)$, with $\text{cov}(\bm{b}_k, \bm{b}_l) = 0 $ for any $ k \neq l$.
    	Moreover, $ \bm{b} $ and $\bm{\varepsilon} $ are assumed to follow independent Gaussian distributions.
            
        Several methods have been developed to simultaneously estimate fixed and random effects.
        Henderson's mixed-model equations lead to the \textit{best linear unbiased estimator} (BLUE) for the fixed effects $\bm{\beta}$ and the \textit{best linear unbiased predictor} (BLUP) for the random effects $ \bm{b} $ 
        -- which is also known as the \textit{empirical Bayes estimator} as it maximizes the posterior distribution $f( \bm{b} \lvert \bm{y} ) $. 
        However, this approach is unviable to perform feature selection in high-dimensional scenarios \citep{fan2012variable}. 
        For this purpose, hypothesis testing procedures have been developed to select relevant random effects \citep{lin1997variance}.
        Different sub-models can be compared through extensions of information criteria, such as the \textit{conditional Akaike information criterion} (CAIC) \citep{liang2008note} and its generalizations.
        Leveraging penalization methods, other approaches perform sparse estimation of the fixed effects $\bm{\beta}$. 
        In these, while the dimension $p$ of $\bm{\beta}$ is allowed to increase with the sample size $n$, the random component $\bm{b}$ is often assumed to contain only truly  relevant random effects \citep{schelldorfer2011estimation}.
        Yet other approaches use penalization methods to select a given number of fixed and random effects \citep{bondell2010joint,ibrahim2011fixed,peng2012model}. 
        See \citet{muller2013model} and \citet{buscemi2019model} for a literature review.

        In the following we focus on the class of nonconcave penalization methods introduced by \citet{fan2012variable}.
        Importantly, based on REMLE principles, selection of fixed and random effects can be performed independently.
        Under mild conditions this approach satisfies a weak oracle property for fixed effects estimates and selects truly relevant random effects with asymptotic probability one -- where the dimensions $p$ and $q$ of fixed and random effects are allowed to exponentially increase with the sample size.

    \section{Our Proposal} 
    \label{sec:proposal} 
        
        We investigate linear models affected by systematic (MSOM) and/or stochastic (VIOM)  contaminations. 
        Specifically, we focus on a general \textit{unlabeled} outlier problem \citep{beckman1983outlier}, where the nature (MSOM vs.~VIOM) as well as the identity, number and strength of the outliers is unknown.
        We model the presence of $m_V$ VIOM and $m_M$ MSOM outliers,  indexed through the  (unknown and non-overlapping) sets $\mathcal{S}_\gamma$ and $\mathcal{S}_\phi$: 
    	\begin{equation}
        	\varepsilon_i \sim 
        	\begin{cases}
            	N(0, \sigma^2 v_i) \quad &\text{$\forall ~ i \in \mathcal{S}_\gamma$} \\
            	N(\mu_{\varepsilon_i}, \sigma^2) \quad &\text{$\forall ~ i \in \mathcal{S}_\phi$} \\
            	N(0, \sigma^2) \quad &\text{otherwise} , \label{eq:cont_model}
        	\end{cases}
    	\end{equation}
    	where $v_i >1$ and $\mu_{\varepsilon_i} \neq 0$.
    	We exclude overlaps between the two types of contamination because such over-parametrization is equivalent to a MSOM assumption \citep{cook1982note}.
        Moreover, as customary in the robust statistics literature, we let MSOM outliers also affect the design matrix $\bm{X}$ (with shifts $\mu_{x_i}$) creating leverage points \citep{maronna2006robust}.

    	Notably, the outliers in \eqref{eq:cont_model} can be equivalently represented adding fixed and random effects to the linear model \citep{insolia2020ViomMsom}. In symbols
    	\begin{equation} \label{eq:equivalLMMmsomViom}
        	\bm{y} = \bm{X} \bm{\beta} + \bm{D}_{\mathcal{S}_\gamma} \bm{\gamma} + \bm{D}_{\mathcal{S}_\phi} \bm{\phi} + \bm{\epsilon} ,
    	\end{equation}
    	where $\bm{D}_{\mathcal{S}_\gamma} $ ($n \times m_V$) and $\bm{D}_{\mathcal{S}_\phi} $ ($n \times m_M$)  are matrices composed by dummy column vectors indexing VIOM and MSOM outliers, respectively.
    	The $m_V \times 1$ random vector $\bm{\gamma} \sim N(\bm{0}, \sigma^2 \bm{\Gamma})$ allows one to down-weight VIOM outliers; here $ \bm{\Gamma}= \text{diag}_{m_V}( \bm{\omega} )$ is a diagonal matrix of size $m_V$.
        The non-stochastic vector 
    	$\bm{\phi} \in \mathbb{R}^{m_M} $ contains prediction residuals for MSOM outliers (i.e.,~their residuals based on an estimator which excludes them from the estimation process) and removes their influence from the fit.
    	The associated $t$-statistics are the deletion residuals  $t_{\mathcal{S_\phi}}$.
        The random error vector is assumed to be $\bm{\epsilon} \sim N(\bm{0}, \sigma^2 \bm{I}_n)$ and independent from $\bm{\gamma}$.
        If the sets of outliers $\mathcal{S}_\phi$ and $\mathcal{S}_\gamma$ are known, and $ \text{rank}(\bm{X})=p \leq n - m_M $, the formulation in \eqref{eq:equivalLMMmsomViom} allows one to use standard techniques for mixed-effects linear models to estimate variance-inflation parameters $\bm{v}$ and regression coefficients $\bm{\beta}$. 
        However, this approach is unfeasible if the outlier identities are unknown and/or if $p>n$.
        To tackle this problem, we consider the general formulation
    	\begin{align} \label{eq:equivalLMMmsomViom2}
        	\bm{y} &= 
        	\bm{X} \bm{\beta} + \bm{I}_n \bm{\gamma}  + \bm{I}_n \bm{\phi} + \bm{\epsilon}
    	\end{align}
         and rely on nonconcave penalization methods to select relevant fixed effects $\bm{\beta}$ -- but we also enforce sparsity in $\bm{\gamma} \in \mathbb{R}^{n}$, which detects and down-weights VIOM outliers, and $\bm{\phi}  \in \mathbb{R}^{n}$, which detects and excludes MSOM outliers from the fit.
         Specifically, we propose a 3-step procedure based on REMLE principles, that extends and combines the approaches in \citet{fan2012variable} and  \citet{insolia2020simultaneous,insolia2020ViomMsom}.
         Operationally, the three steps can be solved iteratively (see Section~\ref{sec:sim}), and we first focus on fixed effects estimation, as MSOM outliers can have stronger influence on model estimates.
        
    \subsection{Step 1: Feature Selection and MSOM Detection}
    \label{secsub:proposal_fixed}

        Suppose that $\mathcal{S}_\gamma$ is known. 
        Then, plugging the MLE estimates for $\bm{\gamma} \lvert \bm{\beta}$ in the joint density distribution $f(\bm{y},\bm{\gamma})$ leads to the profile log-likelihood:
        \begin{align} \label{eq:loglik_fix}
            l_n (\bm{\beta},\bm{\widehat{\gamma}}) \propto 
            \frac{1}{2 \sigma^2} 
            (\bm{y} - \bm{X} \bm{\beta} - \bm{\phi} )^T  \bm{P}_R
            (\bm{y} - \bm{X} \bm{\beta} - \bm{\phi} ) ,
        \end{align}
        which produces a WLS estimator as
        \begin{align}
            \bm{P}_R 
             = ( \bm{I}_n & - \bm{B}_\gamma)^T ( \bm{I}_n - \bm{B}_\gamma) + \bm{B}_\gamma^T
             \bm{D}_{\mathcal{S}_\gamma}     
             \bm{\Gamma}^{-1} 
             \bm{D}_{\mathcal{S}_\gamma}^T
             \bm{B}_\gamma \nonumber  \\
                 = (\bm{I}_n & +  \bm{D}_{\mathcal{S}_\gamma} \bm{\Gamma}
                 \bm{D}_{\mathcal{S}_\gamma}^T
                 )^{-1} = \bm{W}, 
                 \label{eq:Pgamma}  
        \end{align}        
        where $\bm{B}_\gamma = (\bm{I}_n + 
                \bm{D}_{\mathcal{S}_\gamma} 
                \bm{\Gamma}^{-1}
                \bm{D}_{\mathcal{S}_\gamma}^T
                )^{-1} $.
        We simultaneously select and estimate fixed effects $\bm{\beta}$, while detecting and discarding MSOM outliers from the fit, using a feasible and robustly penalized version of \eqref{eq:loglik_fix}, where an integer constraint and a nonconcave penalty are used for MSOM outlier detection and feature selection, respectively.
        In symbols
        \begin{subequations}\label{eq:reg_fixVIOM_MSOM}
	       \begin{align} 
                \left[ \widehat{\bm{\beta}}, \widehat{\bm{\phi}} \right] = \operatorname*{\arg\min}_{\bm{\beta}, \bm{\phi}} ~& \frac{1}{2} 
                (\bm{y} - \bm{X} \bm{\beta} - \bm{\phi} )^T  \bm{\mathcal{M}}_R
                (\bm{y} - \bm{X} \bm{\beta} - \bm{\phi}) + 
                (n - k_n) \sum_{j=1}^{p} R_{\lambda}(\lvert \beta_j \lvert) 
                \tag{\ref{eq:reg_fixVIOM_MSOM}}
                \\
                \text{s.t.} ~
                & \norm{\bm{\phi}}_0 = \sum_{i=1}^n I(\phi_i \neq 0) \leq k_n  , \label{eq:contrL0}
            \end{align}
        \end{subequations}
        where $  I( \cdot ) $ is the indicator function, and the matrix $\bm{\mathcal{M}}_R$ is a proxy for the unknown $\bm{P}_R/\sigma^2$ (see the Supplementary Material for details).
        Note that if $\bm{\mathcal{M}}_R$ is a multiple of the identity matrix, then \eqref{eq:reg_fixVIOM_MSOM} neglects VIOM outliers -- i.e.,~all points receive binary weights.
                
        The penalty function $ R_{\lambda}(\cdot) $ enforces sparsity in $\bm{\beta}$ estimates and depends on a tuning parameter $\lambda $ controlling the trade-off between goodness of fit and model complexity.
        For this task, several penalties have been investigated in the literature. \citet{tibshirani1996regression} introduced the \textit{lasso} based on the $L_1$-penalty, which is very efficient but provides biased estimates.
        To overcome this limitation, nonconcave penalties have also been used. These include the \textit{smoothly clipped absolute deviation} (SCAD) \citep{fan2001variable}, the \textit{minimax concave penalty} (MCP) \citep{zhang2010nearly}, and the \textit{adaptive lasso} \citep{zou2006adaptive}.
        Other approaches solve the combinatorial best subset selection problem using an $L_0$-constraint and MIP algorithms. 
        In this work we focus on penalties satisfying the following conditions.
        \begin{condition}[Penalty function] \label{cond1}
            For any $\lambda > 0$, the penalty $R_{\lambda}(t)$, $t \in [0, \infty )$ is: 
            (i)
            non-decreasing and concave with $ R_\lambda(0) = 0 $,
            (ii)
            twice continuously differentiable with first derivative  $R_\lambda'(0^+) > 0 $, and
            (iii)
            such that
            $\sup_{t>0} R_\lambda''(t) \to  0$ for $\lambda \to 0$.
        \end{condition}
        \noindent 
        These conditions are fairly common for concave penalization methods (see for instance \citealt{fan2011nonconcave}), and are used to develop estimators with three desirable properties: unbiasedness, sparsity and continuity \citep{fan2001variable}. 
        We specifically focus on the SCAD penalty $R_\lambda(\cdot)$ in \eqref{eq:reg_fixVIOM_MSOM}, but others might be considered as well.
        The SCAD penalty satisfies $R_\lambda( 0 ) = 0$ and, for $t \in (0, \infty )$, has $R_\lambda'( t ) = \lambda I( t \leq \lambda) + [ (a \lambda - t) / (a -1) ] I (t > \lambda )$, where the constant $a>2$ controls nonconcavity and is often set to $a=3.7$.
        This folded-concave penalty is continuously differentiable on $ (-\infty,0) \cup (0, \infty)$ and singular at 0. Since its derivative is zero outside $ [ - a \lambda , a \lambda ]$, it does not shrink and thus bias large coefficient estimates. 
        Obtaining a global minimum with folded-concave penalties such as SCAD is non-trivial. In the following we focus on the \textit{local linear approximation} (LLA) method \citep{zou2008one} to obtain a local solution which guarantees oracle properties. 
        However, in principle one can achieve the global minumum using MIP techniques \citep{liu2016global}.
        
        The $L_0$-constraint in \eqref{eq:contrL0} is used for MSOM outlier detection.
        It depends on an integer tuning parameter $ k_n \geq 0$ controlling the trimming level -- i.e.,~the number of points which are identified as MSOMs and excluded from the fit.
        This guarantees the achievability of high-breakdown estimates (see below).
        Modern MIP solvers can be used to solve the formulation in  \eqref{eq:reg_fixVIOM_MSOM} with optimality guarantees \citep{bertsimas2016best,insolia2020simultaneous,kenney2021efficient}.
        However, in order to reduce the computational burden, one can also use well-established heuristic algorithms \citep{alfons2013sparse,kurnaz2017robust}.
        
        Intuitively, the \textit{breakdown point} (BdP) measures the largest fraction of contamination that an estimator can tolerate before it becomes arbitrarily biased \citep{donoho1983notion}.
        The finite-sample replacement BdP is defined as
        $
            \varepsilon^*( \widehat{\bm{\beta}}, \bm{Z})  = \min (m / n : \sup_{\widetilde{\bm{Z}}} \lVert  \widehat{\bm{\beta}}(\widetilde{\bm{Z}}) \rVert_2  = \infty ),
        $
        where $ \widetilde{\bm{Z}} $ denotes the original dataset $ \bm{Z} = (\bm{X}, \bm{y}) $ after the replacement of $m$ out of $n$ points with arbitrary values.
        The following result shows that our proposal achieves the highest possible BdP.
        \begin{proposition}[High breakdown-point] \label{thm:bdp}
            For any $\lambda > 0$ and $a>2$ the estimator $\widehat{\bm{\beta}}$ produced by \eqref{eq:reg_fixVIOM_MSOM} achieves a breakdown point of $  \varepsilon^* = (k_n + 1)/n$.
        \end{proposition}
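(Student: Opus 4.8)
\emph{Proof plan.} The plan is to prove the two inequalities $\varepsilon^*\le(k_n+1)/n$ and $\varepsilon^*\ge(k_n+1)/n$ separately; since the finite-sample replacement breakdown point always takes values in $\{0,1/n,2/n,\dots\}$, the two bounds together pin it down. Write $(\bm{X},\bm{y})$ for the original sample and $\widetilde{\bm{Z}}$ for a version in which $m$ cases, with index set $\mathcal{O}$, have been replaced by arbitrary values. I would treat $\bm{\mathcal{M}}_R$ as a fixed symmetric positive definite matrix with eigenvalues in $[\underline{w},\overline{w}]$ and use that, under Condition~\ref{cond1}, the SCAD penalty is bounded, so $(n-k_n)\sum_{j=1}^{p}R_\lambda(\lvert\beta_j\rvert)\le C_1$ uniformly in $\bm{\beta}$; hence the penalty term can neither cause nor prevent breakdown, and everything reduces to the behaviour of the trimmed weighted loss.

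\textbf{Robustness direction ($m\le k_n$).} The key point is that the $k_n$ free coordinates of $\bm{\phi}$ suffice to absorb every contaminated case. First I would bound the optimal objective from above: the pair $(\bm{\beta},\bm{\phi})=(\bm{0},\bm{\phi}^*)$ with $\phi^*_i=\widetilde{y}_i$ on $\mathcal{O}$ and $\phi^*_i=0$ elsewhere is feasible (as $\norm{\bm{\phi}^*}_0=m\le k_n$) and kills the residual on $\mathcal{O}$, so its objective is at most $\frac12\overline{w}\sum_{i\notin\mathcal{O}}y_i^2\le\frac12\overline{w}\norm{\bm{y}}_2^2=:C_0$, a constant depending only on the original data. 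Consequently, at the optimum the weighted loss is $\le 2C_0$ and $\norm{\widehat{\bm{\phi}}}_0\le k_n$. Setting $T=\{i:\widehat\phi_i\ne 0\}$, the set $G=T^{c}\setminus\mathcal{O}$ of clean, untrimmed cases satisfies $\lvert G\rvert\ge n-k_n-m\ge n-2k_n$, and for $i\in G$ we have $\widehat\phi_i=0$ and $(\bm{x}_i,\widetilde{y}_i)=(\bm{x}_i,y_i)$ equal to the original (bounded) values, whence $\underline{w}(y_i-\bm{x}_i^{T}\widehat{\bm{\beta}})^2\le 2C_0$ and $\lvert\bm{x}_i^{T}\widehat{\bm{\beta}}\rvert$ is bounded by a constant. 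Under the standard general-position requirement on the clean design -- that $n-2k_n\ge p$ and that every submatrix of $\bm{X}$ formed from at least $n-2k_n$ rows has smallest singular value bounded below by some $\kappa>0$ -- this forces $\norm{\widehat{\bm{\beta}}}_2\le B$; taking the maximum over the finitely many admissible sets $G$ makes $B$ a constant depending only on the original data, so $\varepsilon^*>k_n/n$.

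\textbf{Breakdown direction ($m=k_n+1$).} Here it suffices to exhibit one contamination of $k_n+1$ cases that forces $\norm{\widehat{\bm{\beta}}}_2$ to diverge. I would fix a unit vector $\bm{u}\in\mathbb{R}^p$ and, for large $M>0$, replace $k_n+1$ cases by $\bm{x}_i=M\bm{u}$, $\widetilde{y}_i=M^2$. Since $\norm{\bm{\phi}}_0\le k_n$, at least one contaminated case $i_0$ has $\widehat\phi_{i_0}=0$, and for any $\bm{\beta}$ with $\norm{\bm{\beta}}_2\le B$ its residual obeys $(M^2-\bm{x}_{i_0}^{T}\bm{\beta})^2=M^2(M-\bm{u}^{T}\bm{\beta})^2\ge\frac14 M^4$ once $M\ge 2B$, so the objective there is $\ge\frac{\underline{w}}{4}M^4$. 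Evaluating the objective instead at $(\bm{\beta},\bm{\phi})=(M\bm{u},\bm{0})$ annihilates all $k_n+1$ contaminated residuals and leaves only $n-k_n-1$ clean residuals, each $O(M)$, plus the bounded penalty $C_1$, for a total of $O(M^2)$. Hence for $M$ large the minimiser escapes every ball $\norm{\bm{\beta}}_2\le B$, so $\sup_{\widetilde{\bm{Z}}}\norm{\widehat{\bm{\beta}}(\widetilde{\bm{Z}})}_2=\infty$ and $\varepsilon^*\le(k_n+1)/n$, which combined with the previous paragraph yields $\varepsilon^*=(k_n+1)/n$.

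The step I expect to be the main obstacle is the robustness direction: one must propagate a bound on the weighted residuals over merely $n-2k_n$ clean cases to a bound on $\norm{\widehat{\bm{\beta}}}_2$ that is uniform over the unknown trimming set $T$ and contamination set $\mathcal{O}$ -- this is exactly where a general-position condition on $\bm{X}$ is needed and where the data-dependent weight matrix $\bm{\mathcal{M}}_R$ has to be kept under control (bounded above and below). The breakdown direction is comparatively routine once one notices that a single un-trimmed high-leverage case already suffices to derail the fit.
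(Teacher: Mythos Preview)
Your proposal is sound under the general-position hypothesis you add, but the robustness direction differs markedly from the paper's argument. The paper first rewrites \eqref{eq:reg_fixVIOM_MSOM} as a trimmed weighted loss and then bounds $\lVert\widehat{\bm{\beta}}\rVert_2$ using the \emph{penalty} alone: it claims $Q(\widehat{\bm{\beta}})\ge (n-k_n)\sum_j R_\lambda(|\widehat\beta_j|)\ge c(n-k_n)\lVert\widehat{\bm{\beta}}\rVert_2$, so any large $\lVert\widehat{\bm{\beta}}\rVert_2$ would contradict $Q(\widehat{\bm{\beta}})\le Q(\bm 0)\le (n-k_n)M_y^2$. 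This is precisely why the paper asserts, right after Proposition~\ref{thm:bdp}, that no general-position assumption is required. You instead observe---correctly---that SCAD has flat tails and is bounded above, so the penalty cannot supply coercivity; you therefore fall back on the classical LTS-type argument, bounding the residuals on at least $n-2k_n$ clean untrimmed rows and invoking a uniform lower singular-value bound over all such submatrices of $\bm{X}$. Your route is rigorous under that hypothesis, but its price is exactly the general-position requirement (in particular $n-2k_n\ge p$), which the paper explicitly wishes to avoid and which excludes the high-dimensional regime the paper targets. Conversely, the paper's route sidesteps general position but rests on the inequality $\sum_j R_\lambda(|\beta_j|)\ge c\lVert\bm{\beta}\rVert_2$ ``for any $\bm{\beta}$,'' which fails for SCAD because $R_\lambda(t)$ is bounded; so your scepticism about the penalty preventing breakdown is well placed. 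In the breakdown direction the two proofs coincide in spirit: both note that with $k_n+1$ replaced cases at least one adversarial point survives trimming and makes the loss unbounded for any bounded $\bm{\beta}$; your explicit leverage construction with the $O(M^4)$ versus $O(M^2)$ comparison is simply a more concrete version of the paper's one-line argument.
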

        \noindent 
        Thus, in the presence of MSOM contamination, our proposal breaks down only if $k_n < m_M $.
        Moreover, this result does not require that the points $(\bm{x}_i^T, y_i ) $ are in general position. This is necessary for low-dimensional estimators to achieve equivariance \citep{maronna2006robust} -- something that cannot be achieved by our proposal \citep{maronna2011robust}.
        
        Note that lasso estimation can be considered as the first iteration in computing the SCAD penalty based on the LLA method \citep{zou2008one}. Thus, while SCAD provides stronger theoretical results for feature selection, one can perform MSOM outlier detection with existing robust algorithms based on lasso, e.g.,~the \textit{sparseLTS} \citep{alfons2013sparse} which solves a trimmed loss problem with an $L_1$-penalty using heuristic algorithms. Then, SCAD can be computed on the set of non-outlying cases detected by a robust lasso on the first iteration of LLA; this is the approach followed in our implementation described below.
        
        We remark that the notion of breakdown can be misleading for non-equivariant estimators, such as those produced through penalties \citep{maronna2011robust,smucler2017robust,insolia2020simultaneous}. Hence, we provide additional guarantees in terms of simultaneous MSOM outlier detection and feature selection.
        Let $ \bm{\theta}_0 = ( \bm{\beta}^T_0 , \bm{\phi}^T_0 )^T \in \mathbb{R}^{p+n} $ be the true parameter vector, and decompose it as 
        $ 
            \bm{\theta}_0 =  
            ( \bm{\theta}_{\mathcal{S}}^T , \bm{\theta}_{\mathcal{S}^c}^T )^T = 
                \{ ( \bm{\beta}^T_{\mathcal{S}_\beta}  , \bm{\phi}^T_{\mathcal{S}_\phi} ) , \\
                ( \bm{\beta}^T_{\mathcal{S}^c_\beta} , \bm{\phi}^T_{\mathcal{S}^c_\phi} ) \}^T 
        $
        where $ \bm{\theta}_{\mathcal{S}}  $ contains the $ p_0 $ non-zero coefficients belonging to $\mathcal{S}_\beta$, and the $ m_M $ outlying cases belonging to $\mathcal{S}_\phi$ ($(\cdot)^c$ indicates the  complement of a set).
        $\widehat{\bm{\theta}}_0$ represents a \textit{fixed-effects robust oracle estimator}, behaving as if the true sets of active features and outliers were both known in advance.  
        Let $\|\cdot\|_{\infty}$ indicate the matrix infinity norm, and $\Lambda_{\text{min}}(\cdot)$ and $\Lambda_{\text{max}}(\cdot)$ the minimum and maximum eigenvalue of a matrix, respectively. 
        We rely on the following conditions to recover $\widehat{\bm{\theta}}_0$. 
        \begin{condition}[Fixed-effects robust oracle reconstruction] \label{cond2}
            \item[A.] \underline{Minimum signal strength}:
                $  s_1 n^{\tau} \{ \log (n - m_M) \}^{-3 / 2} \to \infty $, 
                where
                $s_1=\min _{ j \in \mathcal{S}_\beta}\left|\beta_{0, j}\right| $,
                $\tau \in (0, 1/2 )$ is a given constant, and $\sup _{t \geq s_1 / 2} R_{\lambda}^{\prime \prime}(t)=o\left( (n - m_M)^{-1+2 \tau}\right) .$ 
            \item[B.] \underline{Design and proxy matrices}: 
                for some constants $\eta \in(2 \tau, 1]$ and $c_{0}>0$, the matrices $(n-m_M)^{-1} (\mathbf{X}_{\mathcal{S}_\phi^c , \mathcal{S}_\beta}^{T} \mathbf{X}_{\mathcal{S}_\phi^c , \mathcal{S}_\beta} )$ and $(n-m_M)^{\eta} (\mathbf{X}_{\mathcal{S}_\phi^c , \mathcal{S}_\beta}^{T} \mathbf{P}_R \mathbf{X}_{\mathcal{S}_\phi^c , \mathcal{S}_\beta})^{-1}$ have minimum and maximum eigenvalues bounded from below and above by $c_{0}$ and $c_{0}^{-1} $, respectively.
            Moreover 
            $$
                \left\|\left(\frac{1}{n-m_M} \mathbf{X}_{\mathcal{S}_\phi^c , \mathcal{S}_\beta }^{T} \bm{\mathcal{M}}_{R} \mathbf{X}_{\mathcal{S}_\phi^c , \mathcal{S}_\beta}\right)^{-1}\right\|_{\infty} \leq 
                \frac{\{ \log ( n-m_M ) \}^{3 / 4}}
                {(n - m_M)^{\tau} R_{\lambda}^{\prime}\left(s_1 / 2\right) },
            $$
            $$
                \left\|\mathbf{X}_{\mathcal{S}_\phi^c , \mathcal{S}_\beta^c}^{T} \bm{\mathcal{M}}_{R} \mathbf{X}_{\mathcal{S}_\phi^c , \mathcal{S}_\beta}\left(\mathbf{X}_{\mathcal{S}_\phi^c , \mathcal{S}_\beta}^{T} \bm{\mathcal{M}}_{R} \mathbf{X}_{\mathcal{S}_\phi^c , \mathcal{S}_\beta}\right)^{-1}\right\|_{\infty} < 
                \frac{R_{\lambda}^{\prime}(0+)}
                {R_{\lambda}^{\prime}\left(s_1 / 2\right)}
                .
            $$
            \item[C.] \underline{Proxy matrix}:
                $\Lambda_{\min }\left(c_{1} \bm{\mathcal{M}}_\gamma^{\mathcal{S}_\gamma} -  \bm{\Gamma}  \right) \geq 0$ and $\Lambda_{\min }\left( c_{1}  \log( n - m_M ) \bm{\Gamma} - \bm{\mathcal{M}}_\gamma^{\mathcal{S}_\gamma}    \right) \geq 0$ for some constant $c_{1}>0$, and 
                $ 
                \bm{\mathcal{M}}_\gamma^{\mathcal{S}_\gamma^c} = \bm{I}_{n-m_V}
                $. 
                Here 
                $ \bm{\mathcal{M}}_\gamma^{\mathcal{S}_\gamma^c} $ and $ \bm{\mathcal{M}}_\gamma^{\mathcal{S}_\gamma} $ index rows and columns of the proxy matrix $\bm{\mathcal{M}}_\gamma$ 
                corresponding to non-VIOMs and VIOMs, respectively.
            \item[D.] \underline{MSOM strength}:
                $\Delta_{\phi} \geq  d_\phi \sigma^2 \log (n) / n $, where $d_\phi > 0$ is a constant independent of $n$ and $p$, and 
                $$
                    \Delta_{\phi} = \min_{\widehat{\bm{\phi}}_{\widetilde{\mathcal{S}}_\phi}, \widehat{\bm{\beta}}_{\widetilde{\mathcal{S}}_\beta}} 
                    \frac{\| \bm{X}_{\widetilde{\mathcal{S}}_\beta} \widehat{\bm{\beta}}_{\widetilde{\mathcal{S}}_\beta} + \bm{I}_{n, \widetilde{\mathcal{S}}_\phi}\widehat{\bm{\phi}}_{\widetilde{\mathcal{S}}_\phi} - 
                    \bm{X}_{\mathcal{S}_\beta} \bm{\beta}_{\mathcal{S}_\beta} - 
                    \bm{I}_{n, \mathcal{S}_\phi} 
                    \bm{\phi}_{\mathcal{S}_\phi} \|^2_2 }{ 
                    n \max(\lvert \mathcal{S}_\phi \backslash \widetilde{\mathcal{S}}_\phi \rvert + 
                    \lvert \mathcal{S}_\beta \backslash \widetilde{\mathcal{S}}_\beta \rvert, 1 ) } 
                $$
                where $\widehat{\bm{\phi}}_{\widetilde{\mathcal{S}}_\phi}$ is any estimate such that % s.t. 
                $\widetilde{\mathcal{S}}_\phi \neq \mathcal{S}_\phi  ,  \lvert \widetilde{\mathcal{S}}_{\phi} \rvert \leq m_M $
                and 
                $ \widehat{\bm{\beta}}_{\widetilde{\mathcal{S}}_\beta} $ 
                satisfies
                $\lvert {\widetilde{\mathcal{S}}_\beta} \rvert \leq p_0$.
        \end{condition}
        \noindent 
        Conditions~\ref{cond2}(A)-(C) are quite common for nonconcave penalization methods such as SCAD \citep{fan2012variable}, and they are based only on the set of non-outlying cases indexed by $\mathcal{S}_\phi^c$. 
        Condition~\ref{cond2}(D) is specifically required to detect MSOM outliers based on $L_0$-constraints \citep{insolia2020simultaneous}. It bounds the difficulty of MSOMs detection based on a \textit{minimal degree of separation} between the true and a least favorable model. 
        Intuitively, it requires that MSOM outliers have larger residuals for models of comparable sizes. 
        This relates to the signal-to-noise-ratio and it improves the heuristic argument $n>5p$ which is often advocated for robust estimation methods \citep{rousseeuw1990unmasking}.
        The following result ensures that our proposal provides simultaneous feature selection and MSOM outlier detection consistency.
        \begin{theorem}[Robust weak oracle property] \label{thm:fix}
            Under all conditions in lists~\ref{cond1} and \ref{cond2}, and that $\log p = o( (n - m_M) \lambda^2 )$ and  $\sqrt{n-m_M} \lambda  \to \infty $ as $ (n - m_M) \to \infty$. 
            Then, there exist $k_n$ and a strict local minimizer of  \eqref{eq:reg_fixVIOM_MSOM} such that the resulting robust estimates achieve:
            \begin{enumerate}
                \item \textit{Sparsity}: 
                    $P \left( \widehat{\bm{\beta}}_{\widehat{\mathcal{S}}_\beta^c} = \bm{0} \right) \to 1 $;
                \item \textit{Bounded $L_\infty$-norm}: 
                    $P \left ( \| \widehat{\bm{\beta}}_{\widehat{\mathcal{S}}_\beta} - \bm{\beta}_{\mathcal{S}_\beta} \|_\infty < (n - m_M)^\tau \log (n - m_M) \right) \to 1 $;
                \item \textit{MSOM detection}: 
                    $P \left( \widehat{\mathcal{S}}_\phi = \mathcal{S}_\phi \right) \geq P \left( \widehat{\bm{\phi}} = \bm{\phi}_0 \right) \to 1$.
            \end{enumerate} 
        \end{theorem}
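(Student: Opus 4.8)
The plan is to split \eqref{eq:reg_fixVIOM_MSOM} into its discrete MSOM-detection part and its continuous SCAD-penalized estimation part, handle each with, respectively, the separation argument behind the $L_0$-trimming results of \citet{insolia2020simultaneous} and the nonconcave weak-oracle machinery of \citet{fan2012variable,fan2011nonconcave}, and then reassemble the two on a high-probability event. Throughout I take $k_n=m_M$: this is the value that yields the existential claim, it makes the penalty weight $(n-k_n)=(n-m_M)$ coincide with the normalization used in Conditions~\ref{cond2}, and it is the only trimming level at which the oracle point can be a strict local minimizer of \eqref{eq:reg_fixVIOM_MSOM} --- with $k_n>m_M$ one could adjoin an arbitrarily small extra trimmed coordinate and strictly decrease the objective, while $k_n<m_M$ forces breakdown by Proposition~\ref{thm:bdp}.

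\textbf{Step 1: the restricted oracle estimator.} Freeze the outlier support at the truth and let $(\widehat{\bm\beta}_{\mathcal S_\beta},\widehat{\bm\phi}_{\mathcal S_\phi})$ minimize \eqref{eq:reg_fixVIOM_MSOM} over vectors supported on $\mathcal S_\beta$ and $\mathcal S_\phi$, with all other coordinates set to $\bm{0}$. Profiling out $\bm\phi_{\mathcal S_\phi}$ turns this into a SCAD-penalized weighted least-squares fit in $\bm\beta_{\mathcal S_\beta}\in\mathbb R^{p_0}$ over the $n-m_M$ clean rows $\mathcal S_\phi^c$, with design $\bm X_{\mathcal S_\phi^c,\mathcal S_\beta}$ and weights $\bm{\mathcal M}_R$ (the proxy for $\bm P_R/\sigma^2$). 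I would verify the standard characterization of a strict local minimizer of such a problem: on the active block, $\widehat{\bm\beta}_{\mathcal S_\beta}$ solves the penalized normal equations with a local Hessian that is positive definite because Conditions~\ref{cond2}(A)--(B) make the SCAD curvature negligible against the rescaled Gram eigenvalues; on the inactive block, the strict subgradient inequality $|(\bm X_{\mathcal S_\phi^c,\mathcal S_\beta^c}^{T}\bm{\mathcal M}_R\widehat{\bm\epsilon})_j|<(n-m_M)R_\lambda'(0^+)$ holds. Both requirements reduce to controlling $\bm X_{\mathcal S_\phi^c,\cdot}^{T}\bm{\mathcal M}_R\bm\epsilon$ in sup-norm, for which a Gaussian maximal inequality gives, with probability tending to one, a bound of order $\sqrt{(n-m_M)\log p}=o((n-m_M)\lambda)$ by $\log p=o((n-m_M)\lambda^2)$; this is dominated by the SCAD threshold $(n-m_M)R_\lambda'(0^+)$ on the inactive coordinates (using the irrepresentable-type inequality of Condition~\ref{cond2}(B)), while the $L_\infty$-bounded inverse Gram of Condition~\ref{cond2}(B) turns it on the active coordinates into the claimed bound $\|\widehat{\bm\beta}_{\mathcal S_\beta}-\bm\beta_{\mathcal S_\beta}\|_\infty<(n-m_M)^{\tau}\log(n-m_M)$. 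Condition~\ref{cond2}(A) keeps every true signal above the SCAD kink, so $R_\lambda'(|\widehat{\beta}_j|)=0$ there and the active fit is essentially unbiased, while Condition~\ref{cond2}(C) bounds the error introduced by replacing $\bm P_R/\sigma^2$ with $\bm{\mathcal M}_R$. This yields properties 1 and 2 for the restricted estimator --- a \emph{weak} oracle, in that correct sparsity and the $L_\infty$ rate are obtained but asymptotic normality is not claimed --- and, since Condition~\ref{cond2}(D) forces the true MSOM effects to be large, the profiled $\widehat{\bm\phi}_{\mathcal S_\phi}$ has no vanishing coordinate, hence support exactly $\mathcal S_\phi$.

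\textbf{Steps 2--3: MSOM detection and gluing.} To promote the restricted estimator to a strict local minimizer of the full problem and to obtain property 3, I would show the true outlier cell strictly dominates every competing one. For any $\widetilde{\mathcal S}_\phi$ with $|\widetilde{\mathcal S}_\phi|\le k_n$, $\widetilde{\mathcal S}_\phi\neq\mathcal S_\phi$, and any $\widetilde{\mathcal S}_\beta$ with $|\widetilde{\mathcal S}_\beta|\le p_0$, expand the gap between the minimum of \eqref{eq:reg_fixVIOM_MSOM} restricted to $(\widetilde{\mathcal S}_\beta,\widetilde{\mathcal S}_\phi)$ and its minimum restricted to $(\mathcal S_\beta,\mathcal S_\phi)$, using $\bm y=\bm X_{\mathcal S_\beta}\bm\beta_{\mathcal S_\beta}+\bm I_{n,\mathcal S_\phi}\bm\phi_{\mathcal S_\phi}+\bm\epsilon$. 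Its deterministic part is bounded below, through the minimal-separation Condition~\ref{cond2}(D), by a multiple of $n\Delta_\phi\max(|\mathcal S_\phi\setminus\widetilde{\mathcal S}_\phi|+|\mathcal S_\beta\setminus\widetilde{\mathcal S}_\beta|,1)\ge d_\phi\sigma^2(\log n)\max(\cdot,1)$; its stochastic part consists of quadratic forms $\bm\epsilon^{T}(\cdot)\bm\epsilon$ in differences of projections and of linear forms in $\bm\epsilon$, which $\chi^2$ and Gaussian tail bounds control, uniformly over all $\binom{n}{k_n}\binom{p}{p_0}$ admissible pairs after a union bound, at order $\sigma^2(\log n)\max(\cdot,1)$. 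Taking $d_\phi$ large enough makes the deterministic term dominate, so with probability tending to one the true cell is the unique global minimizer over cells: no true MSOM is masked, no clean point is swamped, and $P(\widehat{\mathcal S}_\phi=\mathcal S_\phi)\to1$; the bound $P(\widehat{\mathcal S}_\phi=\mathcal S_\phi)\ge P(\widehat{\bm\phi}=\bm\phi_0)$ is immediate. On the intersection of the events from Steps 1--3, which has probability tending to one, the zero-padded restricted estimator is a strict local minimizer of \eqref{eq:reg_fixVIOM_MSOM}: because $k_n=m_M=|\mathcal S_\phi|$, a small ball around it meets the feasible set only in $\mathbb R^{p}$ times the $\bm\phi$'s supported on $\mathcal S_\phi$, where \eqref{eq:reg_fixVIOM_MSOM} reduces to the strictly locally convex penalized WLS of Step 1. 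Reading off properties 1--3 there, and invoking Proposition~\ref{thm:bdp} for the breakdown guarantee at the same $k_n$, completes the argument.

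\textbf{Main obstacle.} The delicate step is making all the stochastic bounds \emph{uniform} over the two exponentially large families of subsets --- feature subsets (allowed by $\log p=o((n-m_M)\lambda^2)$) and outlier subsets ($\binom{n}{k_n}$) --- before local optimality of the nonconvex SCAD objective can be invoked, and calibrating the separation constant $d_\phi$ in Condition~\ref{cond2}(D) against this combinatorial entropy. Running in parallel is the need to track how the proxy-matrix discrepancy between $\bm{\mathcal M}_R$ and $\bm P_R/\sigma^2$, controlled by Conditions~\ref{cond2}(B)--(C), propagates through both the KKT verification of Step 1 and the quadratic-form estimates of Steps 2--3; that propagation, rather than any single concentration inequality, is where the real work lies.
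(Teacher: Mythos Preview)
Your proposal is correct and follows essentially the same approach as the paper's proof: take $k_n=m_M$, invoke Condition~\ref{cond2}(D) together with the separation argument of \citet{insolia2020simultaneous} to guarantee the trimmed cases coincide with $\mathcal S_\phi$, and then apply the weak-oracle machinery of \citet{fan2012variable} on the remaining $n-m_M$ uncontaminated rows. The paper's own proof is in fact a brief sketch deferring to these two references, so your write-up is a considerably more detailed version of the same plan.
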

        \noindent
        Here the number of features in $\bm{\beta}$ is allowed to exponentially increase with the (uncontaminated) sample size $n - m_M$. 
        This is a robust version of the weak oracle property in the sense of \citet{lv2009unified} and \citet{fan2012variable}. 
        
        We remark that existing robust model selection procedures, which explicitly consider only MSOM outliers, can be cast into \eqref{eq:equivalLMMmsomViom2}. However, differently from \eqref{eq:reg_fixVIOM_MSOM}, they do not take into account the random structure of the problem, such as VIOM outliers.
        Relatedly, our approach  can be naturally extended to high-dimensional mixed-effects linear models; 
        however, this is left for future work.
        Moreover, regardless the presence of VIOMs, the use of nonconcave penalties in \eqref{eq:reg_fixVIOM_MSOM} provides an important bridge between existing trimming estimators, which promote sparsity in the feature space based on convex penalties \citep{kurnaz2017robust,alfons2013sparse}, and the optimal approach based on $L_0$-constraints \citep{insolia2020simultaneous}. 
        Unlike the former, our proposal achieves oracle properties under weaker assumptions, which can be particularly useful for the latter;
        e.g.,~to provide better warm-starts and big-$\mathcal{M}$ bounds, and accelerate convergence for MIP techniques.

    \subsection{Step 2: VIOM Detection}
    \label{secsub:proposal_random}
    	    
        VIOM outlier detection, based on sparse estimation of $\bm{\gamma}$ in \eqref{eq:equivalLMMmsomViom2}, differs from sparse estimation of fixed effects ($\bm{\beta}$ and $\bm{\phi}$) due to their intrinsic randomness.
        Indeed, while underfitting $\bm{\gamma}$, which results in undetected VIOMs, introduces bias in the estimated variance for the fixed effects in $\bm{\beta}$, the inclusion of irrelevant $\bm{\gamma}$ components, i.e.,~wrongly detected VIOMs, decreases the estimator efficiency. 
        
        In this section, based on the results from Section~\ref{secsub:proposal_fixed}, we consider the augmented design matrix
        $
        \overline{\bm{X}} = [ 
        \bm{X}_{\widehat{\mathcal{S}}_\beta} , 
        \bm{D}_{\widehat{\mathcal{S}}_\phi}
        ]
        $,
        where $ \bm{X}_{\widehat{\mathcal{S}}_\beta} $ and $ \bm{D}_{\widehat{\mathcal{S}}_\phi} $ index the estimated $k_p$ active features and $k_n$ MSOM outliers, respectively.
        We further assume that $ n - k_n \geq k_p $, and that
        $
        \overline{\bm{X}}^T \overline{\bm{X}}
        $
        is an invertible matrix of size $ (k_p +k_n) $. 
        The corresponding matrix of error contrasts is denoted as $\overline{\bm{A}}$, and $ \bm{P}_{\overline{x}} $ is the counterpart of $ \bm{P}_x $ using $ \overline{\bm{X}} $ in place of $ \bm{X} $. 
            
        Based on REMLE theory, the conditional distribution $f(  \overline{\bm{A}}^T \bm{y} \lvert  \bm{\gamma}_{\mathcal{S}_\gamma} )$ does not depend on $\bm{\beta}$, $\bm{\phi}$ and $\overline{\bm{A}}$, which leads to the restricted posterior density
        \begin{align} \label{eq:rpd}
            f \left(  \bm{\gamma}_{\mathcal{S}_\gamma} \lvert  \overline{\bm{A}}^T \bm{y} \right) &= 
            f  \left(  \overline{\bm{A}}^T \bm{y} \lvert \bm{\gamma}_{\mathcal{S}_\gamma} \right) f(\bm{\gamma}_{\mathcal{S}_\gamma} )  \nonumber \\ 
            &= (\bm{y}  - 
            \bm{D}_{\mathcal{S}_\gamma}
            \bm{\gamma}_{\mathcal{S}_\gamma} )^T \bm{P}_{\overline{x}} 
            (\bm{y}-  \bm{D}_{\mathcal{S}_\gamma} \bm{\gamma}_{\mathcal{S}_\gamma}) + \bm{\gamma}^T_{\mathcal{S}_\gamma} 
            \bm{\Gamma}^{-1} \bm{\gamma}_{\mathcal{S}_\gamma} .
        \end{align}
        However, \eqref{eq:rpd} cannot be used to estimate $ \bm{\gamma}$ as it relies on the unknown set of VIOM outliers $ \mathcal{S}_\gamma $, as well as their covariance matrix $\bm{\Gamma}$.
        We replace \eqref{eq:rpd} with the following objective function
        \begin{align} \label{eq:reg_randVIOM_MSOM} 
            \widehat{\bm{\gamma}} = \operatorname*{\arg\min}_{\bm{\gamma}} ~&(\bm{y}- \bm{\gamma})^T \bm{P}_{\overline{x}} (\bm{y}- \bm{\gamma})  + \bm{\gamma}^T \bm{\mathcal{M}}_\gamma^{-1} \bm{\gamma}  
            + 
            (n - k_n) \sum_{i \in \widehat{\mathcal{S}}_\phi^c} R_{\lambda}(\lvert \gamma_i \lvert)  
        \end{align}
        where $\bm{\mathcal{M}}_\gamma $ is a proxy for $\bm{\Gamma}$ (see the Supplementary Material for details).
        In principle the penalty function $R_\lambda(\cdot)$  might differ from the one in \eqref{eq:reg_fixVIOM_MSOM}, but for simplicity we consider nonconcave penalties such as SCAD also here.
            
        In order to control the bias for the oracle-assisted estimator $ \gamma_i^2 / (n-m_M) $ of $ \sigma^2 \omega_i$, we condition on the event $ \{  \min_{i \in \mathcal{S}_\gamma } \lvert \gamma_i \rvert \geq \sqrt{n-m_M} b_0^* \} $, where $b_0^* \in (0, \min_{i \in \mathcal{S}_\gamma } \sigma \sqrt{ \omega_i } ) $ and
        $ \omega_i = \text{var}(\gamma_i) / \sigma^2 $.
        Let $\mathbf{P}_{\overline{x}}^{\mathcal{S}_\gamma}$ comprise the rows and columns of $\mathbf{P}_{\overline{x}}$ belonging to the VIOM outliers in $\mathcal{S}_\gamma$.
        We rely on the following conditions to detect such outliers.
        \begin{condition}[VIOM reconstruction] \label{cond3}
        \item[A.]
        \underline{Design matrix and VIOM outliers}:
        for some constant $c_{3} > 0$, the minimum and maximum eigenvalues of $ (n - m_{M})^{-1}  \mathbf{P}_{\overline{x}}^{\mathcal{S}_\gamma} $ and $\bm{\Gamma}$ are bounded from below and above, respectively, by $c_{3}$ and $c_{3}^{-1}$.
        Moreover, there exists $\delta \in (0, 1/2 )$ such that
        $$
        \left\| (\mathbf{P}_{\overline{x}}^{\mathcal{S}_\gamma} + \bm{\Gamma}^{-1} )^{-1} \right\|_{\infty} \leq \frac{ ( n - m_M)^{-(1+\delta)/2}}{R_{\lambda}^{\prime}\left(\sqrt{n - m_M} b_0^* / 2\right)} ,
        $$
        $$
        \max _{i \in \mathcal{S}_{\gamma}^{c} \cap \mathcal{S}_{\phi}^{c} }\left\|
        \mathbf{P}_{\overline{x}, i}  
        \bm{D}_{\mathcal{S}_\gamma}
        (\mathbf{P}_{\overline{x}}^{\mathcal{S}_\gamma} + \bm{\Gamma}^{-1} )^{-1}
        \right\|_{2}<\frac{R_{\lambda}^{\prime}(0+)}{R_{\lambda}^{\prime}\left(\sqrt{n - m_M} b_0^* / 2\right)} .
        $$
                    
        \item[B.] 
        \underline{VIOM strength}: 
        $\sup _{\left\{t \geq \sqrt{n - m_M} b_0^* / 2\right\}} R_{\lambda}^{\prime \prime}(t)=o\left((n - m_M)^{-1}\right)$.
                    
        \item[C.]
        \underline{Proxy matrix}:
        $\Lambda_{\min }\left(\bm{\mathcal{M}}_\gamma^{\mathcal{S}_\gamma^c} \right) \geq 0$
        and 
        $\Lambda_{\min }\left(\bm{\mathcal{M}}_\gamma^{\mathcal{S}_\gamma} - \bm{\Gamma} \right) \geq 0$.
        
        \end{condition}
        \noindent
        Similar conditions can be found in \citet{fan2012variable} to perform feature selection on random effects using nonconcave penalties.
        The following result shows that our proposal detects VIOM outliers with asymptotic probability one, and effectively down-weights them.
        \begin{theorem}[VIOM treatment] \label{thm:rnd}
            Under all  conditions in lists \ref{cond1}-\ref{cond3}, and  that $ b_0^* (n - m_M)^{\delta - 1/2} \to \infty$ as $(n - m_M) \to \infty$, there exists $\lambda$ such that a strict local minimizer of \eqref{eq:reg_randVIOM_MSOM}  satisfies:
            \begin{enumerate}
            \item 
            \textit{VIOM detection}: 
            $ P \left( \widehat{\mathcal{S}}_\gamma = \mathcal{S}_\gamma \right) \to 1 $;
                        
            \item 
            \textit{VIOM down-weighting}: 
            $ \max_{i \in \mathcal{S}_{\gamma}}
                        \| \widehat{\gamma}_{i} -
                        \gamma_{i}
                        \| 
                        \leq (n - m_M)^{-\delta} 
                        $
                        for $\delta \in (0, \frac{1}{2} )$.
            \end{enumerate}
        \end{theorem}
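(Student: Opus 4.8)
The plan is to follow the oracle-assisted strategy used for nonconcave penalized random-effects selection, adapting the argument of \citet{fan2012variable} and \citet{lv2009unified} to the trimmed, doubly robust objective \eqref{eq:reg_randVIOM_MSOM}. First I would condition on the high-probability event $\mathcal E_1 = \{\widehat{\mathcal S}_\beta = \mathcal S_\beta,\ \widehat{\mathcal S}_\phi = \mathcal S_\phi\}$ supplied by Theorem~\ref{thm:fix}. On $\mathcal E_1$ the augmented design $\overline{\bm X}$ equals the genuine oracle design $[\bm X_{\mathcal S_\beta},\bm D_{\mathcal S_\phi}]$, $\bm P_{\overline x}$ is the associated residual projector, and $n - k_n$ is the effective sample size $n - m_M$ appearing in Conditions~\ref{cond1}--\ref{cond3}. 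Since $P(\mathcal E_1) \to 1$, it suffices to establish both claims on $\mathcal E_1$, with all subsequent probabilities taken under the law of $(\bm\gamma,\bm\epsilon)$ in \eqref{eq:equivalLMMmsomViom2} and on the conditioning event $\{\min_{i\in\mathcal S_\gamma}|\gamma_i| \ge \sqrt{n-m_M}\,b_0^*\}$.

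Next I would construct the restricted oracle estimator $\widehat{\bm\gamma}^{o}$, supported on $\mathcal S_\gamma$, as the unpenalized minimizer of the restricted posterior objective \eqref{eq:rpd} with $\bm\Gamma$ replaced by the proxy block $\bm{\mathcal M}_\gamma^{\mathcal S_\gamma}$; in closed form $\widehat{\bm\gamma}^{o}_{\mathcal S_\gamma} = (\mathbf P_{\overline x}^{\mathcal S_\gamma} + (\bm{\mathcal M}_\gamma^{\mathcal S_\gamma})^{-1})^{-1}\bm D_{\mathcal S_\gamma}^{T}\bm P_{\overline x}\bm y$ and $\widehat{\bm\gamma}^{o}_{\mathcal S_\gamma^c} = \bm 0$. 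The down-weighting bound (claim~2) then follows from a bias--variance split of $\widehat{\bm\gamma}^{o}_{\mathcal S_\gamma} - \bm\gamma_{\mathcal S_\gamma}$: the deterministic shrinkage bias is controlled by the eigenvalue bounds on $(n-m_M)^{-1}\mathbf P_{\overline x}^{\mathcal S_\gamma}$ and $\bm\Gamma$ and by the proxy comparison $\Lambda_{\min}(\bm{\mathcal M}_\gamma^{\mathcal S_\gamma} - \bm\Gamma) \ge 0$ of Conditions~\ref{cond3}(A),(C), while the stochastic part is linear in a Gaussian vector, so coordinatewise sub-Gaussian tails, the first $\|(\mathbf P_{\overline x}^{\mathcal S_\gamma} + \bm\Gamma^{-1})^{-1}\|_\infty$ bound of Condition~\ref{cond3}(A), and a union bound over the $m_V$ active coordinates give $\max_{i\in\mathcal S_\gamma}|\widehat\gamma^{o}_i - \gamma_i| \le (n-m_M)^{-\delta}$ with probability tending to one once $b_0^*(n-m_M)^{\delta - 1/2} \to \infty$.

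The core step is to show that $\widehat{\bm\gamma}^{o}$ is, with probability tending to one, a \emph{strict local minimizer} of the full penalized objective \eqref{eq:reg_randVIOM_MSOM}, which then forces $\widehat{\bm\gamma} = \widehat{\bm\gamma}^{o}$ and hence $\widehat{\mathcal S}_\gamma = \mathcal S_\gamma$ (claim~1). I would verify the nonconcave-penalty local-optimality characterization block by block. On the active block $\mathcal S_\gamma$ the stationarity equation holds by construction up to the penalty gradient $R_\lambda'(|\widehat\gamma^{o}_i|)\,\mathrm{sign}(\widehat\gamma^{o}_i)$, which vanishes for SCAD as soon as $\min_{i\in\mathcal S_\gamma}|\widehat\gamma^{o}_i| > a\lambda$ — guaranteed on the good event by claim~2, $\min_{i\in\mathcal S_\gamma}|\gamma_i| \ge \sqrt{n-m_M}\,b_0^*$, and a choice $\lambda = o(\sqrt{n-m_M}\,b_0^*)$, which also gives $R_\lambda'(\sqrt{n-m_M}\,b_0^*/2) = 0$; and the restricted Hessian $\mathbf P_{\overline x}^{\mathcal S_\gamma} + (\bm{\mathcal M}_\gamma^{\mathcal S_\gamma})^{-1} + (n-m_M)\,\mathrm{diag}(R_\lambda''(|\widehat\gamma^{o}_i|))$ is positive definite because Condition~\ref{cond3}(B), $\sup_{t\ge \sqrt{n-m_M}\,b_0^*/2} R_\lambda''(t) = o((n-m_M)^{-1})$, makes the concave penalty contribution $o(1)$ against curvature of order $n-m_M$ coming from the eigenvalue lower bound of Condition~\ref{cond3}(A). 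On the inactive block — the coordinates in $\mathcal S_\gamma^c\cap\mathcal S_\phi^c$, the only ones penalized and estimated in \eqref{eq:reg_randVIOM_MSOM} — I would bound the $i$-th coordinate of the smooth-part gradient at $\widehat{\bm\gamma}^{o}$ (a linear form in $\bm y$ once $\widehat{\bm\gamma}^{o}_{\mathcal S_\gamma}$ is substituted) using the Gaussian tails together with the design/proxy inequalities of Condition~\ref{cond3}(A), which control $\|\mathbf P_{\overline x, i}\bm D_{\mathcal S_\gamma}(\mathbf P_{\overline x}^{\mathcal S_\gamma} + \bm\Gamma^{-1})^{-1}\|_2$, and a suitable lower bound on $\lambda$; this yields $|\partial_i(\text{smooth part})| < (n-m_M)\,R_\lambda'(0+)$ strictly and uniformly over $i$, exactly the condition making $\widehat\gamma^{o}_i = 0$ the unique coordinatewise minimizer. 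Since the objective is a quadratic plus separable concave penalties, strict coordinatewise optimality on the inactive block together with positive definiteness of the active Hessian yields strict local optimality of $\widehat{\bm\gamma}^{o}$, completing claim~1; claim~2 has already been proved for $\widehat{\bm\gamma}^{o} = \widehat{\bm\gamma}$.

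I expect the main obstacle to be the inactive-block gradient bound. Because $\bm P_{\overline x}$ is singular and, through $\mathcal E_1$, depends on the random Step-1 selection, one must show the cross term $\mathbf P_{\overline x, i}\bm D_{\mathcal S_\gamma}\widehat{\bm\gamma}^{o}_{\mathcal S_\gamma}$ — whose active factor $\widehat{\bm\gamma}^{o}_{\mathcal S_\gamma}$ is large under the conditioning event — stays below the penalty floor $(n-m_M)\,R_\lambda'(0+)$ \emph{uniformly} over the roughly $n - m_M - m_V$ inactive non-MSOM coordinates, while simultaneously keeping $\lambda$ small enough that $R_\lambda'$ at the active scale vanishes. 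Balancing these two requirements pins $\lambda$ into a narrow window, and this is where the rate $b_0^*(n-m_M)^{\delta-1/2}\to\infty$ and the proxy-matrix inequalities of Condition~\ref{cond3} are used most delicately. A secondary technical point is propagating the conditioning on $\mathcal E_1$ (and on the signal event for $\bm\gamma$) without inflating any concentration bound, which I would handle by routine intersection-of-events bookkeeping.
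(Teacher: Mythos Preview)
Your proposal is correct and follows essentially the same route as the paper: the paper's own proof simply invokes Theorem~2 of \citet{fan2012variable} with the random effects $\bm Z\bm b$ specialized to $\bm I_n\bm\gamma$ and the effective sample size replaced by $n-m_M$, and your sketch is precisely an unpacking of that argument (the Lv--Fan/Fan--Li oracle-estimator strategy: construct the restricted oracle, bound its error via bias--variance plus sub-Gaussian tails under Conditions~\ref{cond3}(A),(C), then verify strict local optimality block by block using Conditions~\ref{cond3}(A),(B) and the SCAD flat region). Your additional conditioning on the Step-1 event $\mathcal E_1$ from Theorem~\ref{thm:fix} is the natural way to handle the trimming and is implicit in the paper's reduction.
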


    \subsection{Step 3: Weights Estimation}
    \label{subsec:step3}
            
Steps 1 and 2 described above might induce non-negligible biases, especially in a finite-sample setting. 
To mitigate such biases, we propose an {\it ex-post} update for the VIOM outlier weights and other regression parameters depending on them. This is similar in spirit to post-selection updates implemented with feature selection methods; e.g.,~lasso followed by an OLS fit restricted to the set of active features \citep{liu2013asymptotic}.
            
Specifically, we consider a feasible counterpart of the mixed-effects linear model in \eqref{eq:equivalLMMmsomViom}, which is based on the estimated sets $ \widehat{ \mathcal{ S } }_\phi $ and $\widehat{\mathcal{S}}_\gamma $ (MSOM and VIOM outliers), and $\widehat{\mathcal{S}}_\beta $ (active features). 
We first remove the units belonging to $ \widehat{ \mathcal{ S } }_\phi $ from the fit, and apply REMLE to estimate weights for the units in $\widehat{\mathcal{S}}_\gamma $ conditionally on the features in $\widehat{\mathcal{S}}_\beta $.
Next, we use these weights to update the estimates of $\bm{\beta}_{\widehat{\mathcal{S}}_\beta}$. 
This approach guarantees that, if Steps 1 and 2 identify the true model in terms of features ($\mathcal{S}_\beta$) as well as outliers ($\mathcal{S}_\phi$ and $\mathcal{S}_\gamma$), then our proposal reaches an optimal trade-off between breakdown point and efficiency.
            
The following definition extends the robustly strong oracle property in the sense of \citet{insolia2020simultaneous} to the concurrent presence of MSOM and VIOM outliers.
\begin{definition}[Doubly robust strong oracle property]
Let $\mathcal{S} = \{ \mathcal{S}_\beta , \mathcal{S}_\phi, \mathcal{S}_\gamma  \} $,
and define the doubly robust strong oracle estimator
$         
    \widehat{\bm{\beta}}_{S} = 
    \widehat{\bm{\beta}} \lvert \mathcal{ S }
$
as the solution for $ \bm{\beta} $ in \eqref{eq:equivalLMMmsomViom}.
An estimator $\widehat{\bm{\beta}}_{\widehat{\mathcal{S}}} $ satisfies the doubly robust strong oracle property if there exist tuning parameters which ensure 
$ 
P ( \widehat{\mathcal{S}}= \mathcal{S} )
    \geq 
    P (  \widehat{\bm{\beta}}_{\widehat{\mathcal{S}}} = \widehat{\bm{\beta}}_{S} )
    \to 1  
$
in the presence of MSOM and VIOM outliers.
\end{definition}

\noindent
The following result refines Theorems~\ref{thm:fix} and~\ref{thm:rnd}, and ensures that our proposal achieves the doubly robust strong oracle property -- allowing us to rely on large sample inference.
    \begin{theorem}[Doubly robust strong oracle property] \label{thm:opt_bdp_eff}
    Under all conditions in lists~\ref{cond1}-\ref{cond3}, 
    as $(n - m_M) \to \infty$ there exist tuning parameters $k_n$ and $\lambda$'s in \eqref{eq:reg_fixVIOM_MSOM} and \eqref{eq:reg_randVIOM_MSOM} such that the resulting estimator plugging $\widehat{\mathcal{S}}$ in \eqref{eq:equivalLMMmsomViom} achieves:
        \begin{enumerate}
        \item 
        \textit{Asymptotic unbiasedness}: 
                $$
                    \| E \widehat{\bm{\beta}} - \bm{\beta}_0  \|_2^2 \leq 
                     2 P( \widehat{\mathcal{S}} \neq \mathcal{S} )  
                     \left\{ 
                     \|  \bm{\beta}_0
                    \|_2^2 
                    + 
                    \lambda_M
                    \left( \| \widehat{\bm{W}}^{1/2} \bm{X} \bm{\beta}_0 \|_2^2 + \sigma^2 \tr(\widehat{\bm{W}} )  \right)
                     \right\}
                     \to 0
                $$
        where 
        $ \tr(\cdot)$ is the matrix trace,
        $
        \lambda_M =   \Lambda_{\max} \{ (\bm{X}_{\widetilde{\mathcal{S}}_\beta}^T \widehat{\bm{W}} \bm{X}_{\widetilde{\mathcal{S}}_\beta} )^+ \} > 0
        $
        and 
        $\{ \widetilde{\mathcal{S}}_\beta : \widehat{\mathcal{S}}_\beta \neq \mathcal{S}_\beta \} $.
                    
        \item 
        \textit{Optimal MSE}: 
            \begin{align}
                     E \|  \widehat{\bm{\beta}} - \bm{\beta}_0  \|_2^2 &\leq
                      \sigma^2 \tr(\bm{\Sigma}_{X}^{-1}) / \tr(\widehat{\bm{W}})
                        \nonumber \\
                       &+
                     2 P( \widehat{\mathcal{S}} \neq \mathcal{S} )  
                     \left\{ 
                    (\lambda_M + \lambda_{M_s} )
                    \left( \| \widehat{\bm{W}}^{1/2} \bm{X} \bm{\beta}_0 \|_2^2 + \sigma^2 \tr(\widehat{\bm{W}} )  \right)
                     \right\} \nonumber
            \end{align}
        where
        $ 
        \lambda_{M_s} = \Lambda_{\max}\{ (\bm{X}_{\mathcal{S}_\beta}^T \widehat{\bm{W}} \bm{X}_{\mathcal{S}_\beta} )^{-1} \} 
        $
        and
        $
        \bm{\Sigma}_{X} = 
                (\mathbf{X}_{\mathcal{S}_\beta}^T
                    \widehat{\bm{W}}
                    \mathbf{X}_{\mathcal{S}_\beta} ) .
        $
                    
        \item 
        \textit{Asymptotic normality}: 
                $ 
                       \sqrt{n} ( \widehat{\bm{\beta}} - \bm{\beta}_0 ) \to^d N ( \bm{0},  \sigma^2 ( \bm{\Sigma}_{X} / n)^{-1} ) .  
                $
        \end{enumerate} 

    \end{theorem}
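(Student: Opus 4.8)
\noindent\emph{Proof plan.} The plan is to push everything onto the high-probability event $E_n = \{\widehat{\mathcal{S}} = \mathcal{S}\}$ of exact model recovery, on which the Step~3 estimator is literally the oracle weighted-least-squares fit, and then to pay for the complementary event with its vanishing probability times a crude polynomial bound on the estimator. First I would establish $P(E_n) \to 1$: Theorem~\ref{thm:fix} already gives exact MSOM recovery $\widehat{\mathcal{S}}_\phi = \mathcal{S}_\phi$ with probability tending to one, and, combining its sparsity conclusion with the minimum-signal-strength Condition~\ref{cond2}(A) (via the usual folded-concave strong-oracle argument ruling out false negatives once $\lambda$ is suitably tuned), exact feature recovery $\widehat{\mathcal{S}}_\beta = \mathcal{S}_\beta$; conditionally on these, Theorem~\ref{thm:rnd} gives $\widehat{\mathcal{S}}_\gamma = \mathcal{S}_\gamma$ with probability tending to one. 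Intersecting, $P(\widehat{\mathcal{S}} \neq \mathcal{S}) \to 0$. Since Step~3 refits \eqref{eq:equivalLMMmsomViom} on the \emph{selected} sets, on $E_n$ the output equals the oracle $\widehat{\bm{\beta}}_S$ exactly, so $P(\widehat{\bm{\beta}}_{\widehat{\mathcal{S}}} = \widehat{\bm{\beta}}_S) \geq P(E_n) \to 1$ and the doubly robust strong oracle property of the Definition holds; it then remains to convert this into the three quantitative statements.

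Next I would analyze the oracle estimator itself. On the true model $\widehat{\bm{\beta}}_S = (\bm{X}_{\mathcal{S}_\beta}^T \widehat{\bm{W}} \bm{X}_{\mathcal{S}_\beta})^{-1} \bm{X}_{\mathcal{S}_\beta}^T \widehat{\bm{W}} \bm{y}$ on the $n-m_M$ retained units, with $\widehat{\bm{W}}$ the Step~3 REMLE weights; by Theorem~\ref{thm:rnd}(2) and the conditioning on $\{\min_{i\in\mathcal{S}_\gamma}|\gamma_i|\geq \sqrt{n-m_M}\,b_0^*\}$, $\widehat{\bm{W}}$ is uniformly consistent for the true VIOM weight matrix, so $\widehat{\bm{\beta}}_S$ is asymptotically the GLS (BLUE) fit: conditionally unbiased, $E\widehat{\bm{\beta}}_S = \bm{\beta}_0$, with sandwich covariance $\sigma^2 \bm{\Sigma}_X^{-1}$. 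The eigenvalue bounds in Conditions~\ref{cond2}(B) and~\ref{cond3}(A) verify a Lyapunov/Lindeberg condition for the weighted error sum, giving $\sqrt{n}(\widehat{\bm{\beta}}_S - \bm{\beta}_0) \to^d N(\bm{0}, \sigma^2(\bm{\Sigma}_X/n)^{-1})$; since $\widehat{\bm{\beta}} = \widehat{\bm{\beta}}_S$ on $E_n$ and $P(E_n)\to 1$, claim~3 follows. Minimum-variance optimality of GLS on the retained support yields the leading $\sigma^2\tr(\bm{\Sigma}_X^{-1})/\tr(\widehat{\bm{W}})$ term of claim~2 and pins down the asserted trade-off: the $L_0$ trimming supplies the breakdown point $(k_n+1)/n$ of Proposition~\ref{thm:bdp}, the REMLE weights supply GLS efficiency.

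Finally I would do the truncation bookkeeping. Writing $E(\widehat{\bm{\beta}} - \bm{\beta}_0) = E[(\widehat{\bm{\beta}} - \widehat{\bm{\beta}}_S)\mathbf{1}_{E_n^c}]$ (using $E\widehat{\bm{\beta}}_S = \bm{\beta}_0$ and the agreement on $E_n$), bounding $\|\widehat{\bm{\beta}}-\bm{\beta}_0\|_2$ on $E_n^c$ by $\|\bm{\beta}_0\|_2$ plus the size of a WLS fit on an arbitrary support $\widetilde{\mathcal{S}}_\beta$, and using the identity $\|(\bm{X}_{\widetilde{\mathcal{S}}_\beta}^T\widehat{\bm{W}}\bm{X}_{\widetilde{\mathcal{S}}_\beta})^{+}\bm{X}_{\widetilde{\mathcal{S}}_\beta}^T\widehat{\bm{W}}^{1/2}\|_2^2 = \lambda_M$ together with $E\|\widehat{\bm{W}}^{1/2}\bm{y}\|_2^2 \leq \|\widehat{\bm{W}}^{1/2}\bm{X}\bm{\beta}_0\|_2^2 + \sigma^2\tr(\widehat{\bm{W}})$, yields claim~1, and the right-hand side is $O(P(\widehat{\mathcal{S}}\neq\mathcal{S})) \to 0$ by the first paragraph. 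Claim~2 follows the same split: the $E_n$ part contributes the leading term from the second paragraph, while the $E_n^c$ part (bounded by $2(\|\widehat{\bm{\beta}}\|_2^2 + \|\bm{\beta}_0\|_2^2)$, absorbing the true-support contribution into the $\lambda_{M_s}$ term) contributes the $P(\widehat{\mathcal{S}}\neq\mathcal{S})$ remainder, which is of lower order.

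The hard part will be the interplay between the \emph{random} weights $\widehat{\bm{W}}$ and the conditioning: $\widehat{\bm{\beta}}_S$ is not genuinely linear, so its conditional unbiasedness and sandwich covariance must be certified on the intersection of $E_n$ with the weight-accuracy event of Theorem~\ref{thm:rnd}(2) (controlled via the $\{\min_{i\in\mathcal{S}_\gamma}|\gamma_i|\geq\sqrt{n-m_M}\,b_0^*\}$ conditioning), and the residual contributions of the complementary events shown to be lower order. Checking that a single $k_n$ (forced to satisfy $k_n \geq m_M$ yet still compatible with Theorem~\ref{thm:fix}) and the $\lambda$'s meet all hypotheses of Theorems~\ref{thm:fix}--\ref{thm:rnd} simultaneously is the other genuine nuisance; once these are in place, the CLT and the norm estimates are routine.
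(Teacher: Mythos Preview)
Your plan matches the paper's argument closely: split on $E_n=\{\widehat{\mathcal{S}}=\mathcal{S}\}$, identify the Step~3 estimator with the oracle WLS fit on $E_n$, and pay for $E_n^c$ with $P(\widehat{\mathcal{S}}\neq\mathcal{S})$ times crude operator-norm bounds of the form $\|(\bm{X}_{\widetilde{\mathcal{S}}_\beta}^T\widehat{\bm{W}}\bm{X}_{\widetilde{\mathcal{S}}_\beta})^{+}\bm{X}_{\widetilde{\mathcal{S}}_\beta}^T\widehat{\bm{W}}^{1/2}\|_2^2=\lambda_M$ together with $E\|\widehat{\bm{W}}^{1/2}\bm{y}\|_2^2\leq\|\widehat{\bm{W}}^{1/2}\bm{X}\bm{\beta}_0\|_2^2+\sigma^2\tr(\widehat{\bm{W}})$. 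Two small deviations are worth flagging. First, for claim~2 the paper does \emph{not} split $E\|\widehat{\bm{\beta}}-\bm{\beta}_0\|_2^2$ directly on $E_n$; instead it inserts the oracle via $E\|\widehat{\bm{\beta}}-\bm{\beta}_0\|_2^2=E\|\widehat{\bm{\beta}}-\widehat{\bm{\beta}}_0\|_2^2+E\|\widehat{\bm{\beta}}_0-\bm{\beta}_0\|_2^2$ (the cross term vanishes by unbiasedness of $\widehat{\bm{\beta}}_0$), identifies the second summand as the leading term, and only then splits the first summand on $E_n$, bounding the $E_n^c$ piece by $2(E\|\widehat{\bm{\beta}}_{\widetilde{\mathcal{S}}}\|_2^2+E\|\widehat{\bm{\beta}}_0\|_2^2)$; this is exactly how the $\lambda_{M_s}$ term arises, rather than via the $\|\bm{\beta}_0\|_2^2$ you write. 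Second, the ``hard part'' you anticipate concerning the randomness of $\widehat{\bm{W}}$ is simply assumed away in the paper: it states explicitly that, conditional on any selected model $\widehat{\mathcal{S}}$, the weights $\widehat{\bm{W}}$ are treated as deterministic, so no Lindeberg verification or weight-accuracy event is needed and claim~3 is immediate from Gaussianity. Your extra care there is legitimate but goes beyond what the paper actually proves.
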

            
\noindent
Importantly, this result provides also some intuition on the estimator's behavior when it does not retrieve the doubly robust oracle solution, as well as in finite-sample settings.
Indeed, points 1 and 2 in Theorem~\ref{thm:opt_bdp_eff} depend on the probability of not recovering the true model, in terms of active features and/or outlying cases -- which increases estimation biases and MSE.
Finally, weights estimates obtained in Step~3 can be used to update the proxy matrices used in Sections~\ref{secsub:proposal_fixed} and \ref{secsub:proposal_random}, suggesting an iterative strategy whereby the process in Steps 1-3 is repeated improving model selection and estimation results (see Section~\ref{sec:sim}). 
A similar approach was proposed in \citet{fan2012variable} to select and estimate fixed and random effects; here our iteration includes an additional third step to update the weights.

\subsection{A Heuristic Procedure} 
\label{secsub:proposal_heuristic} 

Here we present a computationally lean heuristic procedure similar to two-stage regression for mixed-models, which is inspired by our main proposal; 
namely:
    \begin{enumerate}
    \item
    Solve \eqref{eq:reg_fixVIOM_MSOM} using the proxy matrix $\bm{\mathcal{M}}_R = \bm{I}_n$.
    Let $\bm{y}^* = \bm{y}_{\widehat{\mathcal{S}}_\phi^c}$ and $\bm{X}^* = \bm{X}_{\widehat{\mathcal{S}}_\phi^c, \widehat{\mathcal{S}}_\beta }$ comprise response and predictor values restricted to the selected relevant features and non-outlying cases.

    \item
    Consider again \eqref{eq:reg_fixVIOM_MSOM} using $\bm{y}^*$, $\bm{X}^*$ and $\bm{\gamma}_{\widehat{\mathcal{S}}_\phi^c}$ in place of $\bm{y}$, $\bm{X}$ and $ \bm{\phi} $, respectively.
    Using $\bm{\mathcal{M}}_R = \bm{I}_{n-k_n}$ and leaving the estimation of $\bm{\beta}$ unpenalized, solve the model relaxing the $L_0$-constraint (e.g.,~using SCAD or lasso). Let $ \widehat{\bm{\gamma}}_{\widehat{\mathcal{S}}_\gamma} $ indicate the resulting sparse estimates.
                    
    \item 
    Consider $\bm{y}^*=\bm{X}^* \bm{\beta} + \bm{\epsilon} $ and, similar to Section~\ref{subsec:step3}, estimate weights for the units $ i \in {\widehat{\mathcal{S}}_\gamma} $ using REMLE and use WLS to update the estimation of $\bm{\beta}$.
    \end{enumerate}
Step 1 can be efficiently tackled using sparse high-breakdown point estimators based on heuristics. It detects MSOMs (i.e.,~it estimates non-zero entries in $\bm{\phi}$) and selects active features in $\bm{\beta}$.
Step 2, which is related to ridge regression (see the Supplementary Material for details), is used to detect VIOMs. This is equivalent to assuming a MSOM if the active $\bm{\gamma}$ coefficients are not shrunk (e.g.,~using $L_0$-constraints these units receive zero weights). Otherwise units are down-weighted or left with their full weights; we follow this approach as MSOMs are detected in Step 1.
Step 3, which might be skipped if one is only interested in $\bm{\beta}$, is useful to reduce possible biases introduced in Steps 1-2, and in principle might be combined with Step 2 (see again the Supplementary Material for details).
                
We remark that Steps~1 and~2 of our heuristic procedure require a careful tuning process, which is critical to estimate the weights in a data-driven fashion and guarantee their ``adaptiveness'' (i.e.,~the breakdown point and the efficiency of the corresponding $\bm{\beta}$ estimates). 
In the Supplementary Material we describe the robust BIC proposed for this tuning, and discuss connections between our heuristic procedure, ridge and $M$-estimation.

\section{Simulation Study} 
\label{sec:sim}
        
In this section we compare our proposal with state-of-the-art methods through numerical simulations. 
The data is generated as follows. Each row of the $n \times p$ design matrix $\bm{X}$ contains a $1$ (for the intercept), and then entries drawn independently from a $N(\bm{0}, \bm{I}_{p-1})$.
The $p$-dimensional coefficient vector $\bm{\beta}$ contains $p_0$ non-zero entries (including the intercept), and the errors $\varepsilon_i$ are drawn independently from a $N (0, \sigma^2_{\text{SNR}})$. 
$\sigma^2_{\text{SNR}}$ depends on the signal-to-noise-ratio $  \text{SNR}  = \text{var}(\bm{X}\bm{\beta}) / \sigma^2_{\text{SNR}}$ and controls the difficulty of the problem.
Then, $m_V$ and $m_M$ points out of $n$ are contaminated as in \eqref{eq:cont_model}. 
Mean shifts affect error and active predictors in the design matrix, with strengths $\mu_\varepsilon $ and  $\mu_X $, respectively.
Variance inflation affects only the error, with a common parameter $v$.
Each simulation scenario is replicated $t$ times and results are averaged. 
    	
We consider the following performance metrics:
{\bf (i)} MSE of $ \widehat{\bm{\beta}} $ partitioned into variance and squared bias. 
For each estimated coefficient 
        \begin{equation} \label{eq:metricMSE}
        \text{MSE}(\widehat{\beta}_j) = \frac{1}{t} \sum_{i=1}^{t} (\widehat{\beta}_{ij} - \beta_j)^2 =
        		\frac{1}{t} \sum_{i=1}^{t} (\widehat{\beta}_{ij} - \overline{\beta}_j)^2 +
        		(\overline{\beta}_j - \beta_j)^2 , 
        \end{equation}
where $ \overline{\beta}_j = \frac{1}{t} \sum_{i=1}^{t} \widehat{\beta}_{ij}$, and we average the MSE across coefficients to produce $ \text{MSE}(\widehat{\bm{\beta}}) = \frac{1}{p} \sum_{j=1}^{p} \text{MSE}(\widehat{\beta}_j)$.
{\bf (ii)} For low-dimensional settings without MSOMs, we also consider the MSE of a weighted estimate of the error variance 
        \begin{equation} \label{eq:metrics2}
        \widehat{s}^2 = \frac{1}{(n-p)} \frac{\sum_{i=1}^{n} \widehat{w}_i 
        e_i^2}{ \sum_{i=1}^{n} \widehat{w}_i / n  } , 
        \nonumber
        \end{equation}
where the $ e_i$'s are the raw residuals and the $ \widehat{w}_i $'s the estimated weights. This takes into account weight estimates regardless of whether some units are in fact contaminated. 
The MSE decomposition for $ \widehat{s}^2 $ is computed as in \eqref{eq:metricMSE}, with $ \sigma^2_{\text{SNR}} $ and $ \widehat{s}^2 $ replacing $ \bm{\beta} $ and $ \widehat{\bm{\beta}} $, respectively.
{\bf (iii)} Let the non-zero entries of $\bm{\tau} = \bm{\phi} + \bm{\gamma}$ indicate MSOMs and/or VIOMs. Outlier detection accuracy is measured in terms of false positive and false negative rates
        \begin{align}
        &\text{FPR}(\widehat{\bm{\tau}})=\frac{\left|\left\{i \in\{1, \ldots, n\}: \widehat{\tau}_{i} \neq 0 \wedge \tau_{i}=0\right\}\right|}{\left|\left\{i \in\{1, \ldots, n\}: \tau_{i}=0\right\}\right|}  ,  \label{eq:metricFPR} \\
        &\text{FNR}(\widehat{\bm{\tau}})=\frac{\left|\left\{i \in\{1, \ldots, n\}: \widehat{\tau}_{i}=0 \wedge \tau_{i} \neq 0\right\}\right|}{\left|\left\{i \in\{1, \ldots, n\}: \tau_{i} \neq 0\right\}\right|} .   \label{eq:metricFNR} 
        \end{align}
These indicate the proportion of uncontaminated units wrongly detected as outliers, and of undetected contaminated units, respectively.
{\bf (iv)} For sparse settings, we also consider feature selection accuracy -- which is measured in terms of FPR and FNR as in \eqref{eq:metricFPR} and \eqref{eq:metricFNR}, using $\beta_j$ and $\widehat{\beta}_j$ (for $j=1,\ldots,p$) in place of $\tau_i$ and $\widehat{\tau}_i$, respectively.

\subsection{Scenario 1: Low-Dimensional VIOMs}
    
Here we set $p=p_0=2 $, with $\bm{\beta} = (2, 2)^T$ and $\text{SNR} = 3$. 
The proportion of VIOM outliers is $m_V/n=0.25$ and $ v=10$. The sample size $n$ increases from 50 to 500 with 10 equispaced values. Data for each setting are replicated $ t = 100 $ times.
    	
We consider the \textit{oracle benchmark} (Opt), i.e.,~a WLS fit based on the true population weights $\bm{w}$, along with: 
{\bf (a)} OLS, the ordinary least squares estimator
{\bf (b)} LTS, the least trimmed sum of squares estimator with trimming set to the true $ m_V/n$ \citep{maronna2006robust};
{\bf (c)} MM85, an MM-estimator using a preliminary LTS and Tukey's bisquare loss function, with tuning constant set to achieve 85\% nominal efficiency \citep{maronna2006robust};
{\bf (d)} MM95, as in (c), with 95\% nominal efficiency;
{\bf (e)} FSRws, which utilizes a variant of forward search and single REMLE weights as described in \citet{insolia2020ViomMsom};
{\bf (f)} Heur, our heuristic procedure (Section~\ref{secsub:proposal_heuristic}), where in Step~2 $\bm{\gamma}$ is estimated by adaptive lasso initialized with OLS residuals, and in Step~3 each weight is estimated independently using REMLE as in FSRws; 
{\bf (g)} SCADws, our main proposal (Section~\ref{sec:proposal}), where in Step~3 weights are estimated by a REMLE fit on the active random components of $ \bm{\gamma}$ detected by SCAD -- as in FSRws and Heur, these weights are estimated independently. 

Figure~\ref{fig:simRes1bet} shows the MSE for $\widehat{\bm{\beta}}$; SCADws and MM85 generally outperform other methods, Heur and MM95 perform comparably,  FSRws improves on LTS and OLS (which perform poorly across sample sizes).
Figure~\ref{fig:simRes1sig} shows the MSE for $\widehat{s}^2$.
Notably, SCADws generally outperforms other methods, including the oracle estimator -- likely because some VIOM outliers which are down-weighted by the latter do not carry sizeable residuals. Nevertheless, SCADws is capable of estimating full wights for these points. Relatedly, non-outlying cases with large residuals by chance are given full weight by the oracle estimator, but not necessarily by SCADws (see circled dots on the right panel of Figure~\ref{fig:simtimeFit}).
MM85 outperforms MM95, highlighting the drawbacks of $M$-estimators with pre-specified efficiency values.
Heur performs comparably, although its estimates have larger biases, and it outperforms LTS and OLS, which provide strongly biased estimates because each point receives a binary or full weight.
The performance of FSRws decreases for smaller sample sizes, where outliers are more often undetected.
\begin{figure}[ht!]
    		\centering
			\includegraphics[width=0.85\linewidth]{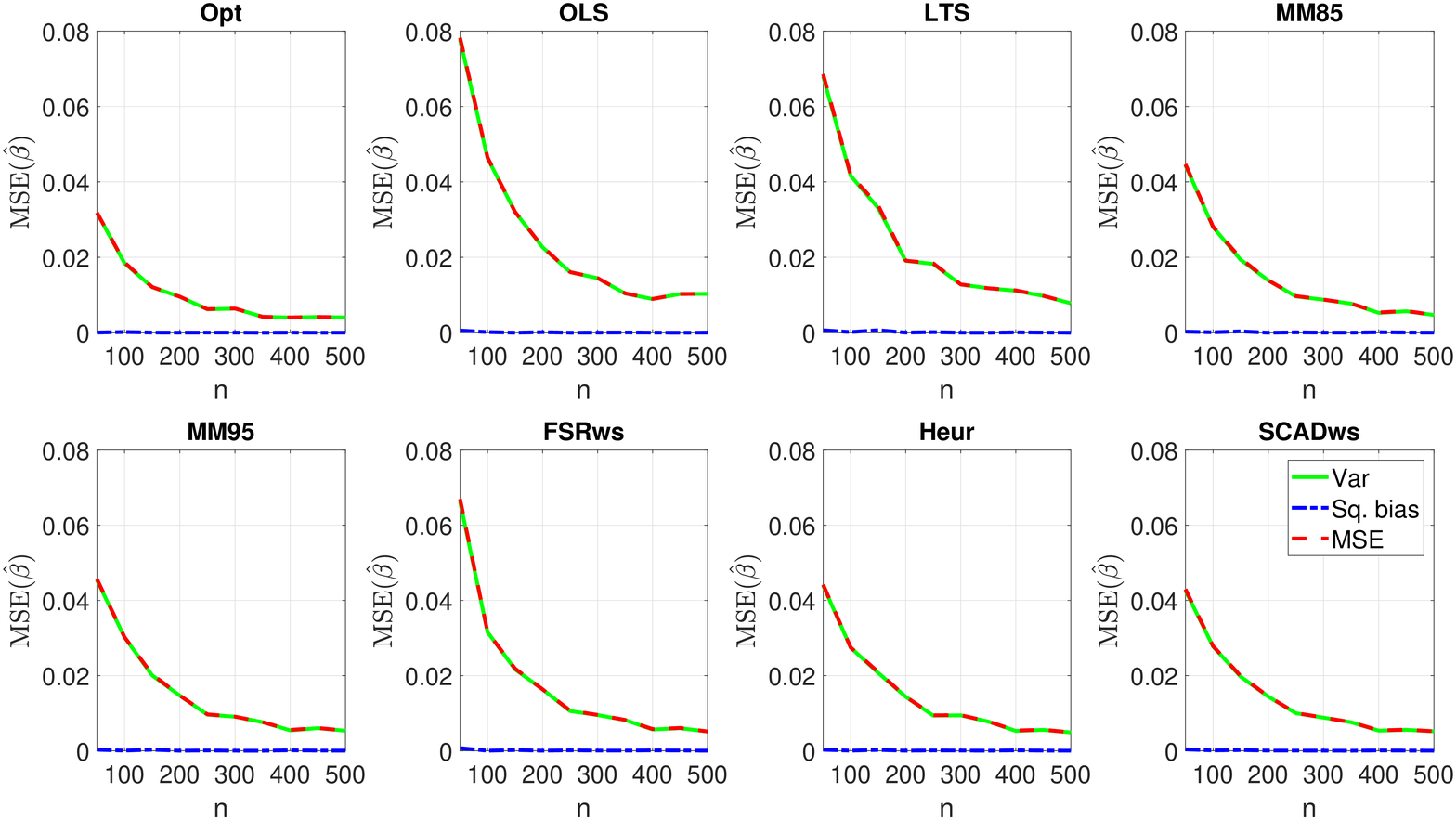}
    		\caption{Scenario 1. $\text{MSE}(\widehat{\bm{\beta}})$ comparisons across procedures and sample sizes.}
    		\label{fig:simRes1bet}
    	\end{figure}
	 	\begin{figure}[ht!]
    		\centering
			\includegraphics[width=0.85\linewidth]{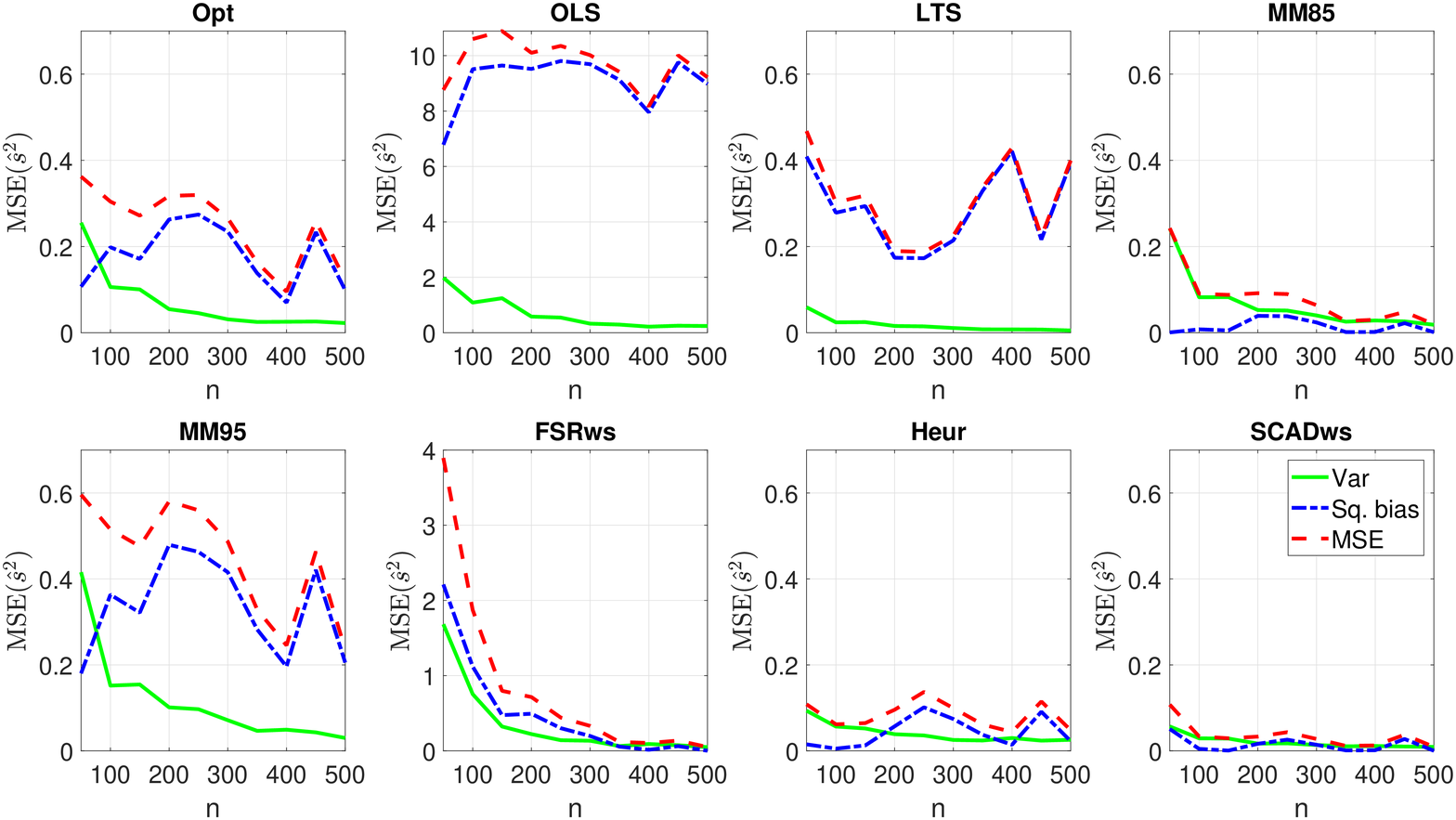}
    		\caption{Scenario 1. $\text{MSE}(\widehat{s}^2)$ comparisons across procedures and sample sizes.}
    		\label{fig:simRes1sig}
    	\end{figure}
        
The two left panels of Figure~\ref{fig:simtimeFit} show FPR and FNR for VIOM detection across methods, respectively.
Overall, SCADws outperforms other methods; its decrease in terms of FPR along sample sizes is partially compensated by an increase in FNR. FSRws is close to SCADws for larger sample sizes, but for smaller ones it fails to detect some outliers (low FPR and high FNR).
Heur performs similarly to SCADws, and MM-estimators perform poorly in these metrics due to a general down-weighting of all units.
These trends demonstrate the ability of SCADws to detect truly outlying cases as the sample size increases. On the other hand, while FSRws tends to be more conservative across sample sizes, LTS has a more aggressive behavior resulting in larger FPR and lower FNR.
The right panel of Figure~\ref{fig:simtimeFit} shows a scatterplot summarizing results for a typical simulation ($n=500$). 
True VIOM outliers, as well as the ones detected by SCADws, are highlighted. 
         \begin{figure}[ht!]
    		\centering
    		\begin{subfigure}{.55\textwidth}
    			\centering
    			\includegraphics[width=1\linewidth]{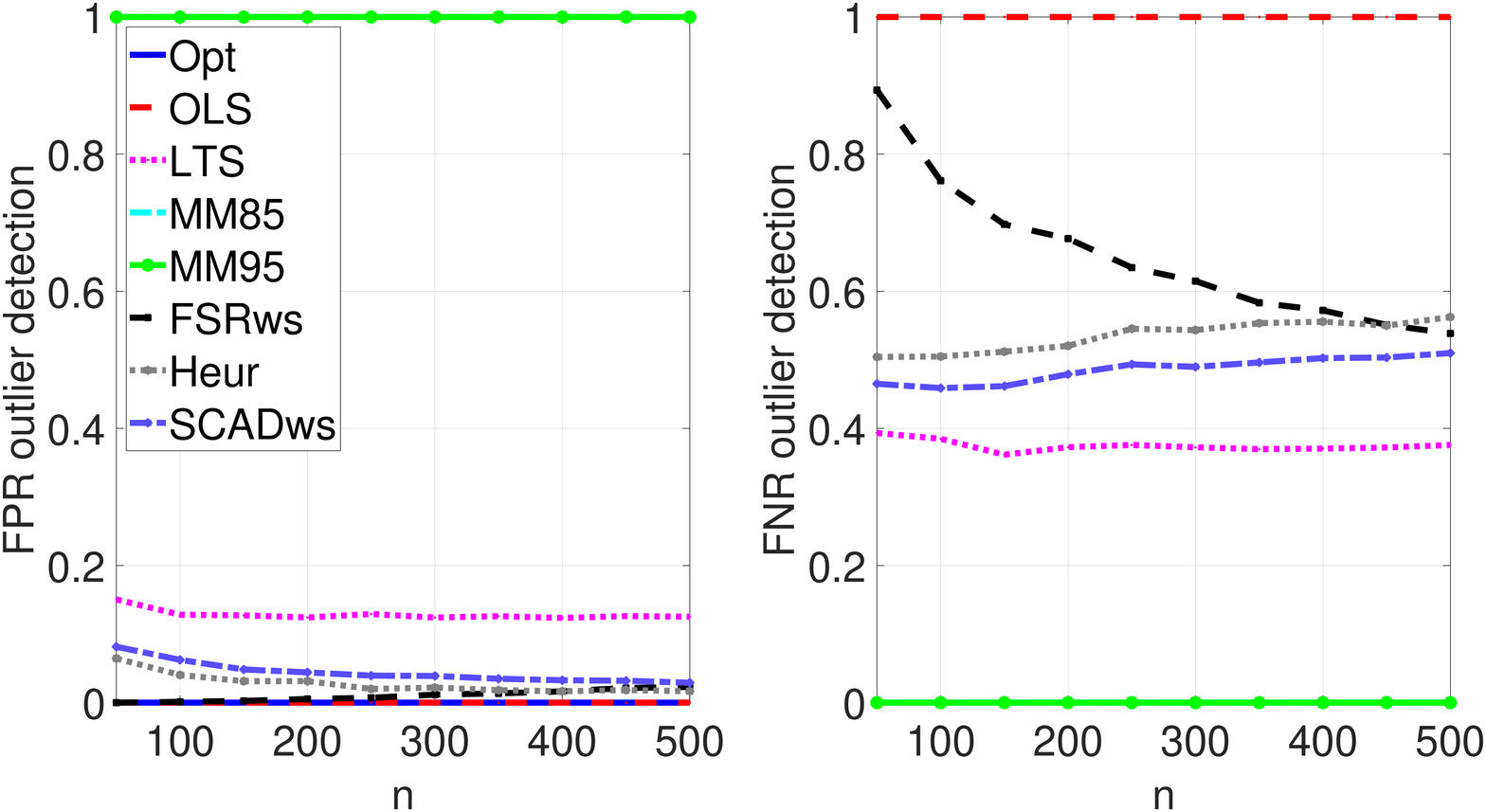}
    		\end{subfigure}%
    		\begin{subfigure}{.45\textwidth}
    			\centering
			\includegraphics[width=0.8\linewidth]{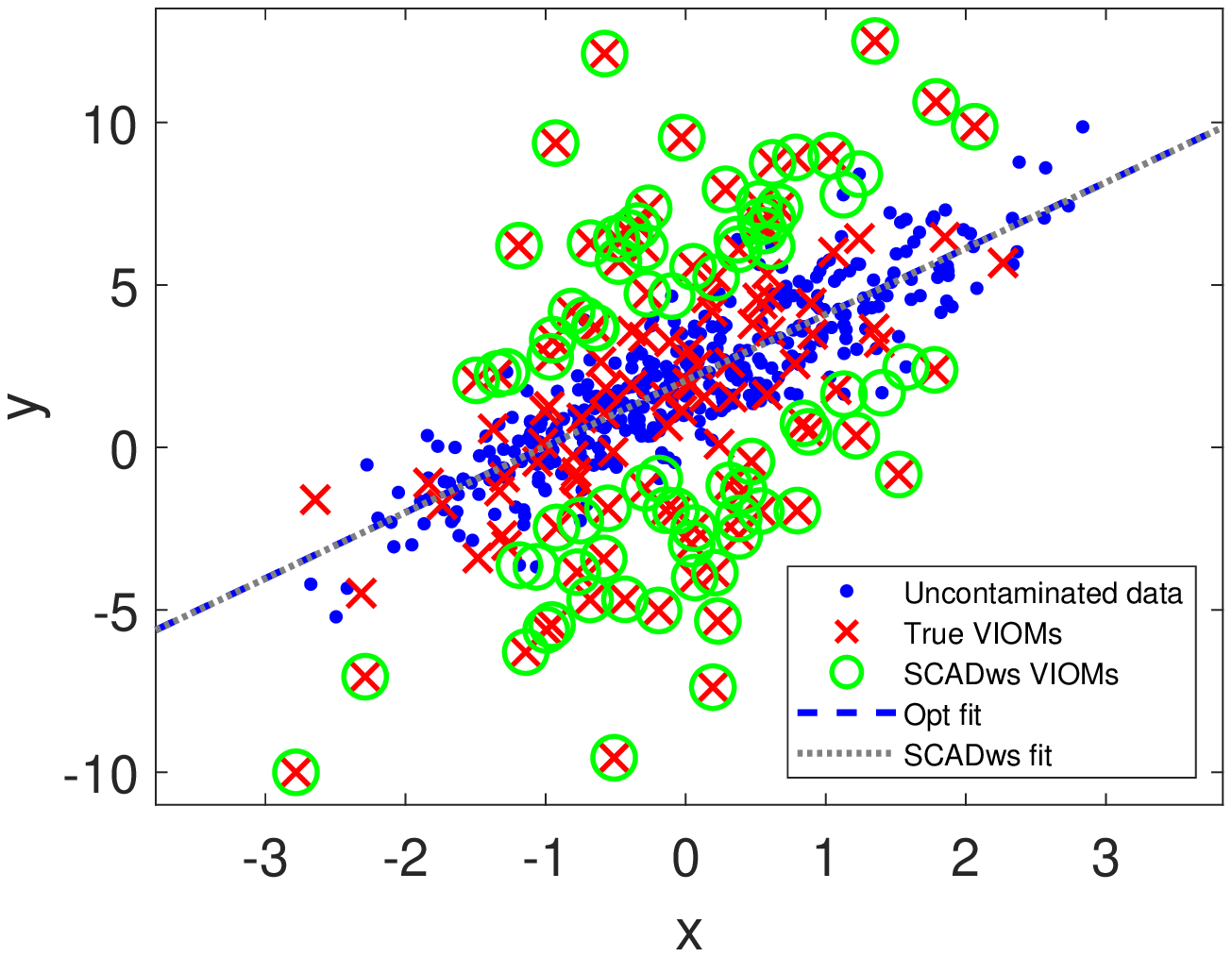} 
    		\end{subfigure}
    		\caption{Scenario 1. Left: comparisons of FPR and FNR for outlier detection across procedures and sample sizes.
    		Right: scatterplot summarizing results for a typical simulation with $n=500$ -- true VIOMs and VIOMs detected by SCADws are highlighted. 
    		}
    		\label{fig:simtimeFit}
    	\end{figure}

\subsection{Scenario 2: High-Dimensional VIOMs and MSOMs}
        
Here we mimic Scenario~1, but we use sparse fixed effects in $\bm{\beta}$ and introduce MSOM outliers. 
Specifically, we set $p = 30$ with $p_0 = 3$ active features. 
The proportions of VIOM and MSOM outliers are set to $m_V/n=0.15$ and $m_M/n=0.05$. Mean shifts are set to $\mu_\varepsilon = -10$ and $\mu_X = 10$ in order to create bad leverage points. The sample size $n$ ranges from 60 to 150 (with 10 equispaced values). 
Data for each setting are again replicated $ t = 100 $ times.

The oracle benchmark (Opt) is computed using population weights and the active feature set.
In addition to it, we consider:
{\bf (a)} lasso; 
{\bf (b)} sparseLTS \citep{alfons2013sparse}; 
{\bf (c)} TaL, adaptive lasso with Tukey's bisquare loss, a preliminary sparseLTS fit, and tuning constant fixed to achieve 85\% nominal efficiency \citep{chang2018robust};
{\bf (d)} Heur, as in Scenario~1, but with a preliminary fixed-effects selection and MSOM detection using robust SCAD. 
{\bf (f)} SCADws, as in Scenario~1, but with a preliminary fixed-effects selection and MSOM detection based on \eqref{eq:reg_fixVIOM_MSOM};
{\bf (g)} SCAD2s, two iterations of SCADws where weights estimated in the first iteration are used to update the proxy matrices and re-run our 3-step procedure;
{\bf (h)} SCADopt, similar to SCADws, but with proxy matrices built with VIOM population weights;
For simplicity, robust methods all use the true trimming level $m_M/n$.

Figure~\ref{fig:simRes1betHD} shows the MSE for $\bm{\widehat{\beta}}$. 
As expected, SCADopt resembles very closely the oracle estimator. SCAD2s, which improves upon SCADws, outperforms other feasible estimation methods. TaL performs comparably but has higher biases, and Heur improves upon sparseLTS. 
Lasso breaks down due to the presence of MSOM outliers.
	 	\begin{figure}[ht!]
    		\centering
			\includegraphics[width=0.85\linewidth]{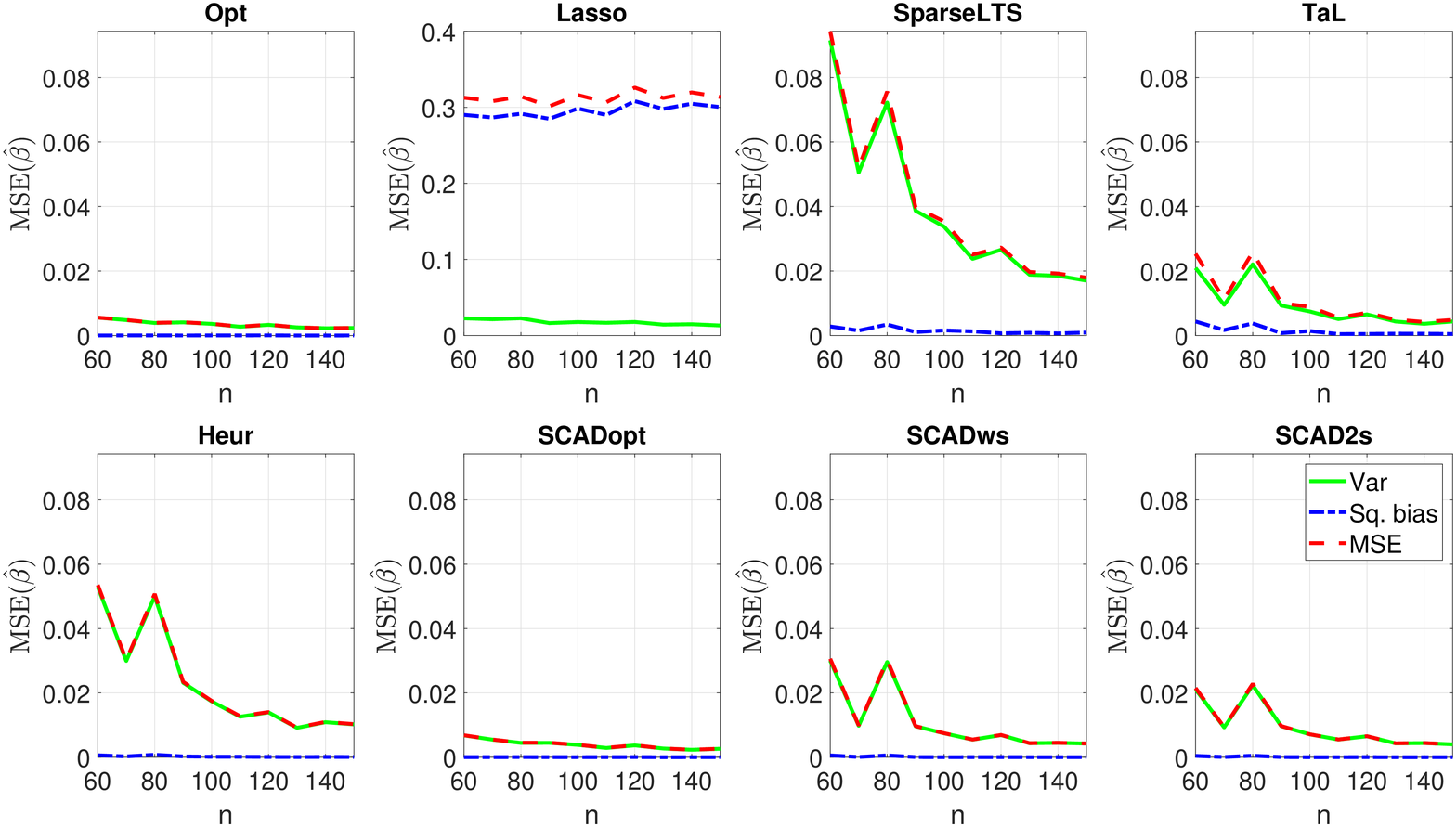}
    		\caption{Scenario 2. MSE$(\widehat{\bm{\beta}})$ comparisons across procedures and sample sizes.}
    		\label{fig:simRes1betHD}
    	\end{figure}
        
The left panels of Figure~\ref{fig:FDpHD} show FPR and FNR for outlier detection.
Unlike the oracle estimator, SCADopt is capable of estimating full weights for VIOMs with negligible residuals (higher FNR), and it is not prone to detecting non-outliers with large residuals by chance (very low FPR).
Notably though, although weights need to be estimated, also SCADws and SCAD2s perform well in both these metrics.
SCAD2s reduces FPR and slightly increases FNR compared to SCADws, which results is an overall performance increase for the iterative approach.
Heur provides larger FPR and smaller FNR. SparseLTS has FPR equal to 0 and large FNR, as it detects only extreme MSOM outliers. TaL performs poorly due to a general down-weighting of all points.

The right panels of Figure~\ref{fig:FDpHD} show FPR and FNR for feature selection. 
SCADopt performs comparably to the oracle estimator.
SCAD2s, which improves upon SCADws, generally outperforms other methods.
TaL produces higher FPR across sample sizes, and Heur provides denser solutions -- but still sparser than sparseLTS.
Lasso performs poorly also here, since it breaks down.
We note that most robust methods are at times affected by MSOMs for smaller sample size (larger FNR and MSE) where their detection is harder.
         \begin{figure}[ht!]
        		\centering
        		\begin{subfigure}{.5\textwidth}
        			\centering
        			\includegraphics[width=1\linewidth]{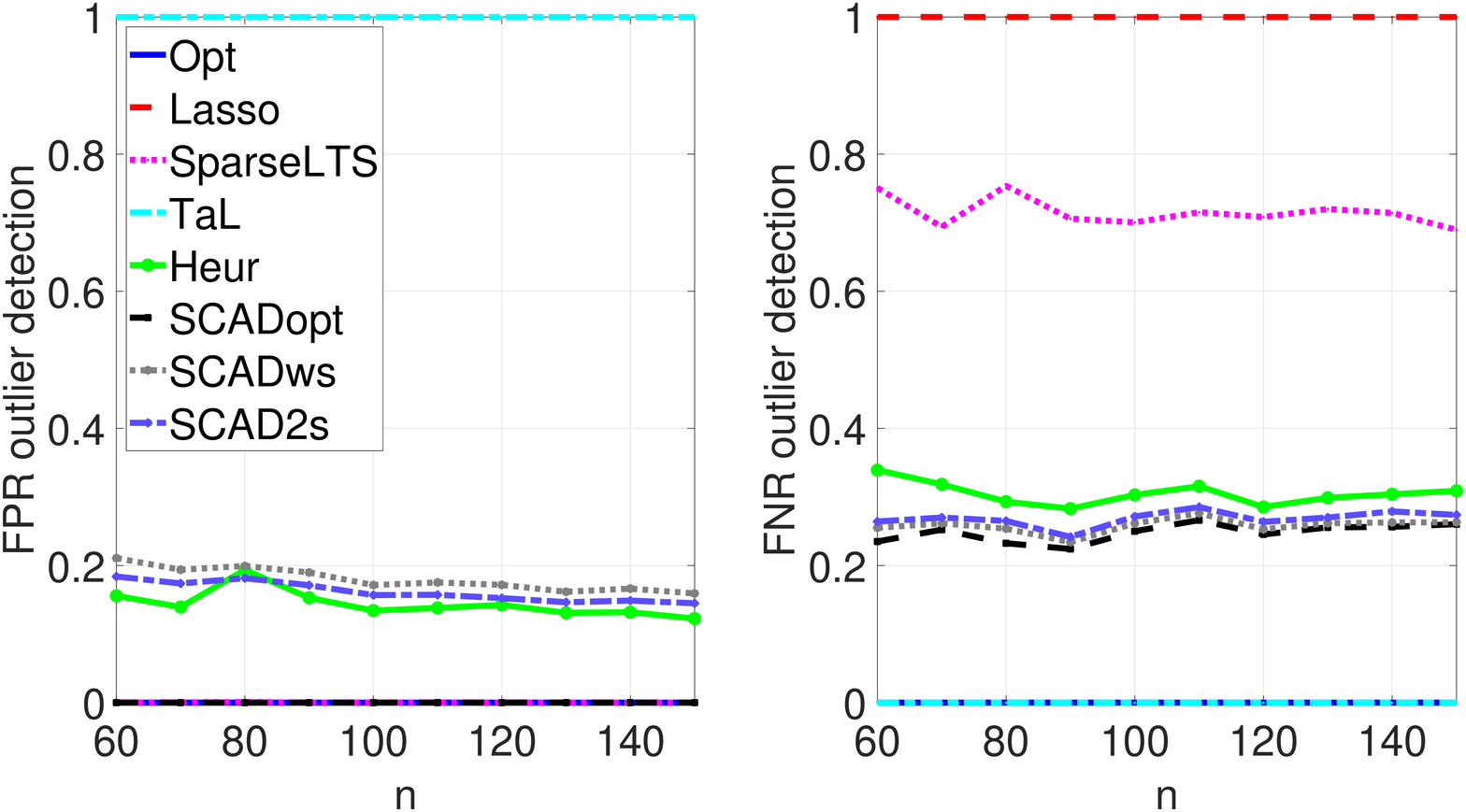}
        		\end{subfigure}%
        		\begin{subfigure}{.5\textwidth}
        			\centering
    			\includegraphics[width=1\linewidth]{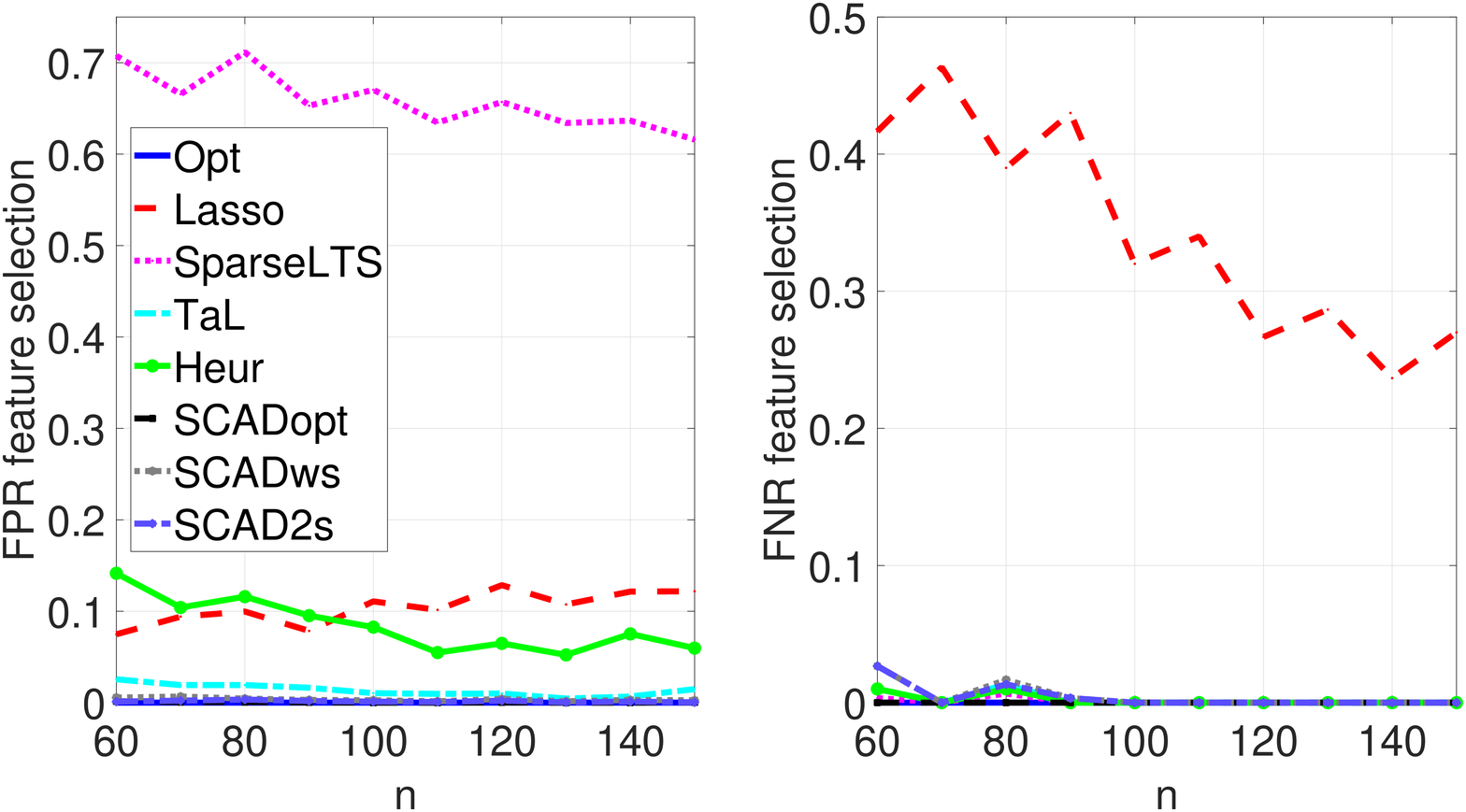}
        		\end{subfigure}
        		\caption{Scenario 2.Comparisons of FPR and FNR for outlier detection (left) and feature selection (right) across procedures and sample sizes.}
        		\label{fig:FDpHD}
        	\end{figure}

\section{An application to the Boston Housing Data} 
\label{sec:appl}

The Boston Housing dataset (\url{http://lib.stat.cmu.edu/datasets/boston}) 
contains $n=506$ housing location and $13$ predictors; 
namely: 
1. {\it crim} (the per capita crime rate), 
2. {\it zn} (the proportion of residential land zoned for lots over 25,000 sq.ft), 
3. {\it indus} (the proportion of non-retail business acres),
4. {\it chas} (a ``Charles River'' dummy), 
5. {\it nox} (the nitrogen oxides concentration in parts per 10 million), 
6. {\it rm} (the average number of rooms per dwelling), 
7. {\it age} (the proportion of owner-occupied units built prior to 1940), 
8. {\it dis} (a weighted mean distance to five Boston employment centers), 
9. {\it rad} (an index of accessibility to radial highways),
10. {\it tax} (the full-value property-tax rate per \$10,000),
11. {\it ptratio} (the pupil-teacher ratio),
12. {\it black} (1000($B_k$ - 0.63)2, where $B_k$ is the proportion of African-American residents), and
13. {\it lstat} (the percentage of the population in lower socioeconomic status).
These are used to explain {\it medv}, the median value of owner-occupied homes in thousand dollars.
            
Using all predictors plus an intercept, we applied the LTS estimator with increasing trimming and computed the robust BIC (see Supplementary Material). 
This helps identify a reasonable trimming level to use across different methods.
The left panel of Figure~\ref{fig:bic} shows that the curve flattens for low levels, with a noticeable drop only for very small amounts of trimming.
With a conservative 10\% trimming, we used SCAD2s to select the relevant features on the full dataset. These are the predictors number 1, 5, 6, 8, 9, 10, 11, 12, 13 (plus the intercept).
The central panel of Figure~\ref{fig:bic} shows the robust BIC recomputed on these features alone. 
There is some evidence of both MSOM outliers (the curve achieves a maximum around 5\% trimming) and VIOM outliers (the curve flattens starting from 15-10\%).
Using again 10\% trimming, the right panel of Figure~\ref{fig:bic} shows the residuals obtained by SCAD2s on the full dataset.
Cases detected as MSOM and VIOM outliers are highlighted.
         \begin{figure}[ht!]
    		\centering
    		\begin{subfigure}{.33\textwidth}
    			\centering
    			\includegraphics[width=1\linewidth]{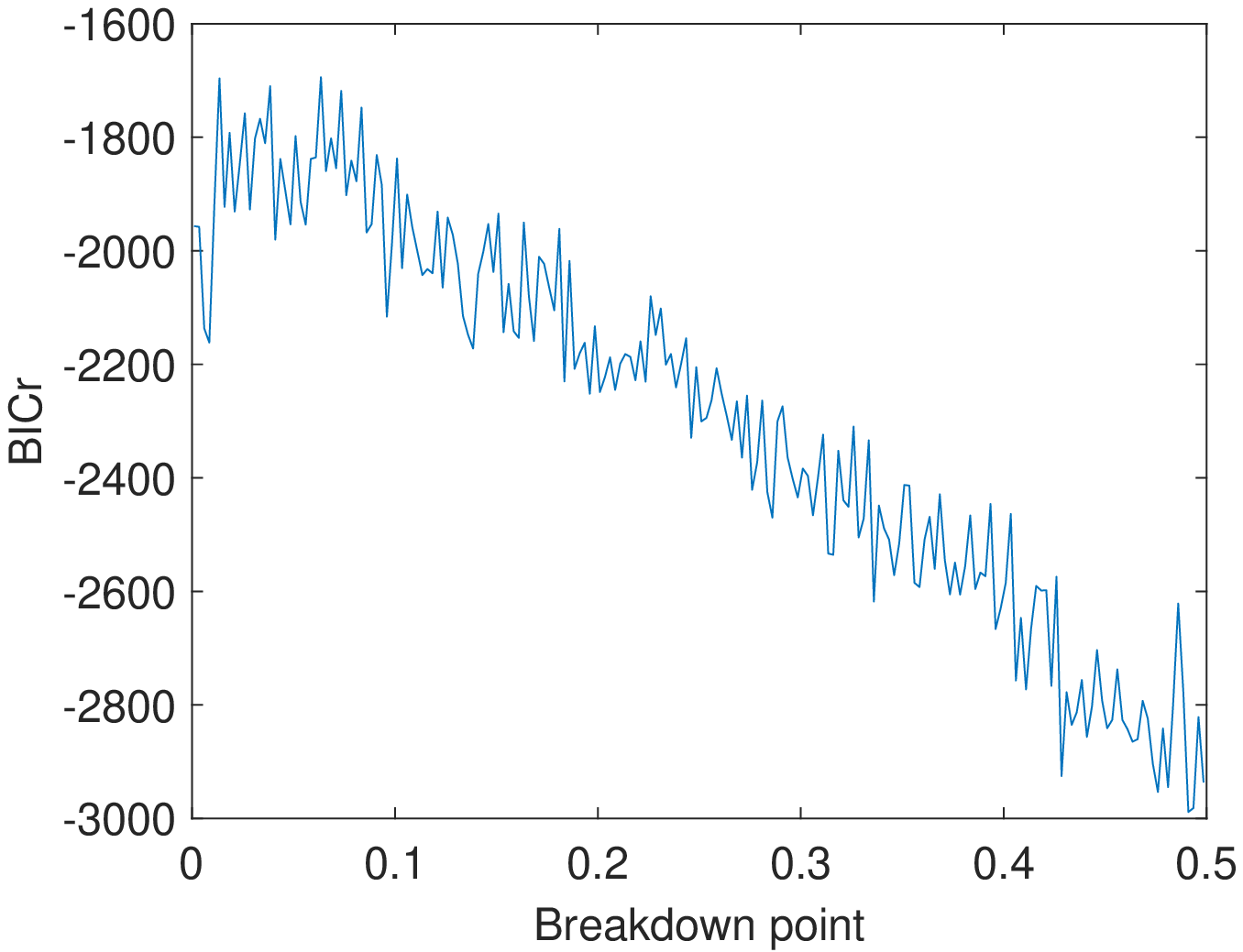}
    		\end{subfigure}%
    		\begin{subfigure}{.33\textwidth}
    			\centering
    			\includegraphics[width=1\linewidth]{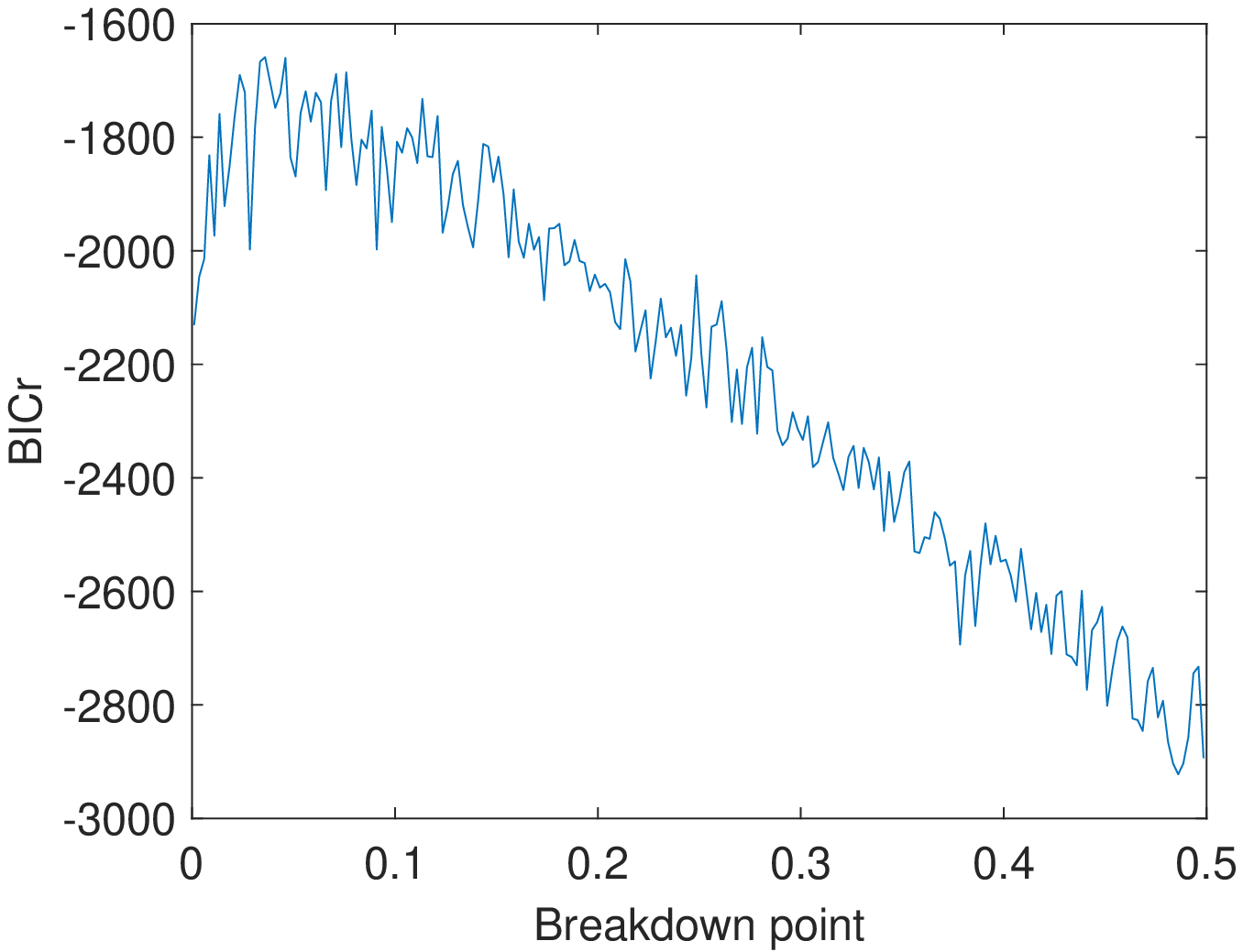}
    		\end{subfigure}%
    		\begin{subfigure}{.33\textwidth}
    			\centering
			\includegraphics[width=1\linewidth]{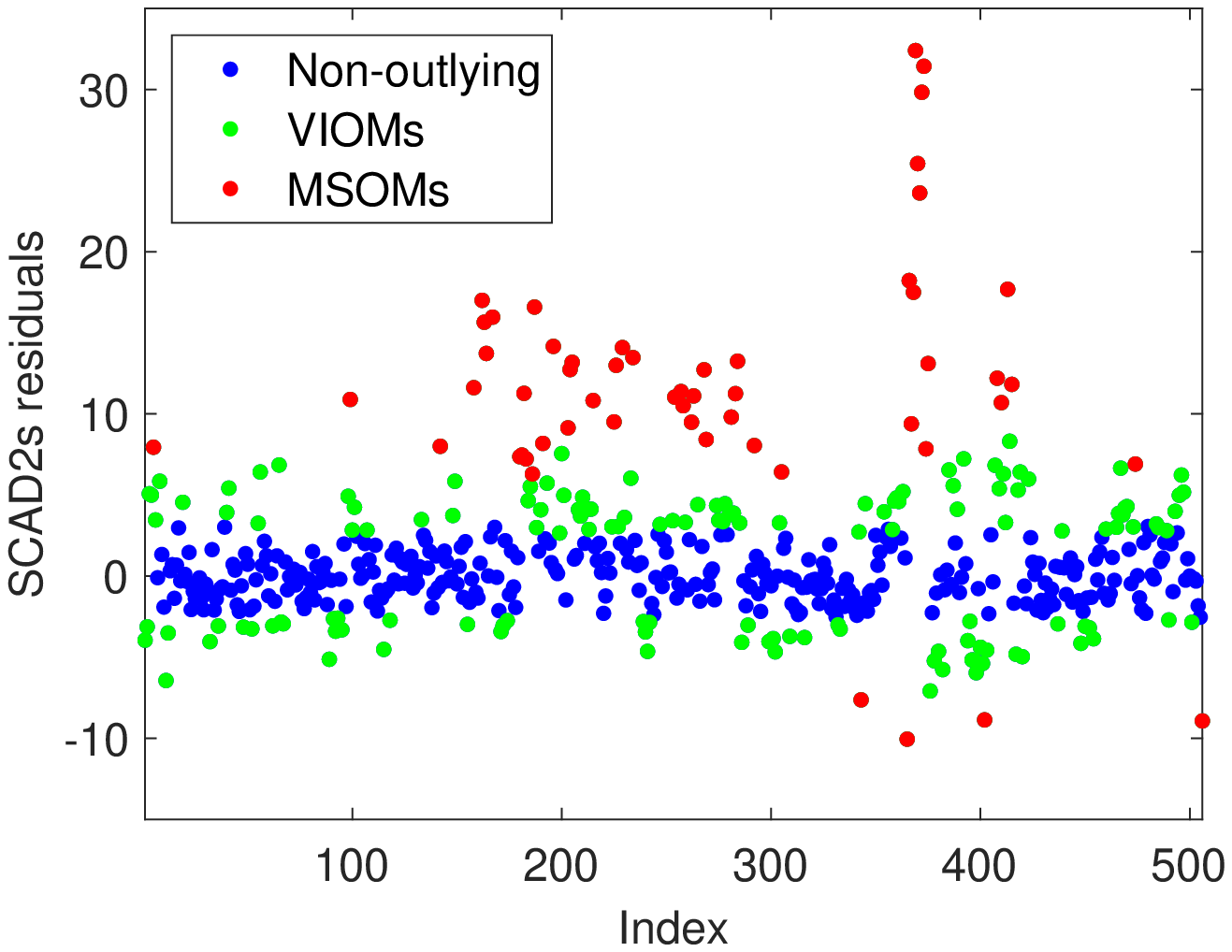}
    		\end{subfigure}
    		\caption{Left: robust BIC computed on all points and features. Center: robust BIC computed on all points and only the features selected using SCAD2s. Right: SCAD2s residuals labeled as non-outlying (blue), MSOM (red), and VIOM (green).}
    		\label{fig:bic}
    	\end{figure}

Next, we extended the analysis along lines similar to \citet{chang2018robust}. We considered 20 random splits of the data in training and testing sets (300 and 206 units, respectively). 
Based on the observations above we used again 10\% trimming across robust methods. 
The left panel of Figure~\ref{fig:appselect} shows box-plots of the sparsity levels, i.e.,~the number of features retained by different methods, across the 20 random training sets.
Some methods do not provide sparse estimates by definition, but also lasso and our heuristic proposal provide very dense solutions.
TaL and sparseLTS provide, respectively, sparser and denser solutions compared to SCAD2s and SCADws. SCAD2s appears to induce slightly more sparsity than SCADws.
The right panel of Figure~\ref{fig:appselect} shows the distribution of the selected features across the 20 random training sets.
The solution for SCAD2s is in line with prior analyses and, unlike TaL, supports the relevance of predictors number 8 and 9 ({\it dis} and {\it rad}).
         \begin{figure}[ht!]
    		\centering
    		\begin{subfigure}{.5\textwidth}
    			\centering
    			\includegraphics[width=1\linewidth]{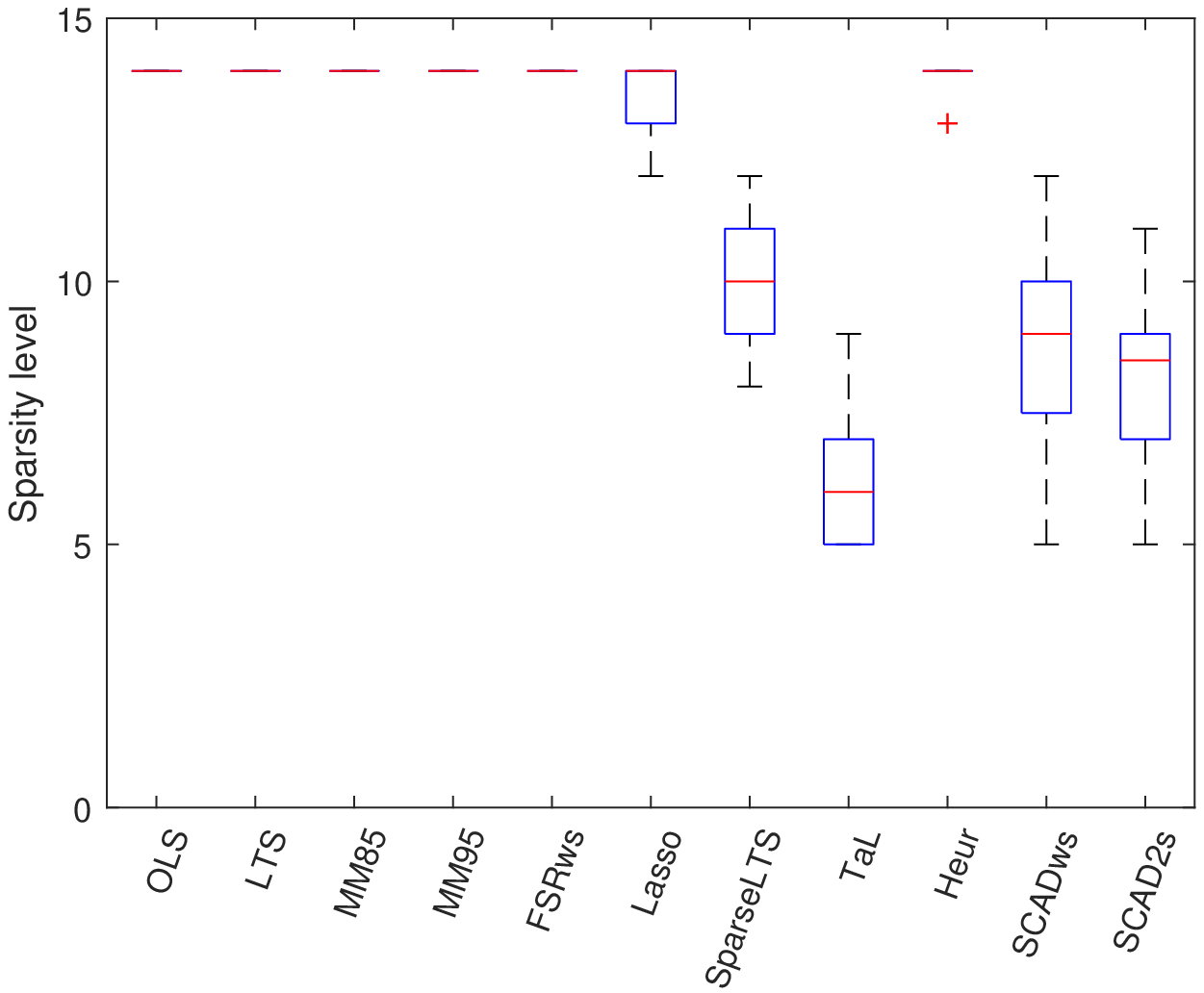} 
    		\end{subfigure}%
    		\begin{subfigure}{.5\textwidth}
    			\centering
			\includegraphics[width=1\linewidth]{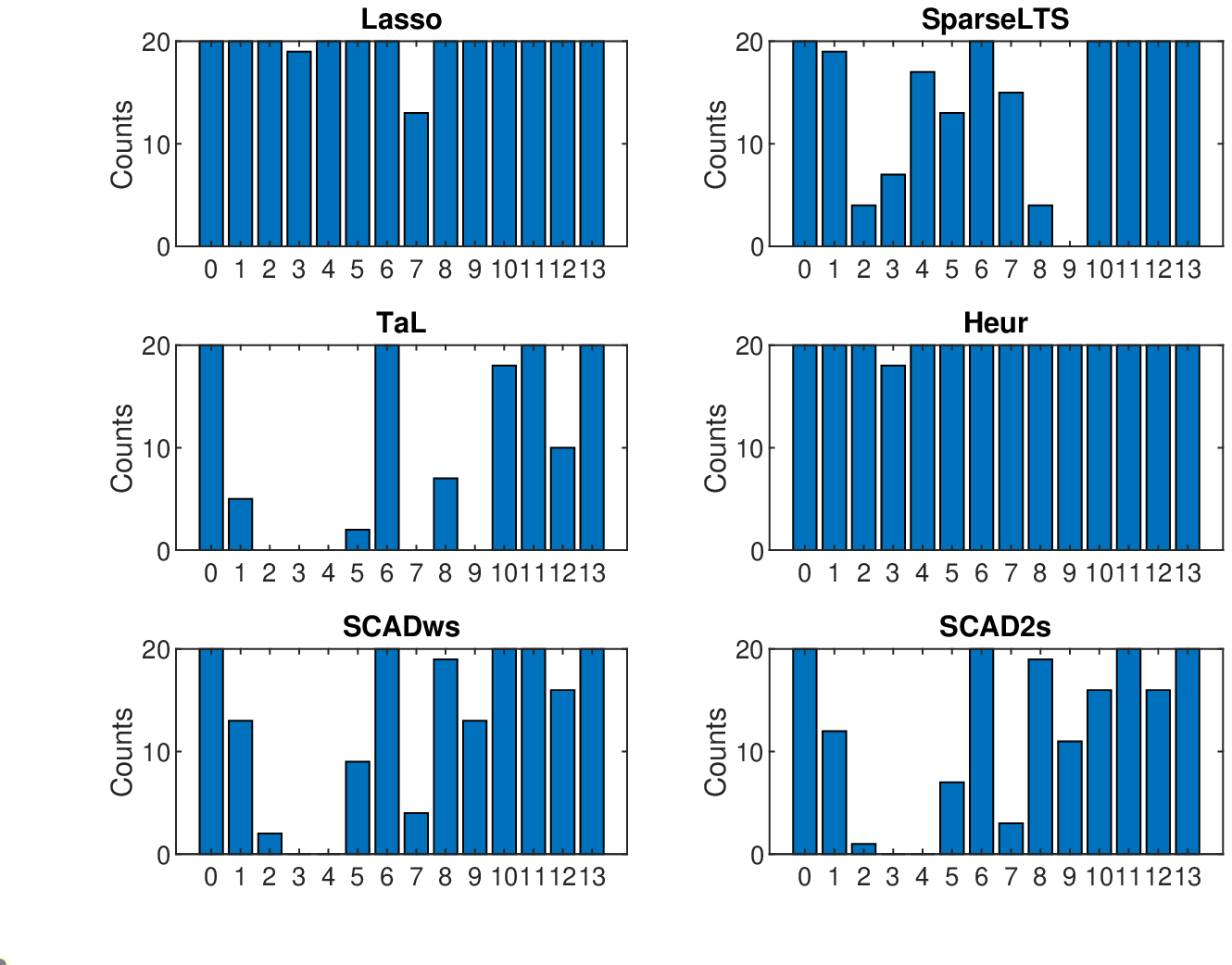}
    		\end{subfigure}
    		\caption{
            Box-plots of the estimated sparsity levels (left) 
            and distribution of the selected features for sparse methods (right) 
            across 20 random training sets for different methods.
    		}
    		\label{fig:appselect}
    	\end{figure}
    	
Figure~\ref{fig:apppred} compares the prediction accuracy of different methods across the 20 random training/testing splits based on the mean absolute (MAPE) and trimmed mean squared (TMSPE) prediction errors, with an upper 10\% trimming.
SCADws and SCAD2s provide a good trade-off between model parsimony and prediction accuracy. 
They outperform TaL (the only method generating sparser solutions) in terms of prediction, independently of the considered quantile.
Our heuristic procedure performs very well -- often better than non-sparse robust estimators -- in terms of prediction, but it has very dense solutions.
         \begin{figure}[ht!]
    		\centering
    		\begin{subfigure}{.5\textwidth}
    			\centering
    			\includegraphics[width=1\linewidth]{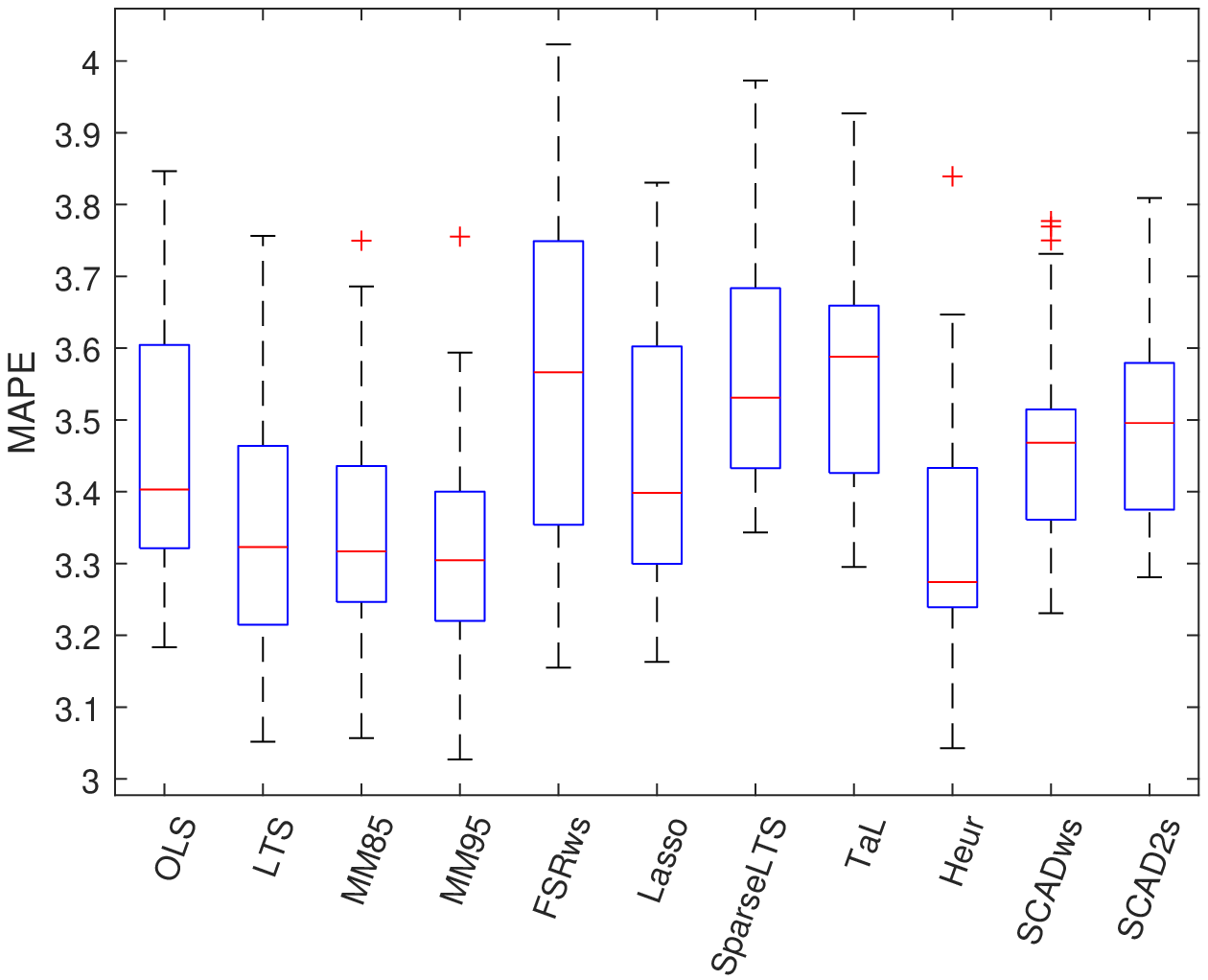}
    		\end{subfigure}%
    		\begin{subfigure}{.5\textwidth}
    			\centering
			\includegraphics[width=1\linewidth]{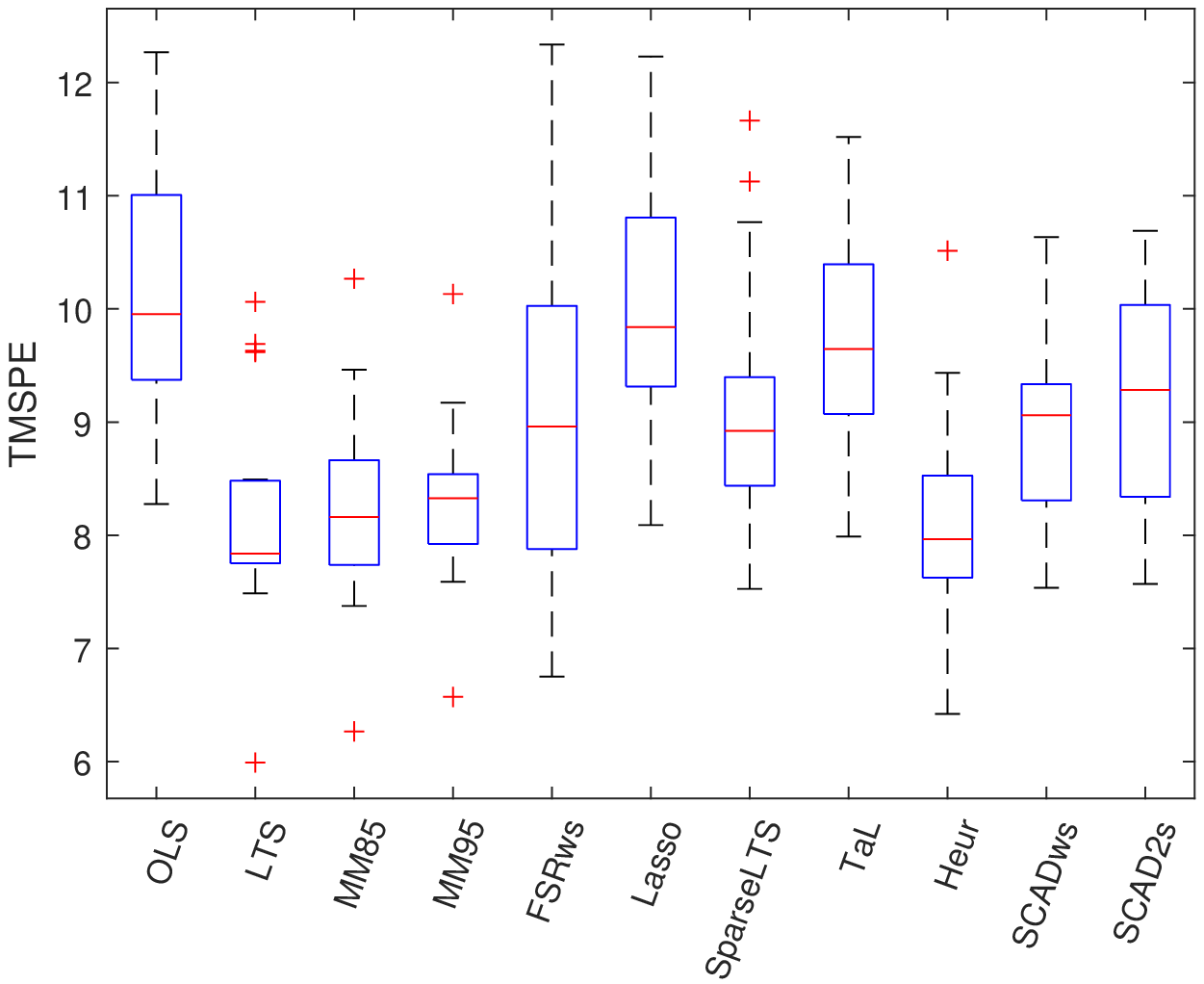}
    		\end{subfigure}
    		\caption{Box-plots of MAPE (left) and TMSPE (right) across 20 random training/testing splits for different methods.}
    		\label{fig:apppred}
    	\end{figure}

\section{Final Remarks}
\label{sec:final}
	
We combine different contamination schemes with sparse estimation methods for linear regression settings. 
This extends robust, sparse estimators based on hard trimming, which explicitly assume only MSOM outliers, to the co-occurrence of VIOM outliers.
Importantly, as we rely on nonconcave penalties, our approach bridges the gap between robust estimation methods enforcing sparsity based on convex penalties, and the use of optimal $L_0$-constraints.
Moreover, unlike methods which provide a general down-weighting for all points based on $M$-estimation, our proposal effectively estimates the weight for each data point. 
Indeed, asymptotically, non-outlying cases receive full weights, MSOMs are excluded from the fit, and only VIOMs are down-weighted.
	    
The theoretical results characterizing our proposal include its high breakdown point, a robust oracle property -- which allows the number of feature to increase exponentially with the sample size -- and the detection of each type of outliers with probability tending to one.
Moreover, including a computationally cheap extra step, our proposal achieves a doubly strong oracle property. This provides optimal units' weights and thus an optimal trade-off between high-breakdown point and efficiency.

Our work can be extended in several directions.
We plan to investigate scenarios with correlated errors, extending our approach for VIOM outlier detection to non-diagonal covariance matrices.
More generally, we are studying high-dimensional mixed-effects linear models affected by data contamination, which allow one to effectively model data with a natural group structure (e.g.,~spatio/temporal relations).
In this setting, VIOM outliers might also arise in the random effects. This has been investigated in \citet{gumedze2010variance} for a single outlier in a known position, but we plan to extend it to the case of multiple MSOM and VIOM outliers in unknown positions.
	    
Moreover, as our theoretical results critically rely on tuning parameters controlling the trade-off between sparsity and efficiency, we are interested in the development of suitable information criteria for sparse models affected by different sources of  contamination, extending the robust BIC introduced in this work.
We are also developing more effective ways to build proxy matrices used in our procedure, as well as iterative approaches.
Finally, we are exploring how to include into our framework cellwise contamination \citep{alqallaf2009propagation}, which is recently receiving a lot of attention for high-dimensional settings.

    \def\spacingset#1{\renewcommand{\baselinestretch}%
    {#1}\small\normalsize} \spacingset{1}

	\bigskip
    \begin{center}
    {\Large\bf SUPPLEMENTARY MATERIAL}
    \end{center}

\spacingset{1}
  
      \phantomsection
	\section*{Appendix A: Theoretical Results}
	\renewcommand{\theequation}{A.\arabic{equation}}
	\setcounter{equation}{0}

       \begin{proof}[Proof of Proposition~\ref{thm:bdp}.]
            For any trimming level $k_n$,
            the objective function in \eqref{eq:reg_fixVIOM_MSOM} subject to integer constraints in 
            \eqref{eq:contrL0} can be equivalently formulated as
            \begin{equation} \label{eq:reforPHI}
                Q( \widehat{\bm{\beta}} ) = 
                \frac{1}{2} 
                \sum_{i=1}^{n - k_n} [ (y_i^*- \bm{\beta}^T \bm{x}_i^* )^2 ]_{i:n}
                +
                (n - k_n) \sum_{j=1}^{p} R_{\lambda}(\lvert \beta_j \lvert) 
            \end{equation}
            where 
            $ ( t_1 )_{1:n} \leq \ldots \leq (  t_n )_{n:n} $ denote the order statistics of $t_i$, 
            $
            \bm{y}^* = \sqrt{\bm{\mathcal{M}}_R} \bm{y}
            $
            and
        	$ \bm{X}^* = \sqrt{\bm{\mathcal{M}}_R} \bm{X} =
        	( \bm{x}_1^*, \ldots, \bm{x}_n^*)^T  
            $. 
            This relies on the fact that a weighted regression of $\bm{y}$ on $\bm{X}$ is equivalent to an unweighted regression of  $\bm{y}^*$ on $\bm{X}^*$, and we also use Proposition~1 in \citet{insolia2020simultaneous} to transform the mean-shift model based on $\bm{\phi}$ to a trimmed loss problem without explicit mean shift parameters.
            Then, denote the contaminated dataset as
            $ \widetilde{\bm{Z}} = 
            [ \widetilde{\bm{y}} , \widetilde{\bm{X}} ] =
            [ ( \bm{y} + \bm{\Delta}_{y} ) , ( \bm{X} + \bm{\Delta}_{X} ) ]
            $.
            We first show that the BdP $ \varepsilon^* \geq ( n - k_n + 1) / n $, and then 
            $ \varepsilon^* \leq ( n - k_n + 1) / n $.
            
            For the first part of the proof assume that 
            $ \widetilde{\bm{Z}} $ contains $ m_M \leq k_n $ outliers.
            Consider 
            $ \widehat{ \bm{\beta} } = 0 $, 
            so that 
           the associated loss
            $$
                Q(\bm{0}) = 
                \sum_{i=1}^{ n -k_n } 
                ( \widetilde{ y }^2_i )_{i:n} 
                \leq \sum_{i=1}^{n-k_n} ( y^2_i )_{i:n} 
                \leq (n - k_n) M_y^2 ,
            $$ 
            where the first inequality relies on the fact that contaminated data might contain inliers (i.e.,~mean shifts can be used to reduce the overall residuals sum of square),
            and 
            $ M_y = \max_{i = 1, \ldots, n}{ \lvert y_i \lvert } $.
            Now consider any other estimate $\widehat{\bm{\beta}}$, and assume that 
            $ \norm{\widehat{\bm{\beta}}}_2 \geq l $ 
            --  
            i.e.,~the estimator might break down -- where
            $ l = \{ ( n- k_n) M_y^2 + 1 \}  / c $ is independent from the contamination mechanism 
            and  $c > 0 $.
            It follows that
            $$
               Q( \widehat{\bm{\beta}} ) \geq
               (n - k_n) \sum_{j=1}^{p} R_{\lambda}(\lvert \beta_j \lvert) \geq
               c ( n- k_n)  \norm{ \bm{\beta} }_2 \geq
               ( n- k_n) M_y^2 + 1  > Q( \bm{0} ) ,
            $$
            where the first inequality immediately follows from \eqref{eq:reforPHI}, and the second inequality is based on the topological equivalence of norms and the definition of SCAD, since 
            $ \norm{ \bm{\beta} }_1 
                \geq \sum_{j=1}^{p} R_{\lambda}(\lvert \beta_j \lvert) 
                \geq c \norm{ \bm{\beta} }_2
            $ for some constant $c>0$ and any $\bm{\beta}$ vector.
            However, 
            $ Q( \widehat{ \bm{\beta} } ) > Q( \bm{0} )  $ leads to a 
            contradiction as the objective function is non-decreasing in the number of non-zero $ \widehat{\beta}_j$ components. 
            Hence, 
            $ \norm{\widehat{\bm{\beta}}}_2 < l $ implies that
            $ \varepsilon^* \geq ( n - k_n + 1) / n $, which concludes the first part of the proof.
            
            For the second part of the proof,
            consider 
            $ m_M > k_n $,
            and assume that
            $ \norm{ \widehat{ \bm{\beta} }(\widetilde{\bm{Z}}) }_2 \leq u $
            (i.e.,~the estimator does not breakdown).
            The objective in \eqref{eq:reforPHI} can be decomposed as
            \begin{align}
                Q( \widehat{\bm{\beta}}) =&
                \sum_{i=1}^{n-m_M} \left[ (\widetilde{y}_i^*- \widehat{\bm{\beta}}^T \widetilde{\bm{x}}_i^* )^2 \right]_{i:n} 
                + \sum_{h=n-m_M+1}^{n-k_n} 
                \left[ 
                (\widetilde{y}_h^*- \widehat{\bm{\beta}}^T \widetilde{\bm{x}}_h^* )^2
                \right]_{h:n}  
                +
                (n - k_n) \sum_{j=1}^{p} R_{\lambda}(\lvert \widehat{\beta}_j \lvert) 
                \nonumber \\
                \geq& 
                \left[ 
                \{ ( y_i^*- \bm{\beta}^T \bm{x}_i^* )
                + 
                (\Delta_{y_i} - \widehat{\bm{\beta}}^T \bm{\Delta}_{x_i} )  
                \}^2
                \right]_{i=n-m_M+1}
                + 
                (n - k_n) \sum_{j=1}^{p} R_{\lambda}(\lvert \widehat{\beta}_j \lvert) 
                \label{eq:proof_bdp}
            \end{align}
            since at least one of the $m_M$ outliers might be included in the fit -- i.e.,~the $(n-n_0+1)$-th ordered squared residual if contamination is adversarial.
            Hence, since mean shifts $ \Delta_{y_i} $ and $ \bm{\Delta}_{x_i} $ can take arbitrary values, it is easy to see that \eqref{eq:proof_bdp} is unbounded similarly to OLS. 
            This contradicts $ \norm{ \widehat{ \bm{\beta} }(\widetilde{\bm{Z}}) }_2 \leq u $ and proves the result.
        \end{proof}
        
        \begin{proof}[Proof of Theorem~\ref{thm:fix}.]
            It extends Theorem~1 in \citet{fan2012variable} to the presence of MSOM contamination.
            Specifically, we can use the same argument, but their conditions must hold at least on $n-m_M$ (uncontaminated) points as opposed to  $n$. 
            Since $k_n$ largest residuals (say, $k_n = m_M$) are always discarded from our loss in \eqref{eq:reg_fixVIOM_MSOM}, we thus need to ensure that these trimmed points encompass MSOM outliers.
            Condition~\ref{cond2}(D) guarantees this, similarly to Theorem 3 in \citet{insolia2020simultaneous}, so that MSOM outliers have largest residuals for any model of size $k_p \leq p_0$.
            See \citet{fan2012variable} for details.
        \end{proof}

        \begin{proof}[Proof of Theorem~\ref{thm:rnd}]
            This result immediately follows from Theorem~2 in \citet{fan2012variable} specifically focusing on VIOM
            outliers as random effects (i.e.,~our term $\bm{I}_n \bm{\gamma}$ instead of $ \bm{Z} \bm{b}$).
            However, in \citet{fan2012variable} the dimension of the random effects $ \bm{b}$ can increase exponentially with the sample size $n$, but in our formulation $ \bm{\gamma} $ can only increase linearly with $n-k_n$.
            Thus, our conditions in list~\ref{cond2} might be relaxed to account only for VIOMs.
            Nevertheless, these more general conditions allow one to extend our results also to the presence of additional (pure) random effects, whose size can increase exponentially with $n-k_n$. 
        \end{proof}

            \begin{proof}[Proof of Theorem~\ref{thm:opt_bdp_eff}(1).]
                The proofs for Theorem~\ref{thm:opt_bdp_eff} follow some lines of the argument in Theorems~1 and 3 of \citet{liu2013asymptotic}, where an OLS or ridge fit is computed on top of the features selected by lasso.
                
                Here with a slight abuse of notation, we denote 
                 $ P(\mathcal{S}) = P( \widehat{\mathcal{S}} = \mathcal{S} )$ and
                 $ P( \widetilde{\mathcal{S}}) = P( \widehat{\mathcal{S}} \neq \mathcal{S} )$,
                 where $\widehat{\mathcal{S}} = \{ \widehat{\mathcal{S}}_\beta , \widehat{\mathcal{S}}_\phi, \widehat{\mathcal{S}}_\gamma  \} $.
                 Furthermore, we indicate 
                  as
                $         
                \widehat{\bm{\beta}} \lvert \mathcal{ \widehat{S} }
                $
                 the estimated coefficients conditionally on the selected model,
                which is abbreviated as
                $ 
                \widehat{\bm{\beta}}_{\mathcal{ \widehat{S}}}
                $.
                It is also assumed that, conditioned on any selected model 
                $\mathcal{ \widehat{S}}$, units weights $ \widehat{\bm{W}} $ 
                are deterministic.
                
                By the law of total expectations and using $ \| a +b \|^2 \leq 2( \| a \|^2 + \| b \|^2 )$, it follows that
                \begin{align}
                    \| E \widehat{\bm{\beta}} - \bm{\beta}_0  \|_2^2 
                    &= \| E \widehat{\bm{\beta}}_{\mathcal{S}} P(\mathcal{S} )
                    + E \widehat{\bm{\beta}}_{\widetilde{\mathcal{S}}} P( \widetilde{\mathcal{S}}) 
                    - \bm{\beta}_0
                    \|_2^2  \nonumber \\
                    &\leq 
                    2 \| E \widehat{\bm{\beta}}_{\mathcal{S}} P(\mathcal{S} )
                    - \bm{\beta}_0
                    \|_2^2 
                    + 
                    2 \| E \widehat{\bm{\beta}}_{\widetilde{\mathcal{S}}} P( \widetilde{\mathcal{S}})  \|_2^2 \nonumber \\
                    &=  
                    2\| E \{ (\bm{X}_{\mathcal{S}}^T \widehat{\bm{W}} \bm{X}_{\mathcal{S}} )^+ 
                    \bm{X}_{\mathcal{S}}^T \widehat{\bm{W}} \bm{y}  \} P(\mathcal{S} )
                    - \bm{\beta}_0
                    \|_2^2 
                    + 
                    2 P( \widetilde{\mathcal{S}})  \| E \widehat{\bm{\beta}}_{\widetilde{\mathcal{S}}}  \|_2^2  \nonumber \\
                    &=  
                    2\|  P(\mathcal{S} ) \bm{\beta}_0 
                    - \bm{\beta}_0
                    \|_2^2 
                    + 
                    2 P( \widetilde{\mathcal{S}})  \| E \widehat{\bm{\beta}}_{\widetilde{\mathcal{S}}}  \|_2^2  \nonumber \\
                    &=  
                    2\|   \bm{\beta}_0 ( P(\mathcal{S} ) 
                    - 1 )
                    \|_2^2 
                    + 
                    2 P( \widetilde{\mathcal{S}})  \| E \widehat{\bm{\beta}}_{\widetilde{\mathcal{S}}}  \|_2^2  \nonumber \\
                    &= 
                    2 P( \widetilde{\mathcal{S}})  \{ \|  \bm{\beta}_0
                    \|_2^2 
                    + 
                    \| E \widehat{\bm{\beta}}_{\widetilde{\mathcal{S}}}  \|_2^2 \}  \label{pr:1pt1_1} .
                \end{align}
                Further, using Jensen's inequality and the fact that $ \| A b\| \leq \| A \| \|b \| $ provides
                \begin{align}
                    \| E \widehat{\bm{\beta}}_{\widetilde{\mathcal{S}}}  \|_2^2 &\leq
                     E \|  (\bm{X}_{\widetilde{\mathcal{S}}}^T \widehat{\bm{W}} \bm{X}_{\widetilde{\mathcal{S}}} )^+ 
                    \bm{X}_{\widetilde{\mathcal{S}}}^T \widehat{\bm{W}} \bm{y} 
                    \|_2^2 
                    \nonumber \\
                    &\leq
                     E \|  (\bm{X}_{\widetilde{\mathcal{S}}}^T \widehat{\bm{W}} \bm{X}_{\widetilde{\mathcal{S}}} )^+ 
                    \bm{X}_{\widetilde{\mathcal{S}}}^T \widehat{\bm{W}}^{1/2} \|_2^2 
                    \| \widehat{\bm{W}}^{1/2} \bm{y} \|_2^2 
                    \nonumber \\
                    &=
                     \Lambda_{\max}  \{ (\bm{X}_{\widetilde{\mathcal{S}}}^T \widehat{\bm{W}} \bm{X}_{\widetilde{\mathcal{S}}} )^+ \}
                    E \| \widehat{\bm{W}}^{1/2} \bm{X} \bm{\beta}_0 + \widehat{\bm{W}}^{1/2} \bm{\varepsilon} \|_2^2
                    \nonumber \\
                    &= 
                      \Lambda_{\max}  \{ (\bm{X}_{\widetilde{\mathcal{S}}}^T \widehat{\bm{W}} \bm{X}_{\widetilde{\mathcal{S}}} )^+ \}
                    E (\| \widehat{\bm{W}}^{1/2} \bm{X} \bm{\beta}_0 \|_2^2 + \varepsilon^T \widehat{\bm{W}} \varepsilon )  \nonumber \\
                    &= 
                      \Lambda_{\max}  \{ (\bm{X}_{\widetilde{\mathcal{S}}}^T \widehat{\bm{W}} \bm{X}_{\widetilde{\mathcal{S}}} )^+ \}
                    (\| \widehat{\bm{W}}^{1/2} \bm{X} \bm{\beta}_0 \|_2^2 + \text{tr}(\widehat{\bm{W}} ) \sigma^2  ) 
                    \label{pr:1pt1_2}    \\
                    &\leq 
                      \Lambda_{\max}  \{ (\bm{X}_{\widetilde{\mathcal{S}}}^T \widehat{\bm{W}} \bm{X}_{\widetilde{\mathcal{S}}} )^+ \}
                    (\|  \bm{X} \bm{\beta}_0 \|_2^2 + n \sigma^2  ) 
                    ,
                    \nonumber
                \end{align}
                where $\Lambda_{\max}(\cdot)$ denotes the largest eigenvalue, and for a real matrix $A$, the spectral norm $ \norm{A}_2 = \sqrt{\Lambda_{\max}(A A^T )} = \sqrt{\Lambda_{\max}(A^T A )}$.
                In our case,
                \begin{align}
                \| 
                (\bm{X}_{\widetilde{\mathcal{S}}}^T \widehat{\bm{W}} \bm{X}_{\widetilde{\mathcal{S}}} )^+ 
                    \bm{X}_{\widetilde{\mathcal{S}}}^T \widehat{\bm{W}}^{1/2} \|_2^2 
                    &=
                    \Lambda_{\max} \{
                    (\bm{X}_{\widetilde{\mathcal{S}}}^T \widehat{\bm{W}} \bm{X}_{\widetilde{\mathcal{S}}} )^+ 
                    \bm{X}_{\widetilde{\mathcal{S}}}^T \widehat{\bm{W}}
                    \bm{X}_{\widetilde{\mathcal{S}}}
                    (\bm{X}_{\widetilde{\mathcal{S}}}^T \widehat{\bm{W}} \bm{X}_{\widetilde{\mathcal{S}}} )^+
                    \} \nonumber \\ 
                    &=  \Lambda_{\max}  \{ (\bm{X}_{\widetilde{\mathcal{S}}}^T \widehat{\bm{W}} \bm{X}_{\widetilde{\mathcal{S}}} )^+ \}, \nonumber
                \end{align}
                where the last equality follows from the property of a generalized inverse
                $A^+ A A^+ = A^+$.
                Combining \eqref{pr:1pt1_1} and \eqref{pr:1pt1_2}
                leads to the desired results.
            \end{proof}
                   
        \begin{proof}[Proof of Theorem~\ref{thm:opt_bdp_eff}(2).]
            
            Introducing the WLS oracle estimator $\widehat{\bm{\beta}}_0$
            and using the 
            fact that 
            $$
            E \| \widehat{\bm{\beta}}_0 - \bm{\beta}_0  \|_2 =
            E \| (\bm{X}_{\mathcal{S}}^T \widehat{\bm{W}} \bm{X}_{\mathcal{S}} )^+ 
                    \bm{X}_{\mathcal{S}}^T \widehat{\bm{W}} \bm{\varepsilon}   \|_2 = 0 
            $$
            provides
            \begin{align}
                E \|  \widehat{\bm{\beta}} - \bm{\beta}_0  \|_2^2 
                &=  E \|  \widehat{\bm{\beta}} + \widehat{\bm{\beta}}_0 - \widehat{\bm{\beta}}_0 - \bm{\beta}_0  \|_2^2 \nonumber \\  
                &= 
                E \|  \widehat{\bm{\beta}} - \widehat{\bm{\beta}}_0\|_2^2 + E \| \widehat{\bm{\beta}}_0 - \bm{\beta}_0  \|_2^2 
                \nonumber \\
                &=
                E \|  \widehat{\bm{\beta}} - \widehat{\bm{\beta}}_0\|_2^2 
                +
                \sigma^2 \text{tr}(\bm{\Sigma}_{X}^{-1} ) / \text{tr}(\widehat{\bm{W}})  \label{pr:1pt1_3}
            \end{align}
            the last equality follows from the MSE for the WLS oracle estimator and such term cannot be improved.
            Thus, we control the first term as follows
            \begin{align}
                E \|  \widehat{\bm{\beta}} - \widehat{\bm{\beta}}_0\|_2^2 &=
                E \|  \widehat{\bm{\beta}}_{\mathcal{S}} - \widehat{\bm{\beta}}_0\|_2^2 P(\mathcal{S} ) +
                E \|  \widehat{\bm{\beta}}_{\widetilde{\mathcal{S}}} - \widehat{\bm{\beta}}_0\|_2^2 P( \widetilde{\mathcal{S}}) \nonumber \\
                &= E \|  \widehat{\bm{\beta}}_{\widetilde{\mathcal{S}}} - \widehat{\bm{\beta}}_0\|_2^2 P( \widetilde{\mathcal{S}}) , \label{pr:1pt1_4}
            \end{align}
            where the first equality relies on the law of total expectations and the last one uses the fact that 
            $ \widehat{\bm{\beta}}_{\widehat{\mathcal{S}}} = \widehat{\bm{\beta}}_0 $ conditioned on $ \{ \widehat{\mathcal{S}} = \mathcal{S} \} $.
            
            Further, note that
            \begin{align}
                E \|  \widehat{\bm{\beta}}_{\widetilde{\mathcal{S}}} - \widehat{\bm{\beta}}_0\|_2^2 
                &\leq 
                2\{ E \| \widehat{\bm{\beta}}_{\widetilde{\mathcal{S}}} \|_2^2 + E \| \widehat{\bm{\beta}}_0 \|_2^2 \}
                \nonumber \\
                &=
                2 \{
                E \| (\bm{X}_{\widetilde{\mathcal{S}}}^T \widehat{\bm{W}} \bm{X}_{\widetilde{\mathcal{S}}} )^+ 
                    \bm{X}_{\widetilde{\mathcal{S}}}^T \widehat{\bm{W}} \bm{y} \|_2^2 + 
                E \| (\bm{X}_{\mathcal{S}}^T \widehat{\bm{W}} \bm{X}_{\mathcal{S}} )^+ 
                    \bm{X}_{\mathcal{S}}^T \widehat{\bm{W}} 
                    \bm{y} \|_2^2 
                \}
                \nonumber \\
                &\leq 
                2 E \|  \widehat{\bm{W}}^{1/2} \bm{y}  \|_2^2
                \left[
                 \Lambda_{\max} \{ (\bm{X}_{\widetilde{\mathcal{S}}}^T \widehat{\bm{W}} \bm{X}_{\widetilde{\mathcal{S}}} )^+ \} +
                 \Lambda_{\max} \{ (\bm{X}_{\mathcal{S}}^T \widehat{\bm{W}} \bm{X}_{\mathcal{S}} )^+ \}  \right], 
                \label{pr:1pt1_5}
            \end{align}
            where the first upper bound follows from $ \| a +b \|^2 \leq 2( \| a \|^2 + \| b \|^2 )$, and the second one uses $ \| A b\| \leq \| A \| \|b \| $.
            Finally, combining
            $$
                E \|  \widehat{\bm{W}}^{1/2} \bm{y}  \|_2^2
                \leq 
                E ( \| \widehat{\bm{W}}^{1/2} \bm{X} \bm{\beta}_0 \|_2^2 + \varepsilon^T \widehat{\bm{W}} \varepsilon )
                = 
                \| \widehat{\bm{W}}^{1/2} \bm{X} \bm{\beta}_0 \|_2^2 + \text{tr}(\widehat{\bm{W}} ) \sigma^2  
                \leq
                \|  \bm{X} \bm{\beta}_0 \|_2^2 + n \sigma^2 
            $$
            with \eqref{pr:1pt1_3}, \eqref{pr:1pt1_4}, and \eqref{pr:1pt1_5}
            concludes the proof.
        \end{proof}
        
        \begin{proof}[Proof of Theorem~\ref{thm:opt_bdp_eff}(3).]
            Under the conditions in lists~\ref{cond1}-\ref{cond3},
            as $ (n - m_M) \to \infty $, it follows that
             $ P (\widehat{\mathcal{S}} = \mathcal{S} ) \to 1 $ for some suitable constants.
            Thus, $ \widehat{\bm{\beta}} $ has an asymptotic normal distribution as it is a linear combination of normal distributions
            \begin{align}
             \widehat{\bm{\beta}} &= (\bm{X}_{\mathcal{S}}^T \widehat{\bm{W}} \bm{X}_{\mathcal{S}} )^{-1} 
                    \bm{X}_{\mathcal{S}}^T \widehat{\bm{W}} \bm{y}  \nonumber \\
                    & = 
                    (\bm{X}_{\mathcal{S}}^T \widehat{\bm{W}} \bm{X}_{\mathcal{S}} )^{-1} 
                    \bm{X}_{\mathcal{S}}^T \widehat{\bm{W}}
                    (\bm{X}_{\mathcal{S}} \bm{\beta}_0 +  \bm{\varepsilon}) \nonumber \\
                    &= \bm{\beta}_0 + 
                    (\bm{X}_{\mathcal{S}}^T \widehat{\bm{W}} \bm{X}_{\mathcal{S}} )^{-1} 
                    \bm{X}_{\mathcal{S}}^T \widehat{\bm{W}} \bm{\varepsilon}
                    \sim N(\bm{\beta}_0, 
                    \sigma^2
                    (\bm{X}_{\mathcal{S}}^T \widehat{\bm{W}} \bm{X}_{\mathcal{S}} )^{-1}  ) ,
                    \nonumber
            \end{align}
            and 
            $ \widehat{\bm{W}} = \bm{V}^{-1} $ 
            guarantees that it asymptotically reaches maximum efficiency.
        \end{proof}

    \phantomsection
	\section*{Appendix B: Technical Details}
	\label{sec:appendix}
	\renewcommand{\theequation}{B.\arabic{equation}}
	\setcounter{equation}{0}

       \subsection*{B.1 Choice of the Proxy Matrix $\mathcal{M}$}
       \label{sec:app_choiceM}
       
          For mixed-effects linear models without data contamination as in Section~\ref{secsub:mixed_model}, \citet{fan2012variable} propose to replace $ \sigma^{-2} \bm{ \mathcal{B}} $ in \eqref{eq:Pgamma} with a proxy matrix $ \bm{\mathcal{M}}_b$.
          They show that under mild conditions it is safe to choose $\bm{\mathcal{M}}_b = \log(n) \bm{I}_n$, as the eigenvalues of $ \bm{Z}^T \bm{P}_x \bm{Z}$ and $\bm{Z} \bm{Z}^T$ have magnitude increasing with $n$, so that they are likely to dominate the eigenvalues of $\bm{\mathcal{M}}_b$ for a large enough $n$.
          While this choice excludes cross-correlations in the random effects, it avoids the estimation of a large number of parameters as in the case of an unstructured covariance matrix. 
        
        In our formulation the terms $\bm{\mathcal{M}}_R$ and $\bm{\mathcal{M}}_\gamma$ in \eqref{eq:reg_fixVIOM_MSOM} and \eqref{eq:reg_randVIOM_MSOM} are proxies for the unknown $\bm{P}_R$ and $\bm{\Gamma}$, respectively.
        Following \citet{fan2012variable}, in our implementation we use 
        $\bm{\mathcal{M}}_R = \bm{\mathcal{M}}_\gamma  = \log(n) \bm{I}_n$ on the first iteration. 
        If the 3-step procedure is re-iterated, such as in SCAD2s, we use estimated weights $ \widehat{\bm{ W}}$ from the previous iteration for their update.

       \subsection*{B.2 Weights Estimation}
      \label{sec:app_weights}
      
           The formulation in \eqref{eq:reg_randVIOM_MSOM} highlights that if $\widehat{\gamma}_i = 0 $ also the corresponding variance inflation $\widehat{\omega}_i = 0$.
            However, it might be of interest to estimate $\omega_i$ when the corresponding $\widehat{\gamma}_i \neq 0 $. 
            A similar reasoning holds for step~3 of the heuristic method described in Section~\ref{secsub:proposal_heuristic}.
            Note that
            \begin{equation}
                w_i=v_i^{-1}=(1+\omega_i)^{-1} =  
                ( 1 + \text{var}(\gamma_i) / \sigma^2 )^{-1},  \nonumber
            \end{equation}
            which can be estimated as follows:
            \begin{enumerate}
            
                \item Apply REMLE assuming that the units corresponding by non-zero components in $\widehat{ \bm{\gamma} }$ arise from a VIOM.
                In principle, all weights should be jointly estimated, although this can be computationally heavy for large problems.
                A similar approach was used by \citet[p.2060]{fan2012variable} in one of their examples.
                Similarly to \citet{insolia2020ViomMsom}, we also consider single-weights estimates as in FSRws, where each VIOM outlier is separately included in the model and estimated.
                This is the approach used in our simulations and application.
                
                \item The quantity $ \gamma_j^2 /n $ can be used as an estimate of $ \text{var}(\gamma_j)$ \citep[p.~2053 Eq.~20]{fan2012variable}.
                Thus, one can consider
                $
                    w_i = (1 + \widehat{\gamma}_i^2 c_1 /  \widehat{\sigma}^2)^{-1}
                $
                where $c_1$ is a normalizing constant and the value $c_1=1/n$ was suggested by the authors.
                
                \item One can treat the selected random effects $\gamma_i$ as additional fixed effects and apply a ridge penalty \citep{hoerl1970aridge}.
                This can be considered optimal and is motivated by the fact that assuming a normal prior $N(\bm{0}, \sigma^2 \bm{\Gamma})$ on $\bm{\gamma}$  leads to the ridge estimator as the maximum posterior probability estimator. 
                Indeed, the estimates $\widehat{\bm{\gamma}}$ represent prediction residuals, so that their shrinkage performs a down-weighting scheme.
                Moreover, \citet{grandvalet1998least} showed that adaptive ridge is equivalent to lasso estimation; this can be useful to simultaneously select and estimate optimal units' weights (e.g.,~combining Steps 2 and 3 of our main proposal and/or  heuristic procedure).
            \end{enumerate}

        \phantomsection
    	\subsection*{B.3 Parameter Tuning } 
        \renewcommand{\thetable}{B.\arabic{table}}
    	\setcounter{table}{0}
            
            For feature selection and MSOM detection  is essential to tune the sparsity level and the amount of trimming.
            We propose to combine the approach in \citet{insolia2020simultaneous}
            and \citet{Riani2021bicw}.
            Specifically, in low-dimensional models affected by MSOM contamination, 
            \citet{Riani2021bicw}
            introduced the following robust version of BIC to tune the trimming level for hard-trimming estimators:
            \begin{equation}
                \operatorname{BICW}=
                -n 
                    \log \left\{ R\left(\widehat{\beta}_{h}\right) \bigg/ \left( \sigma^{2}_h h \right) \right\}
                -\left\{
                    p + k_n
                \right\} 
                \log n , \nonumber
	        \end{equation}
	        where 
	        $h=n-k_n$ and
	        $ R (\widehat{\beta}_{h} ) $ is the residual sum of square based on the $h$ observations contributing to the loss.
	        The associated variance of the truncated normal distribution containing a central portion $h/n$ of the full distribution is
	        \begin{equation}
	            \sigma^{2}(h)=
	            1-\frac{2 n}{h} \Phi^{-1}\left(\frac{n+h}{2 n}\right) \phi\left\{\Phi^{-1}\left(\frac{n+h}{2 n}\right)
	            \right\} , \nonumber
	        \end{equation}
	        where $\phi(\cdot)$ and $\Phi(\cdot)$ denote the probability cumulative density function for the Gaussian distribution, respectively.
	        
	        In our work, we consider the following extension of BICW for high-dimensional settings:
	        \begin{equation}
	         \operatorname{BICr}=
                  -n 
                \left[ 
                    \log \left\{ R\left(\widehat{\beta}_{h}\right) \bigg/ \left( \sigma^{2}_h  h \right) \right\}
                \right]
                -\left\{
                    k_p + k_n
                \right\} 
                \log n , \nonumber
	        \end{equation}
	        where $k_p = \lvert \widehat{\mathcal{S}}_{\beta} \rvert  $ denotes the sparsity level for feature selection.
	        This formulation improves and extends the robust BIC proposed in \citet{insolia2020simultaneous}. 
	        In principle one should consider a range of trimming values $k_n$ and shrinkage parameter $\lambda$ (the latter determines $k_p$). However, to reduce the computational burden, we often fix one of the two parameter and tune only the other. 
            Moreover, to take into account the co-occurrence of VIOM outliers this might be generalized further, similarly to the CAIC and extended CAIC discussed in Section~\ref{secsub:mixed_model}.
	        
        \phantomsection
    	\subsection*{B.4 Parallel Between our Heuristic Approach and  $M$-estimation} 
    	\renewcommand{\thetable}{B.\arabic{table}}
    	\setcounter{table}{0}
    	\renewcommand{\thefigure}{B.\arabic{figure}}
    	\setcounter{figure}{0}

                    The proposed heuristic method has a parallel with the following multi-stage, penalized $M$-estimation procedure.
                    
                    Step 1 is equivalent to an adaptive hard-trimming, sparse estimator (i.e.,~it selects features and assigns binary weights) and guarantees an high-breakdown point.
                    This step aims to exclude MSOMs and select only the relevant features 
                    (see for instance \citealt{alfons2013sparse, kurnaz2017robust, insolia2020simultaneous}).
                    Step 2 corresponds to an adaptive ``truncated'' $M$-estimator, where only the most extreme cases are down-weighted.
                    In full generality, this estimator takes the form
                  $
                   \widehat{\bm{\beta}}
                   = \arg \min_\beta  
                    \sum_{i=1}^{n}  \rho( \bm{e} / \sigma )  .
                $
                    Here the idea is that the $n-m_M-m_V$
                    uncontaminated points receive full weights as in OLS, but only VIOMs are down-weighted according to the $\rho(\cdot)$ function in use, and MSOMs (if present) are excluded from the fit.
                 \begin{figure}[ht!]
            		\centering
            		\begin{subfigure}{.33\textwidth}
            			\centering
            			\includegraphics[width=1\linewidth]{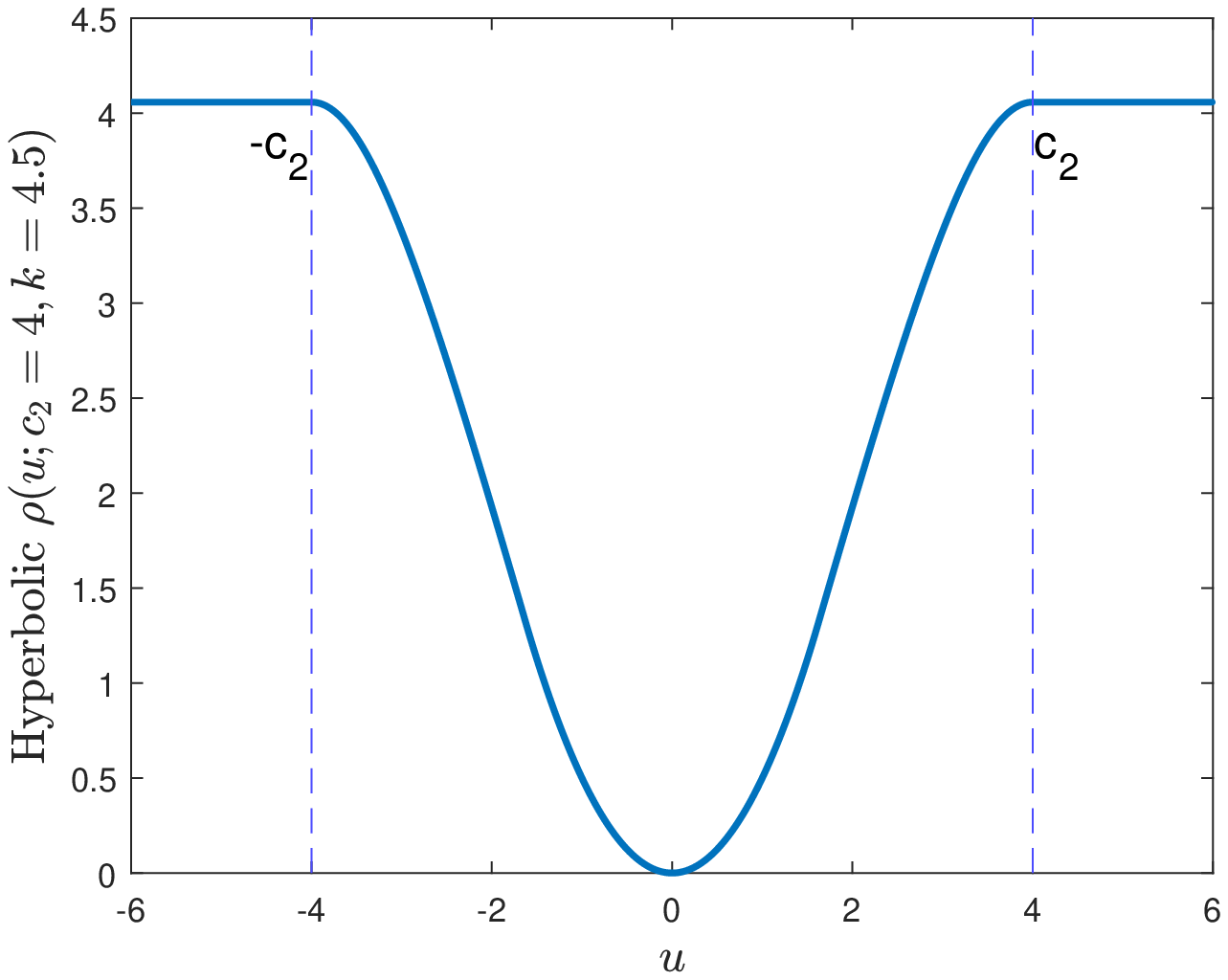}
            		\end{subfigure}%
            		\begin{subfigure}{.33\textwidth}
            			\centering
            			\includegraphics[width=1\linewidth]{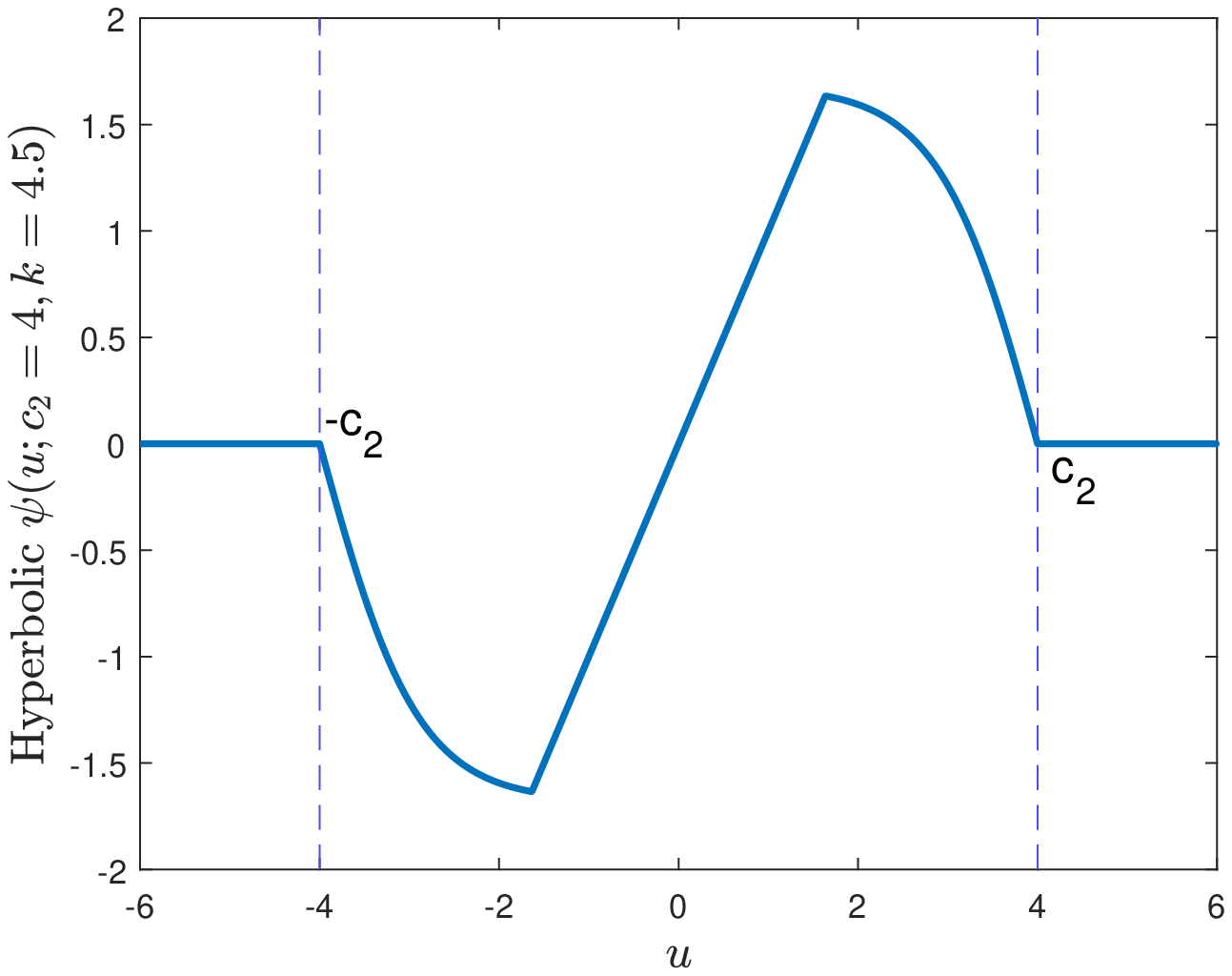}
            		\end{subfigure}%
            		\begin{subfigure}{.33\textwidth}
            			\centering
        			\includegraphics[width=1\linewidth]{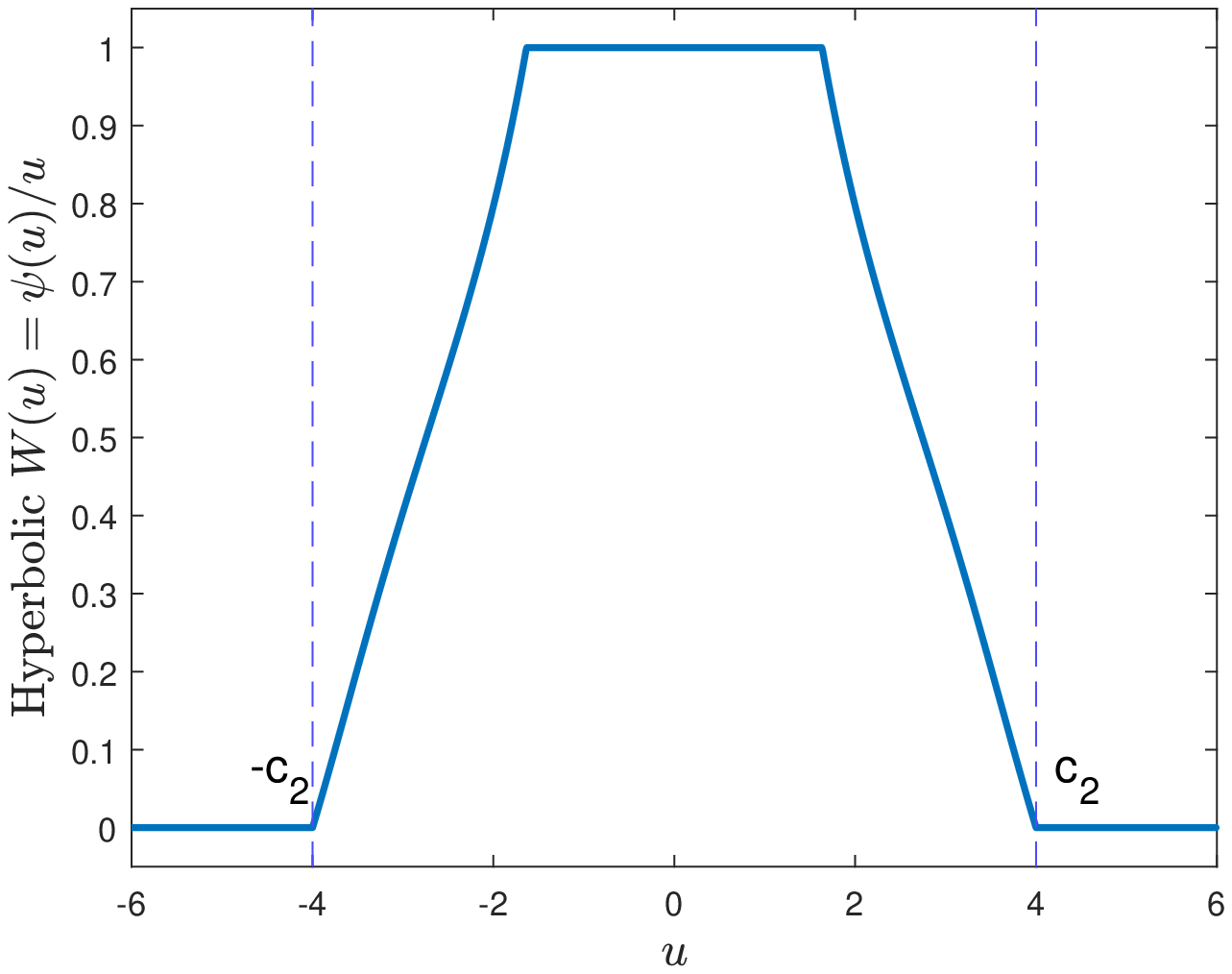}
            		\end{subfigure}
            		\caption{Hyperbolic Tangent $\rho$ function (left panel), $\psi$ function (central panel), and weight function (right panel) for $ c_2 = 4$ and $k=4.5$.}
            		\label{fig:tanh}
            	\end{figure}
                	For instance, this has a parallel with the \textit{hyperbolic tangent} $\rho(\cdot)$ function, which can be considered as refinement of Hampel's piecewise linear redescending function
                	and is related to the \textit{change of variance curve} \citep{hampel1981change}.
                    Tanh-estimators are more easily defined in terms of their derivatives,
                    and the corresponding $\psi(\cdot)$  function 
                    is
                    $$
                    \psi(u)=\left\{\begin{array}{ll}
                    u & \text{ if }|u| \leq c_{1} \\
                    \{ A(k-1) \}^{1 / 2} \tanh \left[ \frac{1}{2}\left \{ (k-1) B^{2} / A\right \}^{1 / 2} 
                    (c_2 - | u |) \right] \operatorname{sign}( u )
                     & \text { if } c_1 \leq|u| \leq c_2 \\
                    0 & \text { if }|u|>c_2
                    \end{array}\right.
                    $$
                    for suitable constants $k$, $A$, $B$, $c_1$, and $c_2$, where $0<c_1<c_2$ satisfies
                    $$
                    c_1= \{ A(k-1)\}^{1 / 2} \tanh \left[\frac{1}{2}\left\{(k-1) B^{2} / A\right\}^{1 / 2}(c_2-c_1)\right] .
                    $$
                    These constants are traditionally  computed iteratively, based on the Newton-Raphson algorithm and numerical integration.
                    Figure~\ref{fig:tanh} shows the corresponding $\rho$, $\psi$, and weight functions for $ c_2 = 4$ and $k=4.5$.
                    
                    Unlike tanh-estimators, our heuristic proposal does not pre-specify a trade-off between breakdown point and efficiency, but this is adaptively tuned as follows.
                    The rejection point $c_2$ approximately corresponds to the smallest standardized residual for the MSOMs detected at step 1.
                    Similarly, the constant $c_1$ is set to the value of the largest standardized residual for points which are not affected by MSOM or VIOM.
                    Specifically, for our heuristic proposal, $c_1$ and $c_2$ can be computed based on order statistics from the scaled residuals obtained at step 1.
                    Ideally, assuming without loss of generality that all outliers have sizeable residuals, these corresponds to the 
                    $(n-m_V-m_M)$-th and $(n-m_M)$-th order statistics of the absolute standardized residuals, respectively.

    \phantomsection
	\section*{Appendix C: Code}

	Our code is available upon request.

    \bibliography{biblio.bib}

\end{document}